\newtheorem*{definition}{Definition}
\newtheorem{theorem}{Theorem}
\newtheorem{proposition}{Proposition}
\newtheorem{lemma}{Lemma}
\newtheorem{claim}{Claim}
\newcommand{\C}{\mathbb{C}}
\newcommand{\R}{\mathbb{R}}
\DeclarePairedDelimiter\norm{\lVert}{\rVert}%
\def\ket#1{\mathinner{\left\lvert{#1}\right\rangle}}
\newcommand{\interior}{\operatorname{int}}
\newcommand{\lorentz}{\mathcal{L}}
\renewcommand{\vec}[1]{\bm{#1}}
\newcommand{\vecrest}[1]{\widetilde{\vec{#1}}}
\newcommand{\measured}[1]{\overline{#1}}
\newcommand{\scaled}[1]{#1'}
\newcommand{\measuredscaled}[1]{\scaled{\measured{#1}}}
\newcommand{\dx}{\Delta\vec{x}}
\newcommand{\dy}{\Delta\vec{y}}
\newcommand{\ds}{\Delta\vec{s}}
\newcommand{\dxs}{\scaled{\dx}}
\newcommand{\dss}{\scaled{\ds}}
\newcommand{\dxm}{\measured{\dx}}
\newcommand{\dym}{\measured{\dy}}
\newcommand{\dsm}{\measured{\ds}}
\newcommand{\dxms}{\measuredscaled{\dx}}
\newcommand{\dsms}{\measuredscaled{\ds}}
\newcommand{\xm}{\measured{\vec{x}}}
\newcommand{\sm}{\measured{\vec{s}}}
\renewcommand{\ss}{\scaled{\vec{s}}}
\newcommand{\xs}{\scaled{\vec{x}}}
\newcommand{\xnext}{\vec{x}_\text{next}}
\newcommand{\ynext}{\vec{y}_\text{next}}
\newcommand{\snext}{\vec{s}_\text{next}}
\newcommand{\xsnext}{\scaled{\xnext}}
\newcommand{\ssnext}{\scaled{\snext}}
\newcommand{\Arw}{\operatorname{Arw}}
\newcommand{\poly}{\operatorname{poly}}
\newcommand{\polylog}{\operatorname{polylog}}
\newcommand{\exponent}{2.591}
\newcommand{\exponentlb}{2.564}
\newcommand{\exponentub}{2.619}
\newcommand{\change}[1]{#1}
\title{Quantum algorithms for Second-Order Cone Programming and Support Vector Machines}
\author{Iordanis Kerenidis}
\affiliation{QCWare, Palo Alto, California}
\affiliation{Université de Paris, CNRS, IRIF, F-75006, Paris, France}
\email{jkeren@irif.fr}
\author{Anupam Prakash}
\affiliation{QCWare, Palo Alto, California}
\affiliation{Université de Paris, CNRS, IRIF, F-75006, Paris, France}
\email{anupamprakash1@gmail.com}
\author{D\'aniel Szil\'agyi}
\affiliation{Université de Paris, CNRS, IRIF, F-75006, Paris, France}
\email{dszilagyi@irif.fr}
\begin{document}
	\maketitle
	
	\begin{abstract}
	We present a quantum interior-point method (IPM) for second-order cone programming (SOCP) that runs in time $\widetilde{O} \left( n\sqrt{r} \frac{\zeta \kappa}{\delta^2} \log \left(1/\epsilon\right) \right)$ where $r$ is the rank and $n$ the dimension of the SOCP, $\delta$ bounds the distance of intermediate solutions from the cone boundary, $\zeta$ is a parameter upper bounded by $\sqrt{n}$, and $\kappa$ is an upper bound on the condition number of matrices arising in the classical IPM for SOCP. The algorithm takes as its input a suitable quantum description of an arbitrary SOCP and outputs a classical description of a $\delta$-approximate $\epsilon$-optimal solution of the given problem.
	
	Furthermore, we perform numerical simulations to determine the values of the aforementioned parameters when solving the SOCP up to a fixed precision $\epsilon$. We present experimental evidence that in this case our quantum algorithm exhibits a polynomial speedup over the best classical algorithms for solving general SOCPs that run in time $O(n^{\omega+0.5})$ (here, $\omega$ is the matrix multiplication exponent, with a value of roughly $2.37$ in theory, and up to $3$ in practice). For the case of random SVM 
(support vector machine) instances of size $O(n)$, the quantum algorithm scales as $O(n^k)$, where the exponent $k$ is estimated to be $2.59$ using a least-squares power law. 
On the same family random instances, the estimated scaling exponent for an external SOCP solver is $3.31$ while that for a state-of-the-art SVM solver is $3.11$. 
\end{abstract}
	\section{Introduction}
It is well known that many interesting and relevant optimization problems in the domain of Machine Learning can be expressed in the framework of convex optimization \cite{boyd2004convex,bubeck2015convex}. The landmark result in this area was the discovery of interior-point methods (IPM) by \cite{karmarkar1984new}, and their subsequent generalization to all ``self-scaled'' (i.e. symmetric) cones by \cite{nesterov1997self, nesterov1998primal}. Very recently, \cite{cohen2019solving} have shown that it is possible to solve linear programs (LP) in $\widetilde{O}(n^\omega)$, the time it takes to multiply two matrices (as long as $\omega \geq 2+1/6$, which is currently the case). This result has been further extended in \cite{lee2019solving} to a slightly more general class of cones, however, their techniques did not yield improved complexities for second-order (SOCP) and semidefinite programming (SDP). \change{An efficient algorithm for SOCP would also yield efficient algorithms for many interesting problems, such as (standard and quadratically-constrained) convex quadratic programming, portfolio optimization, and many others \cite{alizadeh2003second}.}

Starting with the landmark results of \cite{grover1996fast, shor1999polynomial}, and, more recently, \cite{harrow2009quantum}, it has been demonstrated that quantum computers offer significant (sometimes even exponential) asymptotic speedups for a number of important problems. More recently, there has been substantial work in the area of convex optimization. Quantum speedups for gradient descent were investigated by \cite{gilyen2019optimizing}, whereas \cite{brandao2017quantum, brandao2019quantum, van2017quantum, van2019improvements} presented quantum algorithms for SDP based on the the multiplicative weights framework of \cite{arora2012multiplicative}. However, it has been difficult to establish asymptotic speedups for this family of quantum SDP solvers as their the running time depends on problem-specific parameters, including a 5th-power dependence on the width of the SDP. \change{Interestingly, the recent result of \cite{brandao2019faster} suggests that such a speedup might be obtained when applying an SDP algorithm of this type to some low-precision instances of quadratic binary optimization.}

In an orthogonal approach, \cite{kerenidis2018quantum} proposed a quantum algorithm for LPs and SDPs by quantizing a variant of the classical interior point method and using the state of the art quantum linear algebra tools \cite{chakraborty2019power, gilyen2019quantum} -- in particular, the matrix multiplication and inversion algorithms whose running time is sub-linear in the input size. However, the complexity of this algorithm depends on the condition number of $O(n^2)$-sized matrices that is difficult to bound theoretically. It therefore remains an open question to find an end-to-end optimization problems for which quantum SDP solvers achieve an asymptotic speedup over state of the art classical algorithms. 
In this paper, we propose support vector machines (SVM) as a candidate for such an end-to-end quantum speedup using a quantum interior point method based algorithm.

\subsection{Our results and techniques}
In this section, we provide a high level sketch of our results and the techniques used for the quantum interior point method for SOCPs, we begin by discussing the differences between classical and quantum interior point methods. 

A classical interior point method solves an optimization problem over symmetric cones by starting with a feasible solution and iteratively finding solutions with a smaller duality gap while maintaining feasibility. A single iterative step consists of solving a system of linear equations called the Newton linear system and updating the current iterate using the solutions of the Newton system. The analysis of the classical IPM shows that in each iteration, the updated solutions remain feasible and the duality gap is decreased by a factor of $(1- \alpha/\sqrt{n})$ where $n$ is the dimension of the optimization problem and $\alpha > 0$ is a constant. The algorithm therefore converges to a feasible solution with duality gap $\epsilon$ in $O(\sqrt{n} \log (1/\epsilon))$ iterations. 

A quantum interior point method \cite{kerenidis2018quantum} uses a quantum linear system solver instead of classical one in each iteration of the IPM. However, there is an important difference between classical and quantum linear algebra procedures for solving the linear system $A\vec{x}=\vec{b}$. 
Unlike classical linear system solvers which return an exact description of $\vec{x}$, quantum tomography procedures can return an $\epsilon$-accurate solution $\widetilde{\vec{x}}$ such that $\norm{ \measured{\vec{x}} - \vec{x}} \leq \epsilon \norm{\vec{x}}$ with \change{$O(n/\epsilon^{2})$} runs of the quantum linear system solver. \change{Additionally, these linear system solvers require $A$ and $\vec{b}$ to be given as block-encodings~\cite{chakraborty2019power}, so this input model is used by our algorithm as well.} One of the main technical challenges in developing a quantum interior point method (QIPM) is to establish convergence of the classical IPM which uses $\epsilon$-approximate solutions of the Newton linear system (in the $\ell_{2}$ norm) instead of the exact solutions in the classical IPM. 

The quantum interior point method for second-order cone programs requires additional ideas going beyond those used for the quantum interior point methods for SDP \cite{kerenidis2018quantum}. Second order cone programs are optimization problems over the product of second-order or Lorentz cones (see section \ref{sec:socp} for definitions), interior point methods for SOCP can be described using the Euclidean Jordan algebra framework \cite{monteiro2000polynomial}. The Euclidean Jordan algebra framework provides analogs of concepts like eigenvalues, spectral and Frobenius norms for matrices and positive semidefinite constraints for the case of SOCPs.
Using these conceptual ideas from the Euclidean Jordan algebra framework \cite{monteiro2000polynomial} and the analysis of the approximate 
SDP interior point method \cite{kerenidis2018quantum} we provide an approximate IPM for SOCP that converges in $O(\sqrt{n} \log (1/\epsilon))$ iterations. Approximate IPMs for SOCP have not been previously investigated in the classical or the quantum optimization literature, this analysis is one of the main technical contributions of this paper.

From an algorithmic perspective, SOCPs are much closer to LPs (Linear Programs) than to SDPs, since for cones of dimension $n$, the Newton linear systems arising in LP and SOCP IPMs are of size $O(n)$, whereas in the SDP case they are of size $O(n^2)$. \change{Namely, a second-order conic constraint of dimension $n$ can be expressed as a single PSD constraint on a (sparse) $n \times n$ matrix \cite{alizadeh2003second} -- this allows us to embed an SOCP with $n$ variables and $m$ constraints in an SDP with an $n\times n$ matrix and $m$ constraints. The cost of solving that SDP would have a worse dependence on the error \cite{van2019improvements} or the input size \cite{kerenidis2018quantum}. On the other hand,} the quantum representations (block encodings) of the Newton linear systems for SOCP are also much simpler to construct than those for SDP. The smaller size of the SOCP linear system also makes it feasible to empirically estimate the condition number for these linear systems in a reasonable amount of time allowing us to carry out extensive numerical experiments to validate the running time of the quantum algorithm.

The theoretical analysis of the quantum algorithm for SOCP shows that its worst-case running time is
\begin{equation}\label{eq:complexity}
\widetilde{O} \left( n\sqrt{r} \frac{\zeta \kappa}{\delta^2} \log \left(\frac1\epsilon\right) \right),
\end{equation}	
where $r$ is the rank and $n$ the dimension of the SOCP, $\delta$ bounds the distance of intermediate solutions from the cone boundary, $\zeta$ is a parameter bounded by $\sqrt{n}$, $\kappa$ is an upper bound on the condition number of matrices arising in the interior-point method for SOCPs, and $\epsilon$ is the target duality gap. The running time of the algorithm depends on problem dependent parameters like $\kappa$ and $\delta$ that are difficult to bound in terms of the problem dimension $n$ -- this is also the case with previous quantum SDP solvers \cite{kerenidis2018quantum, van2019improvements} and makes it important to validate the quantum optimization speedups empirically. \change{Interestingly, since we require a classical solution of the Newton system, the linear system solver could also be replaced by a classical iterative solver \cite{saad2003iterative} which would yield a complexity of $O(n^2 \sqrt{r} \kappa \log(n/\epsilon))$.}

Let us make a remark about the complexity: as it is the case with all approximation algorithms, \eqref{eq:complexity} depends on the inverse of the target duality gap $\epsilon$, thus the running time of our algorithm grows to infinity as $\epsilon$ approaches zero, as in the case of classical IPM. Our running time also depends on $\kappa$, which in turn is empirically observed to grow inversely with the duality gap (in particular as $O(1/\epsilon)$) which again makes the running time go to infinity as $\epsilon$ approaches zero. The quantum IPM is a low precision method, unlike the classical IPM, and it can offer speedups for settings where the desired precision $\epsilon$ is moderate or low. \change{Thus although at first glance it seems that the $\epsilon$-dependence in \eqref{eq:complexity} is logarithmic, experimental evidence suggests that the factor $\frac{\zeta \kappa}{\delta^2}$ depends polynomially on $1/\epsilon$.}

Support Vector Machines (SVM) is an important application in machine learning, where even a modest value of $\epsilon=0.1$ yields an almost optimal classifier. Since the SVM training problem can be reduced to SOCP, the quantum IPM for SOCP can be used to obtain an efficient quantum SVM algorithm. We perform extensive numerical experiments to evaluate our algorithm on random SVM instances and compare it against state of the art classical SOCP and SVM solvers. 

The numerical experiments on random SVM instances indicate that the running time of the quantum algorithm scales as roughly $O(n^{\exponent})$, where all the parameters in the running time are taken into account and the exponent is estimated using a least squares fit. We also benchmarked the exponent for classical SVM algorithms on the same instances and for a comparable accuracy, the scaling exponent was found to be $3.31$ for general SOCP solvers and $3.11$ for state-of-the-art SVM solvers. We note that this does not prove a worst-case asymptotic speedup, but the experiments on unstructured SVM instances provide strong evidence for a significant polynomial speedup of the quantum SVM algorithm over state-of-the-art classical SVM algorithms. We can therefore view SVMs as a candidate problem for which quantum optimization algorithms can achieve a polynomial speedup over state of the art classical algorithms for an end-to-end application. 

\subsection{Related work}
Our main result is the first specialized quantum algorithm for training support-vector machines (SVM). While several quantum SVM algorithms have been proposed, they are unlikely to offer general speedups for the most widely used formulation of SVM -- the soft-margin ($\ell_1$-)SVM (defined in eq. \eqref{prob:SVM}). On one hand, papers such as \cite{arodz2019quantum} formulate the SVM as a SDP, and solve that using existing quantum SDP solvers such as \cite{brandao2017quantum,brandao2019quantum,van2017quantum,van2019improvements} -- with the conclusion being that a speedup is observed only for very specific sparse instances. On the other hand, \cite{rebentrost2014quantum} solves an easier related problem -- the least-squares SVM ($\ell_2$-SVM or LS-SVM, see eq. \eqref{prob:LS-SVM} for its formulation), thus losing the desirable sparsity properties of $\ell_1$-SVM \cite{suykens2002weighted}. It turns out that applying our algorithm to this problem also yields the same complexity as in \cite{rebentrost2014quantum} for the $\ell_2$-SVM. Very recently, a quantum algorithm for yet another variant of SVM (SVM-perf, \cite{joachims2006training}) has been presented in \cite{allcock2020quantum}.

\section{Preliminaries}
\subsection{Second-order cone programming} \label{sec:socp} 
\change{For the sake of completeness, in this section we present the most important results about classical SOCP IPMs, from \cite{monteiro2000polynomial,alizadeh2003second}. We start by defining SOCP as the} optimization problem over the product of second-order (or Lorentz) cones $\lorentz = \lorentz^{n_1} \times \cdots \times \lorentz^{n_r}$, where $\lorentz_k \subseteq \R^k$ is defined as $\lorentz^k = \left\lbrace\left. \vec{x} = (x_0; \vecrest{x}) \in \R^{k} \;\right\rvert\; \norm{\vecrest{x}} \leq x_0 \right\rbrace$. In this paper we consider the problem \eqref{prob:SOCP primal} and its dual \eqref{prob:SOCP dual}:
\begin{center}
\begin{tabular}{rl}
	\begin{minipage}{0.45\linewidth}
		\begin{equation}
			\begin{array}{ll}
				\min & \vec{c}^\top \vec{x}\\
				\text{s.t.}& A \vec{x} = \vec{b} \\
				& \vec{x} \in \lorentz,
			\end{array} \label{prob:SOCP primal}
		\end{equation}
	\end{minipage} 
	&
	\begin{minipage}{0.45\linewidth}
		\begin{equation}
			\begin{array}{ll}
				\max & \vec{b}^\top \vec{y}\\
				\text{s.t.}& A^\top \vec{y} + \vec{s} = \vec{c}\\
				& \vec{s} \in \lorentz,\; \vec{y} \in \R^m.
			\end{array} \label{prob:SOCP dual}
		\end{equation}
	\end{minipage}
\end{tabular}
\end{center}
We call $n:= \sum_{i=1}^r n_i$ the \emph{size} of the SOCP \eqref{prob:SOCP primal}, and $r$ is its \emph{rank}.

A solution $(\vec{x}, \vec{y}, \vec{s})$ satisfying the constraints of both \eqref{prob:SOCP primal} and \eqref{prob:SOCP dual} is \emph{feasible}, and if in addition it satisfies $\vec{x} \in \interior \lorentz$ and $\vec{s} \in \interior \lorentz$, it is \emph{strictly feasible}. If at least one constraint of \eqref{prob:SOCP primal} or \eqref{prob:SOCP dual}, is violated, the solution is \emph{infeasible}. The duality gap of a feasible solution $(\vec{x}, \vec{y}, \vec{s})$ is defined as $\mu := \mu(\vec{x}, \vec{s}) := \frac1r \vec{x}^\top \vec{s}$. As opposed to LP, and similarly to SDP, strict feasibility is required for \emph{strong duality} to hold \cite{alizadeh2003second}. From now on, we assume that our SOCP has a strictly feasible primal-dual solution (this assumption is valid, since the homogeneous self-dual embedding technique from  \cite{ye1994hsd} allows us to embed \eqref{prob:SOCP primal} and \eqref{prob:SOCP dual} in a slightly larger SOCP where this condition is satisfied). 

\subsection{Euclidean Jordan algebras}
The cone $\lorentz^n$ has an algebraic structure similar to that of symmetric matrices under the matrix product. Here, we consider the Jordan product of $(x_0, \vecrest{x}) \in \R^{n}$ and $(y_0, \vecrest{y}) \in \R^{n}$, defined as
\begin{equation*}
	\vec{x} \circ \vec{y} := \begin{bmatrix}
		\vec{x}^T \vec{y} \\
		x_0\vecrest{y} + y_0\vecrest{x}
	\end{bmatrix}, \text{ and its identity element }\vec{e} := \begin{bmatrix}
		1 \\
		0^{n-1}
	\end{bmatrix}.
\end{equation*}
This product is closely related to the (linear) matrix representation $\Arw(\vec{x}) := \begin{bmatrix}
	x_0 & \vecrest{x}^T \\
	\vecrest{x} & x_0 I_{n-1}
\end{bmatrix}$, which in turn satisfies the following equality:
\begin{equation*}
	\vec{x} \circ \vec{y} = \Arw(\vec{x}) \vec{y} = \Arw(\vec{x}) \Arw(\vec{y}) \vec{e}.
\end{equation*}
The key observation is that this product induces a spectral decomposition of any vector $\vec{x}$, that has similar properties as its matrix relative. Namely, for any vector $\vec{x}$ we define
\begin{align}\label{eq:jordan eigenvalues and eigenvectors}
\lambda_1(\vec{x}) &:= x_0 + \norm{\vecrest{x}},\; \vec{c}_1(\vec{x}) := \frac12 \begin{bmatrix}
1 \\
\frac{\vecrest{x}}{\norm{\vecrest{x}}}
\end{bmatrix}, \nonumber \\
\lambda_2(\vec{x}) &:= x_0 - \norm{\vecrest{x}},\; \vec{c}_2(\vec{x}) := \frac12 \begin{bmatrix}
1 \\
\frac{-\vecrest{x}}{\norm{\vecrest{x}}}
\end{bmatrix}.
\end{align}
\change{We use the shorthands $\lambda_1 := \lambda_1(\vec{x})$, $\lambda_2 := \lambda_2(\vec{x})$, $\vec{c}_1 := \vec{c}_1(\vec{x})$ and $\vec{c}_2 := \vec{c}_2(\vec{x})$ whenever $\vec{x}$ is clear from the context,} so we observe that $\vec{x} = \lambda_1 \vec{c}_1 + \lambda_2 \vec{c}_2$. The set of eigenvectors $\{\vec{c}_1, \vec{c}_2 \}$ is called the \emph{Jordan frame} of $\vec{x}$, and satisfies several properties:
\begin{proposition}[Properties of Jordan frames]
	Let $\vec{x} \in \R^{n}$ and let $\{ \vec{c}_1, \vec{c}_2 \}$ be its Jordan frame. Then, the following holds:
	\begin{enumerate}
		\item $\vec{c}_1 \circ \vec{c}_2 = 0$ (the eigenvectors are ``orthogonal'')
		\item $\vec{c}_1\circ \vec{c}_1 = \vec{c}_1$ and $\vec{c}_2 \circ \vec{c}_2 = \vec{c}_2$
		\item $\vec{c}_1$, $\vec{c}_2$ are of the form $\left( \frac12; \pm \vecrest{c} \right)$ with $\norm{\vecrest{c}} = \frac12$
	\end{enumerate}
\end{proposition}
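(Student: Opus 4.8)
The plan is to verify all three properties by direct computation after fixing a convenient parametrization of the Jordan frame. Writing $\vec{x} = (x_0; \vecrest{x})$, I would set $\vec{u} := \vecrest{x}/\norm{\vecrest{x}}$ (an arbitrary unit vector in the degenerate case $\vecrest{x} = \vec{0}$), so that by \eqref{eq:jordan eigenvalues and eigenvectors} the eigenvectors take the compact form $\vec{c}_1 = \tfrac12(1; \vec{u})$ and $\vec{c}_2 = \tfrac12(1; -\vec{u})$ with $\norm{\vec{u}} = 1$. Property~3 is then immediate by inspection: each $\vec{c}_i$ is of the shape $(\tfrac12; \pm\vecrest{c})$ with $\vecrest{c} = \tfrac12\vec{u}$, and $\norm{\vecrest{c}} = \tfrac12\norm{\vec{u}} = \tfrac12$.

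For properties 1 and 2 I would plug these expressions into the defining bilinear formula $\vec{x}\circ\vec{y} = (\vec{x}^\top\vec{y};\; x_0\vecrest{y} + y_0\vecrest{x})$ and use $\norm{\vec{u}}^2 = 1$. For the first coordinate, $\vec{c}_1^\top\vec{c}_2 = \tfrac14(1 - \norm{\vec{u}}^2) = 0$ while $\vec{c}_i^\top\vec{c}_i = \tfrac14(1 + \norm{\vec{u}}^2) = \tfrac12$; for the remaining coordinates, the two cross terms $x_0\vecrest{y} + y_0\vecrest{x}$ cancel in $\vec{c}_1\circ\vec{c}_2$ (giving $-\tfrac14\vec{u}+\tfrac14\vec{u} = \vec{0}$) and add up in $\vec{c}_i\circ\vec{c}_i$ (giving $\tfrac12\vec{u}$). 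Hence $\vec{c}_1\circ\vec{c}_2 = \vec{0}$ and $\vec{c}_i\circ\vec{c}_i = (\tfrac12; \tfrac12\vec{u}) = \tfrac12(1;\vec{u}) = \vec{c}_i$ for $i \in \{1,2\}$, which are exactly properties 1 and 2. The only bookkeeping is to keep the factor $\tfrac12$ appearing both in $\vec{e}$ and in the definition of the $\vec{c}_i$ straight, after which the identities fall out.

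There is essentially no genuine obstacle here; the one point that deserves an explicit remark is the degenerate case $\vecrest{x} = \vec{0}$, where the Jordan frame is not unique. In that situation one simply fixes any unit vector $\vec{u}$ and observes that none of the computations above used anything about $\vec{u}$ beyond $\norm{\vec{u}} = 1$, so all three properties continue to hold verbatim for every such choice.
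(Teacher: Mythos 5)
Your proof is correct. The paper states this proposition without proof, treating it as a standard fact from the Euclidean Jordan algebra framework (citing \cite{alizadeh2003second,monteiro2000polynomial} for background); your direct computation from the explicit form $\vec{c}_i = \tfrac12(1;\pm\vec{u})$ with $\norm{\vec{u}}=1$, together with the bilinear Jordan product formula, is exactly the argument one would write out, and your remark about the degenerate case $\vecrest{x}=\vec{0}$ is a worthwhile clarification.
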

On the other hand, just like a given matrix is positive (semi)definite if and only if all of its eigenvalues are positive (nonnegative), a similar result holds for $\lorentz^n$ and $\interior \lorentz^n$ (the Lorentz cone and its interior):
\begin{proposition}
	Let $\vec{x} \in \R^{n}$ have eigenvalues $\lambda_1, \lambda_2$. Then, the following holds:
	\begin{enumerate}
		\item $\vec{x} \in \lorentz^n$ if and only if $\lambda_1\geq 0$ and $\lambda_2 \geq 0$.
		\item $\vec{x} \in \interior \lorentz^n$ if and only if $\lambda_1 > 0$ and $\lambda_2 > 0$.
	\end{enumerate}
\end{proposition}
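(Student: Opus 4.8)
The plan is to reduce both statements to elementary manipulations of the single defining inequality of the Lorentz cone, namely $\vec{x} = (x_0; \vecrest{x}) \in \lorentz^n \iff \norm{\vecrest{x}} \leq x_0$; the only genuinely topological input needed is the identification of $\interior \lorentz^n$ with the set of points satisfying the strict inequality, which I would establish along the way.

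First I would record the trivial but crucial observation that, by the formulas in \eqref{eq:jordan eigenvalues and eigenvectors}, we always have $\lambda_1 = x_0 + \norm{\vecrest{x}} \geq x_0 - \norm{\vecrest{x}} = \lambda_2$, since $\norm{\vecrest{x}} \geq 0$. Consequently the pair $(\lambda_1, \lambda_2)$ is componentwise nonnegative iff $\lambda_2 \geq 0$, and componentwise strictly positive iff $\lambda_2 > 0$: indeed $\lambda_2 \geq 0$ already forces $x_0 \geq \norm{\vecrest{x}} \geq 0$, hence $\lambda_1 = x_0 + \norm{\vecrest{x}} \geq 0$ as well, and the same reasoning applies verbatim with strict inequalities.

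For part~1 it then suffices to chain equivalences: $\lambda_1 \geq 0$ and $\lambda_2 \geq 0$ $\iff$ $\lambda_2 \geq 0$ $\iff$ $x_0 - \norm{\vecrest{x}} \geq 0$ $\iff$ $\norm{\vecrest{x}} \leq x_0$ $\iff$ $\vec{x} \in \lorentz^n$, the last step being the definition of $\lorentz^n$. For part~2 I would first argue that $\interior \lorentz^n = \left\{ \vec{x} = (x_0; \vecrest{x}) \;\middle|\; \norm{\vecrest{x}} < x_0 \right\}$: if $\norm{\vecrest{x}} < x_0$, then by continuity of the map $\vec{x} \mapsto x_0 - \norm{\vecrest{x}}$ a sufficiently small ball around $\vec{x}$ remains inside $\lorentz^n$; conversely, any $\vec{x} \in \lorentz^n$ with $\norm{\vecrest{x}} = x_0$ can be pushed out of the cone by an arbitrarily small perturbation (replace $\vecrest{x}$ by $(1+\eta)\vecrest{x}$ when $\vecrest{x} \neq \vec{0}$, or decrease $x_0$ when $\vecrest{x} = \vec{0}$), so it is not interior. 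Granting this, the same chain of equivalences as in part~1, with every ``$\geq$'' replaced by ``$>$'', gives the claim.

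The argument is essentially routine; the only point requiring a little care is the identification of $\interior \lorentz^n$, together with the implicit sanity check of the degenerate case $\vecrest{x} = \vec{0}$, where the Jordan frame in \eqref{eq:jordan eigenvalues and eigenvectors} is not uniquely defined but the eigenvalues still are ($\lambda_1 = \lambda_2 = x_0$), so that both statements correctly reduce to $x_0 \geq 0$ and $x_0 > 0$ respectively. I do not anticipate any real obstacle.
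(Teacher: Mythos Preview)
Your argument is correct and complete. The paper does not actually supply a proof of this proposition; it is stated as a well-known fact in the Euclidean Jordan algebra framework, so there is nothing to compare against beyond noting that your elementary reduction to the defining inequality $\norm{\vecrest{x}} \leq x_0$ (and its strict version) is exactly the intended content.
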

Now, using this decomposition, we can define arbitrary real powers $\vec{x}^p$ for $p \in \R$ as $\vec{x}^p := \lambda_1^p\vec{c}_1 + \lambda_2^p \vec{c}_2$, and in particular the ``inverse'' and the ``square root''
\begin{align*}
	\vec{x}^{-1} &= \frac{1}{\lambda_1} \vec{c}_1 + \frac{1}{\lambda_2} \vec{c}_2, \text{ if } \lambda_1\lambda_2 \neq 0, \\
	\vec{x}^{1/2} &= \sqrt{\lambda_1} \vec{c}_1 + \sqrt{\lambda_2} \vec{c}_2, \text{ if } \vec{x} \in \lorentz^n.
\end{align*}
Moreover, we can also define some operator norms, namely the Frobenius and the spectral one:
\begin{align*}
	\norm{\vec{x}}_F &= \sqrt{\lambda_1^2 + \lambda_2^2} = \sqrt{2}\norm{\vec{x}}, \\
	\norm{\vec{x}}_2 &= \max\{ |\lambda_1|, |\lambda_2| \} = |x_0| + \norm{\vecrest{x}}.
\end{align*}
Finally, we define an analogue to the operation $Y \mapsto XYX$. It turns out that for this we need another matrix representation (\emph{quadratic representation}) $Q_{\vec{x}}$, defined as
\begin{align} \label{qrep} 
	Q_{\vec{x}} &:= 2\Arw^2(\vec{x}) - \Arw(\vec{x}^2)
	= \begin{bmatrix}
	\norm{\vec{x}}^2 & 2x_0\vecrest{x}^T \\
	2x_0\vecrest{x} & \lambda_1\lambda_2 I_n + 2\vecrest{x}\vecrest{x}^T
	\end{bmatrix}.
\end{align}
Now, the matrix-vector product $Q_{\vec{x}}\vec{y}$ will behave as the quantity $XYX$. To simplify the notation, we also define the matrix $T_{\vec{x}} := Q_{\vec{x}^{1/2}}$.

The definitions that we introduced so far are suitable for dealing with a single constraint $\vec{x} \in \lorentz^n$. For dealing with multiple constraints $\vec{x}_1 \in \lorentz^{n_1}, \dots, \vec{x}_r \in \lorentz^{n_r}$, we need to deal with block-vectors $\vec{x} = (\vec{x}_1;\vec{x}_2;\dots;\vec{x}_r)$ and $\vec{y} = (\vec{y}_1;\vec{y}_2;\dots;\vec{y}_r)$. We call the number of blocks $r$ the \emph{rank} of the vector (thus, up to now, we were only considering rank-1 vectors). Now, we extend all our definitions to rank-$r$ vectors.
\begin{enumerate}
	\item $\vec{x} \circ \vec{y} := (\vec{x}_1 \circ \vec{y}_1; \dots; \vec{x}_r \circ \vec{y}_r)$
	\item The matrix representations $\Arw(\vec{x})$ and $Q_{\vec{x}}$ are the block-diagonal matrices containing the representations of the blocks:
	\begin{align*}
		\Arw(\vec{x}) &:= \Arw(\vec{x}_1) \oplus \cdots \oplus \Arw(\vec{x}_r) \text{ and }
		Q_{\vec{x}} := Q_{\vec{x}_1} \oplus \cdots \oplus Q_{\vec{x}_r}
	\end{align*}
	\item $\vec{x}$ has $2r$ eigenvalues (with multiplicities) -- the union of the eigenvalues of the blocks $\vec{x}_i$. The eigenvectors of $\vec{x}$ corresponding to block $i$ contain the eigenvectors of $\vec{x}_i$ as block $i$, and are zero everywhere else.
	\item The identity element is $\vec{e} = (\vec{e}_1; \dots ;\vec{e}_r)$, where $\vec{e}_i$'s are the identity elements for the corresponding blocks.
\end{enumerate}
Thus, all things defined using eigenvalues can also be defined for rank-$r$ vectors:
\begin{enumerate}
	\item The norms are extended as $\norm{\vec{x}}^2_F := \sum_{i=1}^r \norm{\vec{x}_i}_F^2$ and $\norm{\vec{x}}_2 := \max_i \norm{\vec{x}_i}_2$, and
	\item Powers are computed blockwise as $\vec{x}^p := (\vec{x}_1^p;\dots;\vec{x}_r^p)$ whenever the corresponding blocks are defined.
\end{enumerate}

Some further matrix-algebra inspired properties of block vectors are stated in the following two claims:
\begin{claim}[Algebraic properties]\label{claim:algebraic properties}
	Let $\vec{x}, \vec{y}$ be two arbitrary block-vectors. Then, the following holds:
	\begin{enumerate}
		\item The spectral norm is subadditive: $\norm{\vec{x} + \vec{y}}_2 \leq \norm{\vec{x}}_2 + \norm{\vec{y}}_2$.
		\item The spectral norm is less than the Frobenius norm: $\norm{\vec{x}}_2 \leq \norm{\vec{x}}_F$.
		\item If $A$ is a matrix with minimum and maximum singular values $\sigma_{\text{min}}$ and $\sigma_{\text{max}}$ respectively, then the norm $\norm{A\vec{x}}$ is bounded as $\sigma_{\text{min}} \norm{\vec{x}} \leq \norm{A\vec{x}} \leq \sigma_{\text{max}} \norm{\vec{x}}$.
		\item The minimum eigenvalue of $\vec{x} + \vec{y}$ is bounded as $\lambda_\text{min}(\vec{x} + \vec{y}) \geq \lambda_\text{min}(\vec{x}) - \norm{\vec{y}}_2$.
		\item The following submultiplicativity property holds: $\norm{\vec{x} \circ \vec{y}}_F \leq \norm{\vec{x}}_2 \cdot \norm{\vec{y}}_F$.
	\end{enumerate}
\end{claim}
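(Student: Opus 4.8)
The plan is to reduce every item to the rank-$1$ case, where all of the Jordan-algebraic quantities have the explicit closed forms recorded above, and then lift to general rank $r$ by observing that $\norm{\cdot}_F^2$ and the representation $\Arw(\cdot)$ decompose \emph{additively} over the blocks, whereas $\norm{\cdot}_2$ and $\lambda_\text{min}(\cdot)$ are respectively a \emph{maximum} and a \emph{minimum} over the blocks. Concretely, for a rank-$1$ vector $\vec{x}=(x_0;\vecrest{x})$ I will repeatedly use $\norm{\vec{x}}_2 = |x_0| + \norm{\vecrest{x}}$, $\norm{\vec{x}}_F = \sqrt{\lambda_1^2+\lambda_2^2} = \sqrt{2}\norm{\vec{x}}$, $\lambda_\text{min}(\vec{x}) = x_0 - \norm{\vecrest{x}}$, and $\vec{x}\circ\vec{y} = \Arw(\vec{x})\vec{y}$.

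Items 1, 2 and 4 are then just the triangle inequality combined with elementary facts about $\max$ and $\min$. For item 1, in rank $1$ we get $\norm{\vec{x}+\vec{y}}_2 = |x_0+y_0| + \norm{\vecrest{x}+\vecrest{y}} \leq (|x_0|+\norm{\vecrest{x}}) + (|y_0|+\norm{\vecrest{y}})$ from the triangle inequality in $\R$ and in $\R^{n-1}$; for rank $r$ we take the maximum over blocks and use $\max_i(a_i+b_i) \leq \max_i a_i + \max_i b_i$. Item 2 is $\max\{|\lambda_1|,|\lambda_2|\} \leq \sqrt{\lambda_1^2+\lambda_2^2}$ in rank $1$, followed by $\max_i \norm{\vec{x}_i}_2 \leq \max_i \norm{\vec{x}_i}_F \leq \big(\sum_i \norm{\vec{x}_i}_F^2\big)^{1/2}$. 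For item 4, in rank $1$ we have $\lambda_\text{min}(\vec{x}+\vec{y}) = (x_0+y_0) - \norm{\vecrest{x}+\vecrest{y}} \geq (x_0-\norm{\vecrest{x}}) + (y_0-\norm{\vecrest{y}}) \geq \lambda_\text{min}(\vec{x}) - (|y_0|+\norm{\vecrest{y}}) = \lambda_\text{min}(\vec{x}) - \norm{\vec{y}}_2$, and the rank-$r$ version follows from $\min_i(a_i - b_i) \geq \min_i a_i - \max_i b_i$.

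Item 3 is the standard singular-value bound: expanding $\vec{x}$ in an orthonormal basis of right singular vectors of $A$ gives $\norm{A\vec{x}}^2 = \sum_j \sigma_j^2 |\langle \vec{v}_j, \vec{x}\rangle|^2$, which is squeezed between $\sigma_{\text{min}}^2\norm{\vec{x}}^2$ and $\sigma_{\text{max}}^2\norm{\vec{x}}^2$. The only genuinely non-routine step is item 5, and the hard part there is to pin down the largest singular value of $\Arw(\vec{x})$: for a rank-$1$ vector $\vec{x}$ this symmetric matrix has spectrum $\{x_0 + \norm{\vecrest{x}},\, x_0 - \norm{\vecrest{x}}\}$ together with the value $x_0$ with multiplicity $n-2$, so its largest singular value equals $|x_0| + \norm{\vecrest{x}} = \norm{\vec{x}}_2$. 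Combining this with item 3 yields $\norm{\vec{x}\circ\vec{y}} = \norm{\Arw(\vec{x})\vec{y}} \leq \norm{\vec{x}}_2\norm{\vec{y}}$, and multiplying through by $\sqrt{2}$ turns this into $\norm{\vec{x}\circ\vec{y}}_F \leq \norm{\vec{x}}_2\norm{\vec{y}}_F$ in rank $1$; summing the squares of these inequalities over the $r$ blocks and factoring out $\max_i \norm{\vec{x}_i}_2^2 = \norm{\vec{x}}_2^2$ gives the general statement. A small point to handle carefully throughout is the degenerate case $\vecrest{x}=0$, where the Jordan frame is not unique, but all quantities appearing in the claim ($\norm{\cdot}_2$, $\norm{\cdot}_F$, $\lambda_\text{min}$, and $\Arw$) remain well defined and the bounds are immediate by inspection.
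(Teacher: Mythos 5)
Your proof is correct; worth noting is that the paper does not actually supply a proof of this claim --- it merely states that the arguments are ``analogous to the matrix case'' and records a few cautionary remarks (that $\norm{\cdot}_2$ is only a seminorm outside $\lorentz$, and that item 5 is the analogue of $\norm{AB}_F \leq \norm{A}_2\norm{B}_F$). Your write-up therefore goes beyond the paper by giving a self-contained, elementary argument entirely inside the Lorentz-cone Jordan algebra, and the one step that genuinely needed an idea --- identifying the spectrum of $\Arw(\vec{x})$ as $\{x_0\pm\norm{\vecrest{x}}\}\cup\{x_0 \text{ (mult. } n-2)\}$, hence $\norm{\Arw(\vec{x})}_{\mathrm{op}} = |x_0|+\norm{\vecrest{x}} = \norm{\vec{x}}_2$ --- is handled correctly (it also follows cleanly from $\Arw(\vec{x}) = x_0 I + \operatorname{sym}(\vecrest{x})$). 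The reduction to rank $1$ and the lift back via the block-additivity of $\norm{\cdot}_F^2$ and the block-max/min structure of $\norm{\cdot}_2$ and $\lambda_{\min}$ is exactly the right organizing principle, and your remark on the degenerate $\vecrest{x}=0$ case is a nice piece of care that the paper elides. One small caveat on item 3: the lower bound $\sigma_{\min}\norm{\vec{x}} \leq \norm{A\vec{x}}$ holds for all $\vec{x}$ only when $A$ has full column rank (otherwise some right singular vector has $\sigma_j=0$); this matches the paper's implicit usage (where $A$ there is $T_{\vec{s}}$, square and invertible) but is worth stating if you want the lemma at full generality.
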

\change{In general, the proofs of these statements are analogous to the matrix case, with a few notable differences:} First, the vector spectral norm $\norm{\cdot}_2$ is not actually a norm, since there exist nonzero vectors outside $\lorentz$ which have zero norm. It is, however, still bounded by the Frobenius norm (just like in the matrix case), which is in fact a proper norm. Secondly, the minimum eigenvalue bound also holds for matrix spectral norms, with the exact same statement. Finally, the last property is reminiscent of the matrix submultiplicativity property $\norm{A \cdot B}_F \leq \norm{A}_2 \norm{B}_F$.

We finish with several well-known properties of the quadratic representation $Q_{\vec{x}}$ and $T_{\vec{x}}$.
\begin{proposition}[Properties of $Q_{\vec{x}}$, from \cite{alizadeh2003second}]\label{claim:Properties of Qx} 
	Let $\vec{x} \in \interior \lorentz$. Then, the following holds:
	\begin{enumerate}
		\item $Q_{\vec{x}} \vec{e} = \vec{x}^2$, and thus $T_{\vec{x}} \vec{e} = \vec{x}$.
		\item $Q_{\vec{x}^{-1}} = Q_{\vec{x}}^{-1}$, and more generally $Q_{\vec{x}^p} = Q_{\vec{x}}^p$ for all $p \in \R$.
		\item $\norm{Q_{\vec{x}}}_2 = \norm{\vec{x}}_2^2$, and thus $\norm{T_{\vec{x}}}_2 = \norm{\vec{x}}_2$.
		\item $Q_{\vec{x}}$ preserves $\lorentz$, i.e. $Q_{\vec{x}}(\lorentz) = \lorentz$ and $Q_{\vec{x}}(\interior \lorentz) = \interior \lorentz$. 
	\end{enumerate}
\end{proposition}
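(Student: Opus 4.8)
\emph{Proof proposal.} The plan is to first reduce everything to the rank-$1$ case and then to extract a single spectral decomposition of the matrix $Q_{\vec{x}}$ from which (2)--(4) all follow. Since $Q_{\vec{x}} = Q_{\vec{x}_1} \oplus \cdots \oplus Q_{\vec{x}_r}$, $\lorentz = \lorentz^{n_1} \times \cdots \times \lorentz^{n_r}$, the identity $\vec{e}$ and the powers $\vec{x}^p$ are taken blockwise, and $\norm{\cdot}_2$ is a maximum over blocks on both the matrix and the vector side, each of the four assertions holds for $\vec{x}$ iff it holds for every block $\vec{x}_i$; so assume $r = 1$, $\vec{x} \in \interior\lorentz^n$ with eigenvalues $\lambda_1, \lambda_2 > 0$ and Jordan frame $\{\vec{c}_1, \vec{c}_2\}$, and (when $\vecrest{x} \neq 0$) set $\vec{c} := \vecrest{x}/\norm{\vecrest{x}}$, so $\vec{c}_i = \tfrac12(1;\pm\vec{c})$ and $\vecrest{x} = \norm{\vecrest{x}}\vec{c}$; the degenerate case $\vecrest{x} = 0$ (where $\vec{x} = x_0\vec{e}$ and $Q_{\vec{x}} = x_0^2 I_n$) is immediate throughout. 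For (1), I would use $\Arw(\vec{x})\vec{e} = \vec{x}\circ\vec{e} = \vec{x}$, hence $\Arw^2(\vec{x})\vec{e} = \vec{x}\circ\vec{x} = \vec{x}^2$ and $\Arw(\vec{x}^2)\vec{e} = \vec{x}^2$, so $Q_{\vec{x}}\vec{e} = 2\Arw^2(\vec{x})\vec{e} - \Arw(\vec{x}^2)\vec{e} = \vec{x}^2$; applying this with $\vec{x}$ replaced by $\vec{x}^{1/2}$ and using $(\vec{x}^{1/2})^2 = \vec{x}$ gives $T_{\vec{x}}\vec{e} = \vec{x}$.

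The heart of the argument is the claim that $Q_{\vec{x}}$ is symmetric and diagonalized by the Jordan frame: $Q_{\vec{x}}\vec{c}_1 = \lambda_1^2\vec{c}_1$, $Q_{\vec{x}}\vec{c}_2 = \lambda_2^2\vec{c}_2$, and $Q_{\vec{x}}\vec{u} = \lambda_1\lambda_2\vec{u}$ for every $\vec{u} = (0;\vecrest{u})$ with $\vecrest{u} \perp \vec{c}$. This is a direct computation from the explicit block form \eqref{qrep}, substituting $\vecrest{x} = \norm{\vecrest{x}}\vec{c}$ and $\lambda_1\lambda_2 = x_0^2 - \norm{\vecrest{x}}^2$ and expanding squares. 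Since $\operatorname{span}\{\vec{c}_1,\vec{c}_2\}$ and $\{(0;\vecrest{u}) : \vecrest{u}\perp\vec{c}\}$ are orthogonal complements spanning $\R^n$, this is the full eigendecomposition of $Q_{\vec{x}}$, with eigenvalues $\lambda_1^2, \lambda_2^2$ (simple) and $\lambda_1\lambda_2$ (multiplicity $n-2$); in particular $Q_{\vec{x}}$ is symmetric positive definite. Then (3) is immediate: $\norm{Q_{\vec{x}}}_2 = \max\{\lambda_1^2,\lambda_2^2,\lambda_1\lambda_2\} = \max\{\lambda_1^2,\lambda_2^2\} = \norm{\vec{x}}_2^2$, using $\lambda_1\lambda_2 \leq \max\{\lambda_1^2,\lambda_2^2\}$, and hence $\norm{T_{\vec{x}}}_2 = \norm{Q_{\vec{x}^{1/2}}}_2 = \norm{\vec{x}^{1/2}}_2^2 = \norm{\vec{x}}_2$. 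For (2): $\vec{x}^p = \lambda_1^p\vec{c}_1 + \lambda_2^p\vec{c}_2$ has the same Jordan frame as $\vec{x}$ with eigenvalues $\lambda_i^p$, so the claim applied to $\vec{x}^p$ shows $Q_{\vec{x}^p}$ has the same eigenvectors as $Q_{\vec{x}}$ with eigenvalues $\lambda_1^{2p}, \lambda_2^{2p}$ and $(\lambda_1\lambda_2)^p = \lambda_1^p\lambda_2^p$, i.e. $Q_{\vec{x}^p} = Q_{\vec{x}}^p$ by functional calculus; $p = -1$ gives the first part of (2).

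For (4), I would first deduce from the eigendecomposition the multiplicativity of the ``determinant'' $\det(\vec{z}) := z_0^2 - \norm{\vecrest{z}}^2 = \lambda_1(\vec{z})\lambda_2(\vec{z})$: writing $\vec{y} = \alpha\vec{c}_1 + \beta\vec{c}_2 + \vec{w}$ with $\vec{w} = (0;\vecrest{w})$, $\vecrest{w}\perp\vec{c}$, one gets $\det(\vec{y}) = \alpha\beta - \norm{\vecrest{w}}^2$ while $Q_{\vec{x}}\vec{y} = \alpha\lambda_1^2\vec{c}_1 + \beta\lambda_2^2\vec{c}_2 + \lambda_1\lambda_2\vec{w}$, so $\det(Q_{\vec{x}}\vec{y}) = \alpha\beta\lambda_1^2\lambda_2^2 - \lambda_1^2\lambda_2^2\norm{\vecrest{w}}^2 = \det(\vec{x})^2\det(\vec{y})$. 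Since $\vec{z}\in\interior\lorentz^n$ iff $\det(\vec{z}) > 0$ and $z_0 > 0$, the open set $\{\vec{z} : \det(\vec{z}) > 0\}$ is the disjoint union of the two convex (hence connected) pieces $\interior\lorentz^n$ and $-\interior\lorentz^n$. For $\vec{x}\in\interior\lorentz$ the map $Q_{\vec{x}}$ is a linear homeomorphism (invertible by (2)), so $Q_{\vec{x}}(\interior\lorentz^n)$ is a connected subset of $\{\det > 0\}$ containing $Q_{\vec{x}}\vec{e} = \vec{x}^2 \in \interior\lorentz^n$, forcing $Q_{\vec{x}}(\interior\lorentz^n) \subseteq \interior\lorentz^n$; applying the same to $\vec{x}^{-1}$ and using $Q_{\vec{x}^{-1}} = Q_{\vec{x}}^{-1}$ gives the reverse inclusion, so $Q_{\vec{x}}(\interior\lorentz^n) = \interior\lorentz^n$, and taking closures under the homeomorphism $Q_{\vec{x}}$ gives $Q_{\vec{x}}(\lorentz^n) = \lorentz^n$.

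The only genuine work is the single block computation establishing the eigendecomposition of $Q_{\vec{x}}$ (which also yields the determinant formula for free); everything else is bookkeeping. The points needing care are: keeping the two meanings of $\norm{\cdot}_2$ apart (matrix operator norm versus the Jordan spectral quantity $\max\{|\lambda_1|,|\lambda_2|\}$); noting the block-diagonal reduction is valid precisely because $\norm{\cdot}_2$ is a maximum over blocks on both sides; and the topological step in (4), where connectedness together with the single anchor value $Q_{\vec{x}}\vec{e} = \vec{x}^2$ is what pins the image to the correct component of $\{\det > 0\}$.
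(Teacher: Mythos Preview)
The paper does not prove this proposition at all: it is stated as a collection of well-known facts and attributed to \cite{alizadeh2003second}, with no argument given. Your proposal therefore cannot be compared to the paper's proof, but it is a correct and self-contained argument. The key step---computing the full eigendecomposition of $Q_{\vec{x}}$ in the rank-$1$ case, with eigenvalues $\lambda_1^2,\lambda_2^2$ on the Jordan frame and $\lambda_1\lambda_2$ on its orthogonal complement---is exactly the right structural fact, and once it is in hand parts (2)--(4) do indeed fall out as you describe (functional calculus for (2), reading off the largest eigenvalue for (3), and the determinant-plus-connectedness argument for (4)). The degenerate case $\vecrest{x}=0$ and the block-diagonal reduction to rank $1$ are handled correctly, and your caution about the two meanings of $\norm{\cdot}_2$ is well placed.
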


\subsection{Interior-point methods}
Our algorithm follows the general IPM structure, i.e. it uses Newton's method to solve a sequence of increasingly strict relaxations of the Karush-Kuhn-Tucker (KKT) optimality conditions:
\begin{align}
	A\vec{x} = \vec{b},\; A^\top \vec{y} + \vec{s} = \vec{c} \label{eq:central path} \\
	\vec{x} \circ \vec{s} = \nu \vec{e},\;
	\vec{x} \in \lorentz, \vec{s} \in \lorentz, \nonumber
\end{align}
where the parameter $\nu>0$ is decreased by a factor $\sigma<1$ in each iteration. Since $\vec{x} \circ \vec{s} = \nu \vec{e}$ implies that the duality gap is $\mu = \nu$, by letting $\nu \to 0$ the IPM converges towards the optimal solution. The curve traced by (feasible) solutions $(\vec{x}, \vec{y}, \vec{s})$ of \eqref{eq:central path} for $\nu > 0$ is called the \emph{central path}, and we note that all points on it are strictly feasible.

More precisely, in each iteration we need to find $\dx, \dy, \ds$ such that $\xnext := \vec{x} + \dx$, $\ynext := \vec{y} + \dy$ and $\snext := \vec{s} + \ds$ satisfy \eqref{eq:central path} for $\nu = \sigma \mu$. After linearizing the product $\xnext\circ \snext$, we obtain the following linear system -- the \emph{Newton system}:
\begin{align}
\begingroup
\begin{bmatrix}
A & 0 & 0 \\
0 & A^\top & I \\
\Arw(\vec{s}) & 0 & \Arw(\vec{x}) 
\end{bmatrix}
\begin{bmatrix}
\dx \\
\dy \\
\ds
\end{bmatrix} = 
\begin{bmatrix}
\vec{b} - A \vec{x} \\
\vec{c} - \vec{s} - A^\top \vec{y} \\
\sigma \mu \vec{e} - \vec{x} \circ \vec{s}
\end{bmatrix}.
\endgroup
\label{eq:Newton system}
\end{align}
As a final remark, it is not guaranteed that $(\xnext, \ynext, \snext)$ is on the central path \eqref{eq:central path}, or even that it is still strictly feasible. Luckily, it can be shown that long as $(\vec{x}, \vec{y}, \vec{s})$ starts out in a neighborhood $\mathcal{N}$ of the central path, $(\xnext, \ynext, \snext)$ will remain both strictly feasible and in $\mathcal{N}$.
It can be shown that this algorithm halves the duality gap every $O(\sqrt{r})$ iterations, so indeed, after $O(\sqrt{r} \log(\mu_0 / \epsilon))$ it will converge to a (feasible) solution with duality gap at most $\epsilon$ (given that the initial duality gap was $\mu_0$).

\subsection{Quantum linear algebra}
As it was touched upon in the paper, the main speedup in our algorithms comes from the fact that we use the quantum linear algebra algorithms from \cite{chakraborty2019power, gilyen2019quantum} whose complexity is sublinear in the dimension. Of course, we need to change our computational model for this sentence to make any sense: namely, we encode $n$-dimensional unit vectors as quantum states of a $\lceil \log_2(n) \rceil$-qubit system. In other words, for a vector $\vec{z} \in \R^{2^k}$ with $\norm{\vec{z}} = 1$, we use the notation $\ket{\vec{z}}$ to refer to the $k$-qubit state $\ket{\vec{z}} := \sum_{i=0}^{2^k - 1} z_i \ket{i}$, where $\ket{i}$ are the standard basis vectors of $\C^{2^k}$, the state space of a $k$-qubit system \cite{nielsen2010quantum}. 

Given a quantum state $\ket{\vec{z}}$, we have very limited ways of interacting with it: we can either apply a unitary transformation $U: \C^{2^k} \to \C^{2^k}$, or we can \emph{measure} it, which means that we discard the state and obtain a single random integer $0 \leq i \leq 2^k - 1$, with the probability of measuring $i$ being $z_i^2$. In particular this means that we can neither observe the \emph{amplitudes} $z_i$ directly, nor can we create a copy of $\ket{\vec{z}}$ for an arbitrary $\ket{\vec{z}}$. In addition to this, it is \emph{a priori} not clear how (and whether it is even possible) to implement the state $\ket{\vec{z}}$ or an arbitrary unitary $U$ using the gates of a quantum computer. Luckily, there exists a quantum-classical framework using \emph{QRAM} data structures described in \cite{kerenidis2016quantum} that provides a positive answer to both of these questions.

The QRAM can be thought of as the quantum analogue to RAM, i.e. an array $[\vec{b}^{(1)}, \dots, \vec{b}^{(m)}]$ of $w$-bit bitstrings, whose elements we can access in poly-logarithmic time given their address (position in the array). More precisely, QRAM is just an efficient implementation of the unitary transformation 
\begin{equation*}
\ket{i}\ket{0}^{\otimes w} \mapsto \ket{i} \ket{b^{(i)}_1\dots b^{(i)}_w}, \text{ for } i \in [m].
\end{equation*}
The usefulness of QRAM data structures becomes clear when we consider the \emph{block encoding} framework:
\begin{definition}
	Let $A \in \R^{n\times n}$ be a symmetric matrix. Then, the $\ell$-qubit unitary matrix $U \in \C^{2^\ell \times 2^\ell}$ is a $(\zeta, \ell)$ block encoding of $A$ if $U = \begin{bmatrix}
	A / \zeta & \cdot \\
	\cdot & \cdot
	\end{bmatrix}$. For an arbitrary matrix $B \in \R^{n \times m}$, a block encoding of $B$ is any block encoding of its symmetrized version $\operatorname{sym}(B) := \begin{bmatrix}0 & B\\ B^\top & 0\end{bmatrix}$.
\end{definition}
We want $U$ to be implemented efficiently, i.e. using an $\ell$-qubit quantum circuit of depth (poly-)logarithmic in $n$. Such a circuit would allow us to efficiently create states $\ket{A_i}$ corresponding to rows (or columns) of $A$.  Moreover, we need to be able to construct such a data structure efficiently from the classical description of $A$. It turns out that we are able to fulfill both of these requirements using a data structure built on top of QRAM.
\begin{theorem}[Block encodings using QRAM \cite{kerenidis2016quantum, kerenidis2017quantum}] \label{qbe}
	There exist QRAM data structures for storing vectors $\vec{v}_i \in \R^n$, $i \in [m]$ and matrices $A \in \R^{n\times n}$ such that with access to these data structures one can do the following:
	\begin{enumerate}
		\item Given $i\in [m]$, prepare the state $\ket{\vec{v}_i}$ in time $\widetilde{O}(1)$. In other words, the unitary $\ket{i}\ket{0} \mapsto \ket{i}\ket{\vec{v}_i}$ can be implemented efficiently.
		\item A $(\zeta(A), 2 \log n)$ unitary block encoding for $A$ with $\zeta(A) = \norm{A}_{2}^{-1}\min( \norm{A}_{F}, s_{1}(A))$, where $s_1(A) = \max_i \sum_j |A_{i, j}|$ can be implemented in time $\widetilde{O}(\log n)$. Moreover, this block encoding can be constructed in a single pass over the matrix $A$, and it can be updated in $O(\log^2 n)$ time per entry.
	\end{enumerate}
\end{theorem}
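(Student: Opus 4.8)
The plan is to prove both parts constructively by exhibiting the underlying classical data structure --- a collection of binary trees maintained in QRAM --- together with the shallow quantum circuits that query it; the two parts are linked, since the block encoding in part~2 is assembled from two copies of the state-preparation circuit of part~1.

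For part~1, with each vector $\vec{v}_i \in \R^n$ (padding to $n=2^k$) I would store a complete binary tree of depth $k$ whose $j$-th leaf holds $v_{i,j}^2$ together with the sign of $v_{i,j}$, and whose internal nodes hold the sum of their two children's values; the root then holds $\norm{\vec{v}_i}^2$, the tree uses $O(n)$ memory, and it is maintained by propagating a leaf change up to the root. State preparation proceeds top-down and controlled on the register holding $i$: starting from $\ket{0}^{\otimes k}$, in round $t$ one loads from the tree the angle $\theta$ with $\cos^2\theta$ equal to the ratio (left-child value)/(node value) for the node addressed by the first $t-1$ qubits, and applies the corresponding controlled rotation to qubit $t$; after $k$ rounds the register holds $\tfrac{1}{\norm{\vec{v}_i}}\sum_j |v_{i,j}|\,\ket{j}$, and a final controlled phase that reads the leaf sign bits turns this into $\ket{\vec{v}_i}$. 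Each round is one QRAM query plus $O(1)$ further gates, so the circuit has depth $\widetilde{O}(1)$, i.e.\ $\polylog(n)$. The one genuinely technical ingredient is precision bookkeeping: one must check that storing the tree values and the rotation angles to $O(\log n)$ bits perturbs the prepared state by at most an inverse-polynomial amount, which is precisely what $\widetilde{O}(\cdot)$ is allowed to absorb.

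For part~2, I would first replace $A$ by $\operatorname{sym}(A)$ (per the block-encoding definition), which makes its absolute row sums equal its absolute column sums, and store one row-tree $B_i$ as above for each row $A_{i,\cdot}$ together with one auxiliary tree $B_0$ for the vector $\bigl(\norm{A_{1,\cdot}},\dots,\norm{A_{n,\cdot}}\bigr)$. Part~1 then yields two $2\log n$-qubit unitaries $U_1:\ket{i}\ket{0}\mapsto \ket{i}\otimes\tfrac{1}{\norm{A_{i,\cdot}}}\sum_j A_{ij}\ket{j}$ and $U_2:\ket{0}\ket{j}\mapsto \bigl(\tfrac{1}{\norm{A}_F}\sum_i \norm{A_{i,\cdot}}\ket{i}\bigr)\otimes\ket{j}$, and a one-line computation gives $\bra{i}\bra{0}\,U_1^\dagger U_2\,\ket{0}\ket{j} = \tfrac{\norm{A_{i,\cdot}}}{\norm{A}_F}\cdot\tfrac{A_{ij}}{\norm{A_{i,\cdot}}} = \tfrac{A_{ij}}{\norm{A}_F}$, so that (after a fixed relabeling of basis states) $U:=U_1^\dagger U_2$ acts as $A/\norm{A}_F$ on the relevant block --- i.e.\ it is a $(\norm{A}_F, 2\log n)$ block encoding --- and its cost is twice that of part~1, hence $\widetilde{O}(\log n)$. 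Storing instead leaves that hold $|A_{ij}|$ in place of $v_{i,j}^2$ (with $B_0$ holding the absolute row sums) yields, by the identical argument, a block encoding with normalization $s_1(A)$; keeping both variants lets the data structure present whichever of $\norm{A}_F$ and $s_1(A)$ is smaller, and since one in fact stores the spectrally-normalized matrix $A/\norm{A}_2$ the resulting parameter is $\zeta(A)=\norm{A}_2^{-1}\min(\norm{A}_F,s_1(A))$. The single-pass and update claims then follow because processing an entry $(i,j,A_{ij})$ only requires updating the $O(\log n)$ nodes on the leaf-to-root path of $B_i$ and the $O(\log n)$ corresponding nodes of $B_0$, at $O(\log^2 n)$ arithmetic cost per entry once the precision of the stored sums is taken into account.

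Overall, the conceptual core --- the $U_1^\dagger U_2$ identity --- is short; the step I expect to be the main obstacle, and the one the cited works \cite{kerenidis2016quantum,kerenidis2017quantum} carry out in full, is the numerical-precision accounting: showing that finite-precision storage and finite-precision controlled rotations leave the prepared states and the block encoding correct up to error that $\widetilde{O}(\cdot)$ can hide, and verifying that the per-entry update really costs only $O(\log^2 n)$ while keeping all internal-node sums mutually consistent.
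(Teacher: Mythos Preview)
The paper does not prove this theorem; it is stated with citations to \cite{kerenidis2016quantum,kerenidis2017quantum} and used as a black box. There is therefore no ``paper's own proof'' to compare against. Your reconstruction is essentially the standard Kerenidis--Prakash construction from those references: the binary-tree data structure with squared-amplitude sums at internal nodes, top-down controlled rotations for state preparation, and the two-unitary identity $\bra{i}\bra{0}U_1^\dagger U_2\ket{0}\ket{j}=A_{ij}/\norm{A}_F$ (resp.\ $A_{ij}/s_1(A)$) for the block encoding. The precision and update-cost remarks are also the right things to track.

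One point deserves correction. Your explanation for the factor $\norm{A}_2^{-1}$ in $\zeta(A)$ --- ``one in fact stores the spectrally-normalized matrix $A/\norm{A}_2$'' --- does not work: computing $\norm{A}_2$ is not a single-pass operation, and even if you pre-divided by it, the resulting block would still contain $A/\norm{A}_F$ (or $A/s_1(A)$), not $A\cdot\norm{A}_2/\min(\norm{A}_F,s_1(A))$. The quantity $\zeta(A)=\norm{A}_2^{-1}\min(\norm{A}_F,s_1(A))$ should be read as a \emph{normalization overhead}: the construction produces a block encoding whose sub-block is $A/\min(\norm{A}_F,s_1(A))$, and $\zeta(A)$ records how much larger that denominator is than the ideal $\norm{A}_2$. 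This is what gets multiplied against $\kappa$ in Theorem~\ref{qlsa}, where the input matrix is assumed spectrally normalized. So the $\norm{A}_2^{-1}$ is a bookkeeping convention of this paper, not something your data structure achieves; your construction is correct, but that one sentence of justification should be dropped.
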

From now on, we will also refer to storing vectors and matrices in QRAM, meaning that we use the data structure from Theorem \ref{qbe}. Note that this is the same quantum oracle model that has been used to solve SDPs in \cite{kerenidis2018quantum} and \cite{van2019improvements}. 

Once we have these block encodings, we may use them to perform linear algebra. In particular, we want to construct the quantum states $\ket{A\vec{b}}$ and $\ket{A^{-1}\vec{b}}$, corresponding to the matrix-vector product $A\vec{b}$ and the solution of the linear system $A\vec{x} = \vec{b}$:
\begin{theorem}[Quantum linear algebra with block encodings \cite{chakraborty2019power, gilyen2019quantum}] \label{qlsa} 
	Let $A \in \R^{n\times n}$ be a matrix with non-zero eigenvalues in the interval $[-1, -1/\kappa] \cup [1/\kappa, 1]$, and let $\epsilon > 0$. Given an implementation of an $(\zeta, O(\log n))$ block encoding for $A$ in time $T_{U}$ and a procedure for preparing state $\ket{b}$ in time $T_{b}$, 
	\begin{enumerate} 
		\item A state $\epsilon$-close to $\ket{A^{-1} b}$ can be generated in time 
		$O((T_{U} \kappa \zeta+ T_{b} \kappa) \polylog(\kappa \zeta /\epsilon))$. 
		\item A state $\epsilon$-close to $\ket{A b}$ can be generated in time $O((T_{U} \kappa \zeta+ T_{b} \kappa) \polylog(\kappa \zeta /\epsilon))$. 
		\item For  $\mathcal{A} \in \{ A, A^{-1} \}$, an estimate $\Lambda$ such that $\Lambda \in (1\pm \epsilon) \norm{ \mathcal{A} b}$ can be generated in time $O((T_{U}+T_{b}) \frac{ \kappa \zeta}{ \epsilon} \polylog(\kappa \zeta/\epsilon))$. 
	\end{enumerate}
\end{theorem}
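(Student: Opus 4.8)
\emph{Proof idea.} The plan is to assemble the statement from the quantum singular value / eigenvalue transformation (QSVT) framework of \cite{gilyen2019quantum} together with the (variable-time) amplitude amplification and estimation machinery of \cite{chakraborty2019power}; since the block encoding of a general matrix is defined via $\operatorname{sym}(\cdot)$, it suffices to treat symmetric $A$. The starting point is that a $(\zeta, O(\log n))$ block encoding $U$ of $A$ is, by definition, a $(1, O(\log n))$ block encoding of the rescaled matrix $A/\zeta$, which is symmetric with nonzero eigenvalues in $[-1/\zeta, -1/(\kappa\zeta)] \cup [1/(\kappa\zeta), 1/\zeta] \subseteq [-1,1]$. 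Thus for any polynomial $p$ of degree $d$ bounded by $1$ on $[-1,1]$, QSVT (in its eigenvalue-transformation form for Hermitian inputs, which handles the signs of the eigenvalues correctly for odd $p$) produces a $(1, O(\log n))$ block encoding of $p(A/\zeta)$ using $O(d)$ calls to $U$, $U^\dagger$ and controlled single-qubit rotations, i.e. in time $\widetilde{O}(T_U d)$.

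For preparing $\ket{Ab}$ (item 2) I would apply $U$ to $\ket{b}\ket{0}$ (the first register prepared in time $T_b$): the output contains $\frac{1}{\zeta}\ket{Ab}$ flagged by an all-zero ancilla, with success amplitude $\norm{Ab}/(\zeta\norm{b}) \ge 1/(\kappa\zeta)$ since $\norm{Ab}\ge\norm{b}/\kappa$. Amplifying this flagged subspace and uncomputing gives a state $\epsilon$-close to $\ket{Ab}$. For $\ket{A^{-1}b}$ (item 1) I would instead invoke the standard odd polynomial approximation of $1/x$: there is a polynomial of degree $d = O(\kappa\zeta\log(\kappa\zeta/\epsilon))$ approximating $\frac{1}{\kappa\zeta}\cdot\frac{1}{x}$ to error $\epsilon$ on $[1/(\kappa\zeta), 1/\zeta]$ while staying bounded by $1$ on $[-1,1]$; QSVT with this polynomial yields a block encoding of a matrix $\epsilon$-close to $\frac{1}{\kappa} A^{-1}$, which applied to $\ket{b}$ flags $\frac{1}{\kappa} A^{-1}b$ with amplitude $\norm{A^{-1}b}/(\kappa\norm{b})\ge 1/\kappa$ (using $\norm{A}_2\le 1$). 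A naive combination — the degree-$O(\kappa\zeta)$ circuit together with $O(\kappa)$ rounds of amplification — costs $\widetilde{O}(T_U\kappa^2\zeta + T_b\kappa)$; the step that recovers the stated $\widetilde{O}(T_U\kappa\zeta + T_b\kappa)$ is variable-time amplitude amplification, which splits the spectrum into geometrically growing bands, runs a truncated-degree inversion on each, and amplifies in stages so that the total call count is governed by an average rather than a worst-case run length.

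For the norm estimates (item 3) I would replace amplification by amplitude estimation on exactly the same flagged states. For $\mathcal A = A$ the success amplitude is $a = \norm{Ab}/(\zeta\norm{b}) \ge 1/(\kappa\zeta)$, so estimating $a$ to additive error $\Theta(\epsilon/(\kappa\zeta)) \le \epsilon a$ gives $\norm{Ab}$ to relative error $\epsilon$ and costs $O(\kappa\zeta/\epsilon)$ Grover iterations, each of cost $\widetilde{O}(T_U + T_b)$, matching the claim. For $\mathcal A = A^{-1}$ one estimates the amplitude $\ge 1/\kappa$ of the QSVT-inverted state; since a naive estimate multiplies $O(\kappa/\epsilon)$ estimation rounds by the degree-$O(\kappa\zeta)$ inversion circuit, the linear-in-$\kappa$ bound again requires a variable-time version of amplitude estimation. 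Throughout, the polynomial-truncation error, the QSVT error and the estimation error are set to small fixed fractions of $\epsilon$, contributing only the extra $\polylog(\kappa\zeta/\epsilon)$ factors.

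I expect the main obstacle to be exactly the variable-time amplification/estimation used for $A^{-1}$: to get $\kappa\zeta$ rather than $\kappa^2\zeta$ one must partition the spectrum of $A/\zeta$ into geometrically increasing bands, build for each band a QSVT inverse whose degree and error are tuned to that band, and then invoke the variable-time theorem with careful bookkeeping of the per-stage success probabilities and stopping times, checking that the error accumulated over all bands plus the polynomial and block-encoding errors still sums to $\epsilon$. The remaining ingredients — QSVT itself, the explicit degree bound for the $1/x$ polynomial, and ordinary amplitude estimation — are standard and only need to be cited and composed.
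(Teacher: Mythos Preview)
The paper does not prove this theorem; it is stated as a known result imported verbatim from \cite{chakraborty2019power,gilyen2019quantum}, with no proof or proof sketch given. So there is nothing in the paper to compare your argument against.

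That said, your reconstruction is essentially the argument of those references: QSVT on the rescaled block encoding to implement the desired matrix function, followed by (variable-time) amplitude amplification or estimation on the flagged output. You have correctly isolated the only nontrivial point, namely that a naive combination of the degree-$O(\kappa\zeta)$ inversion polynomial with $O(\kappa)$ rounds of amplification costs $\widetilde{O}(\kappa^2\zeta)$ and that the variable-time machinery of \cite{chakraborty2019power} is what brings this down to $\widetilde{O}(\kappa\zeta)$.

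One small gap worth flagging: for item~2 your analysis (one application of $U$ followed by $O(\kappa\zeta)$ rounds of ordinary amplification) yields $O((T_U+T_b)\kappa\zeta)$, i.e.\ a $T_b\kappa\zeta$ term rather than the stated $T_b\kappa$. The decoupling of $T_b$ from the full $\kappa\zeta$ factor in items~1 and~2 is again a consequence of the variable-time amplification framework of \cite{chakraborty2019power}, where the state-preparation unitary is called once per stage while the block-encoding unitary is called a number of times that depends on the stage; you invoke this for $A^{-1}$ but should do so for $A$ as well to match the stated bound.
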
  
\noindent Finally, in order to recover classical information from the outputs of a linear system solver, we require an efficient procedure for \emph{quantum state tomography}. The tomography procedure 
is linear in the dimension of the quantum state. 
\begin{theorem}[Efficient vector state tomography, \cite{kerenidis2018quantum}]\label{vector state tomography}
	There exists an algorithm that given a procedure for constructing $\ket{\vec{x}}$ (i.e. a unitary mapping $U:\ket{0} \mapsto \ket{\vec{x}}$ \change{and its controlled version} in time $T_U$) and precision $\delta > 0$ produces an estimate $\measured{\vec{x}} \in \R^n$ with $\norm{\measured{\vec{x}}} = 1$ such that $\norm{\vec{x} - \measured{\vec{x}}} \leq \sqrt{7} \delta$ with probability at least $(1 - 1/n^{0.83})$. The algorithm runs in time $O\left( T_U \frac{n \log n}{\delta^2}\right)$.
\end{theorem}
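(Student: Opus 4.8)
The plan is to prove the theorem by the standard two‑phase pure‑state tomography procedure: first estimate the magnitudes $|x_i|$ of the amplitudes of $\ket{\vec{x}} = \sum_i x_i \ket{i}$ by sampling in the computational basis, and then fix the signs $\operatorname{sign}(x_i)$ by a Hadamard‑test‑style measurement that uses the \emph{controlled} version of $U$. In both phases we consume $N = O(n \log n / \delta^2)$ copies of $\ket{\vec{x}}$, i.e.\ $N$ applications of $U$ (resp.\ controlled‑$U$), which immediately yields the claimed running time $O\!\left(T_U n \log n / \delta^2\right)$, since the remaining work --- classical post‑processing of the measurement outcomes and the QRAM state preparations --- costs only $\widetilde{O}(1)$ per sample.

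\emph{Phase 1 (magnitudes).} Measure $\ket{\vec{x}}$ in the standard basis $N$ times and let $\widehat{p}_i$ be the empirical frequency of outcome $i$, so $\widehat{p}_i$ estimates $p_i := x_i^2$. The goal is $\sum_i \left(\sqrt{\widehat{p}_i} - \sqrt{p_i}\right)^2 \le c_1 \delta^2$ with high probability. I would split the coordinates at a threshold $\tau = \Theta(1/n)$: for the ``heavy'' coordinates $p_i \ge \tau$, a multiplicative Chernoff bound shows $\bigl|\sqrt{\widehat{p}_i / p_i} - 1\bigr|$ is small once $N \gtrsim (\log n)/(\tau \delta^2)$, and a union bound over these (at most $n$) coordinates --- with the sample count tuned so that each fails with probability at most $n^{-1.83}$ --- gives total failure probability at most $n^{-0.83}$, which is exactly the stated success probability. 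For the ``light'' coordinates $p_i < \tau$, one instead bounds their aggregate mass directly: both $\sum_{p_i < \tau} p_i$ and $\sum_{p_i < \tau} \widehat{p}_i$ are $O(\delta^2)$ with high probability, hence so is $\sum_{p_i < \tau} \bigl(\sqrt{\widehat{p}_i} - \sqrt{p_i}\bigr)^2$.

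\emph{Phase 2 (signs).} Using a QRAM populated with the Phase‑1 estimates, prepare $\frac{1}{\sqrt{2}}\bigl(\ket{0}\ket{\vec{x}} + \ket{1}\ket{\widehat{\vec{p}}}\bigr)$, where $\ket{\widehat{\vec{p}}} := \sum_i \sqrt{\widehat{p}_i}\,\ket{i}$ --- one Hadamard on a control qubit, then controlled‑$U$ on the first branch and a controlled QRAM state preparation on the second. Measuring the $\ket{i}$‑register leaves the control qubit proportional to $x_i \ket{0} + \sqrt{\widehat{p}_i}\ket{1}$, and measuring that qubit in the $\{\ket{+}, \ket{-}\}$ basis returns $\ket{+}$ with probability $(x_i + \sqrt{\widehat{p}_i})^2 / 2(x_i^2 + \widehat{p}_i)$, close to $1$ when $\operatorname{sign}(x_i) = +$ and close to $0$ otherwise. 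The same batch of $N$ measurements produces $\approx N p_i$ samples for coordinate $i$; whenever $p_i$ is not tiny this is $\gtrsim \log n$ samples, enough to recover $\operatorname{sign}(x_i)$ by majority with failure probability $n^{-\Omega(1)}$, while for the remaining tiny coordinates a wrong sign costs only $(2x_i)^2 = 4 p_i$ in squared error, and these sum to $O(\delta^2)$. Setting $\measured{\vec{x}}_i := \operatorname{sign}_i \cdot \sqrt{\widehat{p}_i}$ --- which is automatically a unit vector because the empirical frequencies sum to $1$ --- the triangle inequality combines the Phase‑1 magnitude error with the sign‑flip error to give $\norm{\vec{x} - \measured{\vec{x}}} \le \sqrt{7}\,\delta$ once the constants in $\tau$, $N$ and the two error budgets are chosen appropriately.

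\emph{Main obstacle.} The delicate step is the error analysis for the small‑amplitude coordinates. One must pick $\tau$ and $N$ so that simultaneously (i) every heavy coordinate's magnitude is estimated to small multiplicative error after a union bound over all $n$ coordinates, with the accounting tuned to produce the $1 - n^{-0.83}$ success probability, and (ii) the combined $\ell_2$ mass contributed by light coordinates --- both from imperfect magnitudes and from possibly incorrect signs --- stays below a fixed constant times $\delta^2$. Tracking the explicit constants through (i) and (ii) is what pins the final bound at exactly $\sqrt{7}\,\delta$; the rest is routine concentration and bookkeeping.
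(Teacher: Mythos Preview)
The paper does not prove this theorem: it is quoted verbatim from \cite{kerenidis2018quantum} and used as a black box, so there is no in-paper proof to compare against. Your two-phase outline (empirical magnitudes from computational-basis sampling, then sign recovery via a controlled-$U$ interference test against the QRAM-prepared state $\ket{\widehat{\vec{p}}}$) is indeed the method of the cited reference, and the runtime accounting is right.

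That said, your Phase~1 error analysis has a genuine gap. With threshold $\tau = \Theta(1/n)$, the claim that $\sum_{p_i < \tau} p_i = O(\delta^2)$ is false in general: take the uniform distribution $p_i = 1/n$, where the light set can carry total mass $\Theta(1)$. The correct threshold is $\tau = \Theta(\delta^2/n)$; then $\sum_{p_i < \tau} p_i \le n\tau = O(\delta^2)$ by counting, and for each heavy coordinate one has $N p_i \ge \Theta(\log n)$ expected hits, which is what the multiplicative Chernoff bound actually needs. After this fix the heavy-coordinate bound becomes $|\sqrt{\widehat{p}_i} - \sqrt{p_i}| = O(|\widehat{p}_i - p_i|/\sqrt{p_i}) = O(\sqrt{\log n/N})$, and summing the squares over at most $n$ heavy coordinates gives $O(n\log n/N) = O(\delta^2)$ as desired. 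The rest of your sketch (sign flips on tiny coordinates contributing $4p_i$ each, empirical frequencies automatically normalizing $\measured{\vec{x}}$) goes through once this is repaired.
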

Of course, repeating this algorithm $\widetilde{O}(1)$ times allows us to increase the success probability to at least $1 - 1/\poly(n)$. Putting Theorems \ref{qbe}, \ref{qlsa} and \ref{vector state tomography} together, assuming that $A$ and $\vec{b}$ are already in QRAM, we obtain that the complexity of a completely self-contained algorithm for solving the system $A\vec{x} = \vec{b}$ with error $\delta$ is $\widetilde{O} \left( n \cdot \frac{\kappa \zeta}{\delta^2} \right)$. For well-conditioned matrices, this presents a significant improvement over $O(n^\omega)$ (or, in practice, $O(n^3)$) needed for solving linear systems classically, especially when $n$ is large and the desired precision is not too high. This can be compared with classical iterative linear algebra algorithms \cite{saad2003iterative} that have $O(\operatorname{nnz}(A))$ complexity per iteration, as well as the new quantum-inspired solvers \cite{gilyen2018quantum} that have high-degree polynomial dependence on the rank, error, and the condition number.

\section{A quantum interior-point method}
Having introduced the classical IPM for SOCP, we can finally introduce our quantum IPM (Algorithm \ref{alg:qipm}). The main idea is to use quantum linear algebra as much as possible (including solving the Newton system -- the most expensive part of each iteration), and falling back to classical computation when needed. Since quantum linear algebra introduces inexactness, we need to deal with it in the analysis. The inspiration for the ``quantum part'' of the analysis is \cite{kerenidis2018quantum}, whereas the ``classical part'' is based on \cite{monteiro2000polynomial}. Nevertheless, the SOCP analysis is unique in many aspects and a number of hurdles had to be overcome to the make the analysis go through. In the rest of the section, we give a brief introduction of the most important quantum building blocks we use, as well as present a sketch of the analysis.

\begin{algorithm} 
	\caption{A quantum IPM for SOCP} \label{alg:qipm} 
		\textbf{Require:} Matrix $A$ and vectors $\vec{b}, \vec{c}$ in QRAM, precision parameter $\epsilon$\\
		\begin{enumerate} 
			\item Find feasible initial point $(\vec{x}, \vec{y}, \vec{s}, \mu)$ and store it in QRAM.
			\item Repeat the following steps for $O(\sqrt{r}\log(\mu_0/\epsilon))$ iterations:
			\begin{enumerate}
				\item Compute the vector $\sigma \mu \vec{e} - \vec{x} \circ \vec{s}$ classically and store it in QRAM. 
				\item Prepare and update the block encodings of the LHS and the RHS of the Newton system \eqref{eq:Newton system}						
				\item Solve the Newton system to obtain $\ket{(\dx;\dy;\ds)}$, and obtain a classical approximate solution $\measured{\left( \dx ; \dy ; \ds \right)}$ using tomography.
				\item Update $\vec{x} \gets \vec{x} + \dxm$, $\vec{s} \gets \vec{s} + \dsm$ and store in QRAM.
				\item Update $\mu \gets \frac1r \vec{x}^\top \vec{s}$. 
			\end{enumerate} 
			\item Output $(\vec{x}, \vec{y}, \vec{s})$.
		\end{enumerate}
\end{algorithm}

First, we note that the algorithms from the previous section allow us to ``forget'' that Algorithm~\ref{alg:qipm} is quantum, and treat it as a small modification of the classical IPM, where the system \eqref{eq:Newton system} is solved up to an $\ell_2$-error $\delta$. Since Algorithm~\ref{alg:qipm} is iterative, the main part of the analysis is proving that a single iteration preserves closeness to the central path, strict feasibility, and improves the duality gap. In the remainder of this section, we state our main results informally, while the exact statements and proofs of all claims can be found in the supplementary material.

\begin{theorem}[Per-iteration correctness, informal] \label{thm:main} 
	Let $(\vec{x}, \vec{y}, \vec{s})$ be a strictly feasible primal-dual solution that is close to the central path, with duality gap $\mu$, and at distance at least $\delta$ from the boundary of $\lorentz$. Then, the Newton system \eqref{eq:Newton system} has a unique solution $(\dx, \dy, \ds)$. There exist positive constants $\xi, \alpha$ such that the following holds: If we let $\dxm, \dsm$ be approximate solutions of \eqref{eq:Newton system} that satisfy
	\[
	\norm{\dx - \dxm}_F \leq \xi \delta \text{ and }
	\norm{\ds - \dsm}_F \leq \xi \delta,
	\] 
	and let $\xnext := \vec{x} + \dxm$ and $\snext := \vec{s} + \dsm$ be the updated solution, then:
	\begin{enumerate}
		\item The updated solution is strictly feasible, i.e. $\xnext \in \interior \lorentz$ and $\snext \in \interior \lorentz$.
		\item The updated solution is close to the central path, and the new duality gap is less than $(1-\alpha/\sqrt{r})\mu$.
	\end{enumerate}
\end{theorem}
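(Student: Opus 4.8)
The plan is to follow the standard short-step IPM template, but carefully track how the tomography error $\xi\delta$ propagates through each step. First I would establish existence and uniqueness of the exact Newton step $(\dx,\dy,\ds)$: the coefficient matrix in \eqref{eq:Newton system} is invertible precisely when $\Arw(\vec{x})$ and $\Arw(\vec{s})$ are simultaneously nonsingular in the appropriate sense, which follows from strict feasibility (all eigenvalues of $\vec{x}$ and $\vec{s}$ are at least $\delta>0$). A convenient device here is the Nesterov--Todd scaling: introduce the scaling point $\vec{w}$ with $Q_{\vec{w}}\vec{s} = \vec{x}$ (or equivalently work with $T_{\vec{w}}$), which symmetrizes the system so that $\dxs := T_{\vec{w}}^{-1}\dx$ and $\dss := T_{\vec{w}}\ds$ satisfy a system whose third block reads $\dxs + \dss = T_{\vec{w}}^{-1}(\sigma\mu\vec{e} - \vec{x}\circ\vec{s})$ with the scaled iterate $\vec{v} := T_{\vec{w}}^{-1}\vec{x} = T_{\vec{w}}\vec{s}$ satisfying $\vec{v}^2 = \vec{x}\circ\vec{s}$ up to the usual identities. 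In the scaled space the analysis reduces to controlling $\norm{\dxs}_F$ and $\norm{\dss}_F$ in terms of $\mu$ and the proximity parameter of the neighborhood $\mathcal N$.

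Next I would bound the exact step. Using that $(\vec{x},\vec{y},\vec{s})$ is feasible (so the first two blocks of the RHS vanish) and close to the central path, one gets $\norm{\dxs}_F, \norm{\dss}_F = O(\sqrt{\mu})$, with the constant depending on the neighborhood radius and on $\sigma$ chosen as $1 - \Theta(1/\sqrt{r})$. This is the place where Proposition \ref{claim:Properties of Qx} (the norm and inverse identities for $Q_{\vec{x}}$, $T_{\vec{x}}$) and Claim \ref{claim:algebraic properties} (sub-multiplicativity, the minimum-eigenvalue bound) do the heavy lifting: they let us pass between the original and scaled norms while only losing factors of $\norm{T_{\vec{w}}}_2$, which is controlled by $\mu$ and the distance $\delta$ to the boundary. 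Then I would write $\xnext = \vec{x} + \dxm = \vec{x} + \dx + (\dxm - \dx)$ and treat $\dxm - \dx$ as a perturbation of size at most $\xi\delta$; by Claim \ref{claim:algebraic properties}(4), $\lambda_{\min}(\xnext) \geq \lambda_{\min}(\vec{x}+\dx) - \norm{\dxm-\dx}_2 \geq \lambda_{\min}(\vec{x}+\dx) - \xi\delta$. So it suffices to show $\lambda_{\min}(\vec{x}+\dx)$ is bounded below by, say, $2\xi\delta$, which is exactly the strict-feasibility statement for the \emph{exact} short step and is where the choice of the constant $\xi$ (small enough relative to the neighborhood constants) gets pinned down. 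The same argument applies symmetrically to $\snext$.

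For part 2, the duality gap of the updated point is $\mu_{\text{next}} = \frac1r \xnext^\top\snext$. Expanding, $\xnext^\top\snext = (\vec{x}+\dx)^\top(\vec{s}+\ds) + (\dxm-\dx)^\top\vec{s} + \vec{x}^\top(\dsm-\ds) + (\text{second-order error terms})$. The main term $(\vec{x}+\dx)^\top(\vec{s}+\ds)$ equals $\sigma\mu r$ up to the linearization error $\dx^\top\ds$, which is $O(\mu)$ and in fact sign-controlled in the scaled space ($\dxs{}^\top\dss$); choosing $\sigma = 1 - \alpha/\sqrt r$ for a suitable $\alpha$ makes the main term at most $(1-\alpha/\sqrt r)\mu r$ with room to spare. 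The error terms are each $O(\xi\delta\sqrt{\mu})$ (using Cauchy--Schwarz, $\norm{\vec{s}} = O(\sqrt{r\mu})$ near the central path, and $\norm{\dxm-\dx} \leq \xi\delta$), so taking $\xi$ small enough and using $\delta \leq O(\sqrt\mu)$ (the iterate is on the central path side, so its distance to the boundary and its scale are comparable) absorbs them into the slack. Re-establishing membership in $\mathcal N$ is analogous: one shows $\norm{T_{\vecrest{w}_{\text{next}}}^{-1}\xnext \circ T_{\vecrest{w}_{\text{next}}}\snext - \mu_{\text{next}}\vec{e}}_F$ stays within the neighborhood radius, again splitting into the exact-step contribution (handled as in the classical proof of \cite{monteiro2000polynomial}) plus perturbations bounded by $\xi\delta$ times benign factors.

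The main obstacle, I expect, is not any single inequality but the bookkeeping of the perturbation analysis under the Nesterov--Todd scaling: the scaling operator $T_{\vec{w}}$ itself depends on the current iterate, so the ``error of size $\xi\delta$'' in the original space becomes an error of size $\xi\delta \cdot \norm{T_{\vec{w}}^{\pm 1}}_2$ in the scaled space, and one must check that these factors do not blow up — i.e. that $\norm{T_{\vec{w}}}_2$ and $\norm{T_{\vec{w}}^{-1}}_2$ are both polynomially controlled by $1/\delta$ and $\mu$ throughout. Getting the chain of constants $\xi, \alpha$ and the neighborhood radius mutually consistent (each constraint on $\xi$ must be compatible with the duality-gap decrease governed by $\alpha$) is the delicate part; everything else is a translation of the classical argument through the Jordan-algebra dictionary set up in the Preliminaries.
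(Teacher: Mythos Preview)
Your three-stage outline (scale, bound the exact step, treat the tomography error as a perturbation) matches the paper's structure, but the scaling you propose does not mesh with the Newton system \eqref{eq:Newton system} as written, and this is where your plan would stall. You reach for the Nesterov--Todd scaling point $\vec{w}$ and assert that the third block becomes $\dxs + \dss = T_{\vec{w}}^{-1}(\sigma\mu\vec{e} - \vec{x}\circ\vec{s})$. That identity holds when the Newton system is itself the NT system, but \eqref{eq:Newton system} linearizes $\vec{x}\circ\vec{s}$ directly (the ``AHO''-type direction), and since $\circ$ does not commute with $T_{\vec{w}}$, substituting $\dx = T_{\vec{w}}\dxs$, $\ds = T_{\vec{w}}^{-1}\dss$ into $\vec{s}\circ\dx + \vec{x}\circ\ds$ does not collapse to $\vec{v}\circ(\dxs+\dss)$. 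The paper instead uses the \emph{primal} scaling $\xs := T_{\vec{x}}^{-1}\vec{x} = \vec{e}$, $\ss := \mu^{-1}T_{\vec{x}}\vec{s}$, under which the third equation becomes $\dss = \sigma\vec{e} - \ss - \mu^{-1}R_{xs}\dxs$ with $R_{xs} := T_{\vec{x}}\Arw(\vec{x})^{-1}\Arw(\vec{s})T_{\vec{x}}$; the entire analysis then rests on the bound $\norm{R_{xs} - \mu I} \leq 3\eta\mu$ (Claims \ref{claim:R_xs bound} and \ref{claim:dss expression}, borrowed from \cite{monteiro2000polynomial}). This $R_{xs}$ device is the missing idea in your plan.

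The paper also handles the approximation error more cleanly than the bookkeeping you anticipate. Rather than carrying $\xi\delta$ through the original space and then fighting factors of $\norm{T_{\vec{w}}^{\pm1}}$, the formal hypothesis is calibrated so that the \emph{scaled} error is exactly the constant $\xi$: one requires $\norm{\dx-\dxm}_F \leq \xi/\norm{T_{\vec{x}^{-1}}}$ and $\norm{\ds-\dsm}_F \leq \xi/(2\norm{T_{\vec{s}^{-1}}})$, and part 3 of Lemma \ref{lemma:Properties of the central path} (which relates $\mu^{-1}\norm{\vec{x}}_2$ to $\norm{\vec{s}^{-1}}_2$) then gives $\norm{\dxs-\dxms}_F,\norm{\dss-\dsms}_F\leq\xi$. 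All three lemmas (strict feasibility, proximity, gap decrease) are proved entirely in the scaled space with fixed numerical constants $\eta=\chi=0.01$, $\xi=0.001$, and no scaling operator reappears; the ``main obstacle'' you flag is sidestepped rather than confronted. Finally, your direct expansion of $\xnext^\top\snext$ is workable, but the paper takes a shorter route: it first proves $d(\xsnext,\ssnext,\sigma)\leq 0.005\sigma$ via a three-term splitting of $\ssnext - \measured{\sigma}(\xsnext)^{-1}$, and then reads off the gap decrease from part 1 of Lemma \ref{lemma:Properties of the central path}.
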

The proof of this theorem consists of 3 main parts:
\begin{enumerate}
	\item Rescaling $\vec{x}$ and $\vec{s}$ so that they commute in the Jordan-algebraic sense \cite{alizadeh2003second}. This part can be reused from the classical analysis \cite{monteiro2000polynomial}.
	\item Bounding the norms of $\dxm$ and $\dsm$, and proving that $\vec{x} + \dxm$ and $\vec{s} + \dsm$ are still strictly feasible (in the sense of belonging to $\interior \lorentz$). This part of the analysis is also inspired by the classical analysis, but it has to take into account the inexactness of the Newton system solution.
	\item Proving that the new solution $(\vec{x} + \dxm, \vec{y} + \dym, \vec{s} + \dsm)$ is in the neighborhood of the central path, and the duality gap/central path parameter have decreased by a factor of $1 - \alpha/\sqrt{r}$, where $\alpha$ is constant. This part is the most technical, and while it is inspired by \cite{kerenidis2018quantum}, it required using many of the Jordan-algebraic tools from \cite{alizadeh2003second, monteiro2000polynomial}.
\end{enumerate}
Theorem~\ref{thm:main} formalizes the fact that the Algorithm~\ref{alg:qipm} has the same iteration invariant as the classical IPM. Since the duality gap is reduced by the same factor in both algorithms, their iteration complexity is the same, and a simple calculation shows that they need $O(\sqrt{r})$ iterations to halve the duality gap. On the other hand, the cost of each iteration varies, since the complexity of the quantum linear system solver depends on the precision $\xi\delta$, the condition number $\kappa$ of the Newton matrix, as well as its $\zeta$-parameter. While exactly bounding these quantities is the subject of future research, it is worth noting that for $\zeta$, we have the trivial bound $\zeta \leq \sqrt{n}$, and research on iterative linear algebra methods \cite{dollar2005iterative} suggests that $\kappa = O(1/\mu) = O(1/\epsilon)$. The final complexity is summarized in the following theorem:
\begin{theorem}\label{thm:runtime}
	Let \eqref{prob:SOCP primal} be a SOCP with $A \in \R^{m\times n}$, $m \leq n$, and $\lorentz = \lorentz^{n_1} \times \cdots \times \lorentz^{n_r}$. Then, Algorithm \ref{alg:qipm} achieves duality gap $\epsilon$ in time
	\begin{equation*}
	T = \widetilde{O} \left( \sqrt{r} \log\left( \mu_0 / \epsilon \right) \cdot \frac{n \kappa \zeta}{\delta^2}\log\left( \frac{\kappa \zeta}{\delta} \right)  \right),
	\end{equation*}
	where the $\widetilde{O}(\cdot)$ notation hides the factors that are poly-logarithmic in $n$ and $m$.
\end{theorem}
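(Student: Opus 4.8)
The plan is to assemble Theorem~\ref{thm:runtime} by combining the iteration count coming from Theorem~\ref{thm:main} with a per-iteration cost obtained from the quantum linear algebra primitives (Theorems~\ref{qbe}, \ref{qlsa} and~\ref{vector state tomography}). First I would recall that Theorem~\ref{thm:main} guarantees that, as long as the Newton system \eqref{eq:Newton system} is solved to Frobenius-norm error $\xi\delta$ in each step, the duality gap contracts by a factor $(1-\alpha/\sqrt{r})$ per iteration; a standard calculation then shows that $O(\sqrt{r}\log(\mu_0/\epsilon))$ iterations suffice to bring the gap from $\mu_0$ down to $\epsilon$. This fixes the first two factors of $T$. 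It remains to bound the cost of one iteration of Algorithm~\ref{alg:qipm}, which is dominated by steps 2(b)--2(c): preparing/updating the block encodings of the Newton matrix and RHS, running the quantum linear system solver, and performing tomography on the output state.

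Next I would account for the classical bookkeeping steps 2(a), 2(d), 2(e): computing $\sigma\mu\vec{e}-\vec{x}\circ\vec{s}$, updating $\vec{x},\vec{s}$ and recomputing $\mu=\frac1r\vec{x}^\top\vec{s}$ all touch $O(n)$ entries, and by Theorem~\ref{qbe} writing an $n$-vector into QRAM and updating the corresponding block encoding costs $\widetilde{O}(1)$ per entry, so these contribute only $\widetilde{O}(n)$, which is absorbed. For step 2(b), the Newton matrix in \eqref{eq:Newton system} is a fixed sparse arrangement of $A$, $A^\top$, $I$, $\Arw(\vec{s})$ and $\Arw(\vec{x})$; since $\Arw(\vec{x}),\Arw(\vec{s})$ change only through the $O(n)$ entries of $\vec{x},\vec{s}$, the block encoding can be maintained with $\widetilde{O}(1)$ updates per changed entry, and the relevant $\zeta$-parameter of the assembled matrix is the quantity called $\zeta$ in the statement (with the trivial bound $\zeta\le\sqrt{n}$ from Theorem~\ref{qbe}, since $\zeta(M)=\norm{M}_2^{-1}\min(\norm{M}_F,s_1(M))$ and $\norm{M}_F\le\sqrt{n}\norm{M}_2$). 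For step 2(c) I would invoke Theorem~\ref{qlsa}(1) to produce a state $O(\epsilon')$-close to $\ket{(\dx;\dy;\ds)}$ in time $\widetilde{O}(\kappa\zeta\cdot\polylog(\kappa\zeta/\epsilon'))$ (with $T_U,T_b=\widetilde{O}(1)$), and then Theorem~\ref{vector state tomography} to extract a classical estimate of $\ell_2$-error $O(\epsilon')$ in time $\widetilde{O}\!\big(\frac{n}{\epsilon'^2}\cdot\kappa\zeta\,\polylog(\kappa\zeta/\epsilon')\big)$, the condition-number/$\zeta$ factor entering because the unitary $U$ fed to the tomography routine is itself the output of the linear system solver. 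Taking $\epsilon'=\Theta(\xi\delta)$ with $\xi$ a constant and noting that the tomography error must be converted from "unit-norm estimate" to "error relative to $\norm{(\dx;\dy;\ds)}$" — which only changes $\epsilon'$ by a problem-independent factor once the solution norm is estimated via Theorem~\ref{qlsa}(3) — yields a per-iteration cost of $\widetilde{O}\!\big(\frac{n\kappa\zeta}{\delta^2}\log(\kappa\zeta/\delta)\big)$. Multiplying by the iteration count gives exactly the claimed $T$.

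The main obstacle I anticipate is the bookkeeping around \emph{which} error parameter feeds into which subroutine and how the errors compose across an iteration: Theorem~\ref{thm:main} demands a Frobenius-norm guarantee $\norm{\dx-\dxm}_F\le\xi\delta$ and $\norm{\ds-\dsm}_F\le\xi\delta$ on the \emph{components} of the solution, whereas the quantum solver gives an $\ell_2$-error on the \emph{concatenated normalized} state $\ket{(\dx;\dy;\ds)}$, so I need that (i) restricting a normalized vector to a sub-block and rescaling by the (estimated) norm of that block preserves the error up to constants, (ii) the state-preparation error from Theorem~\ref{qlsa}(1) and the sampling error from Theorem~\ref{vector state tomography} add rather than multiply, and (iii) the norm estimate from Theorem~\ref{qlsa}(3) is accurate enough that the rescaling step does not blow up the error — all of which hold but must be tracked carefully to land precisely on $\delta^{-2}$ and not, say, $\delta^{-4}$. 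A secondary, more mechanical point is verifying that the $\polylog(\kappa\zeta/\delta)$ factors and the failure probabilities (amplified to $1-1/\poly(n)$ by $\widetilde{O}(1)$ repetitions, then union-bounded over $O(\sqrt{r}\log(\mu_0/\epsilon))$ iterations) are all genuinely swallowed by the $\widetilde{O}(\cdot)$, which the theorem explicitly declares hides factors polylogarithmic in $n$ and $m$; I would state this explicitly and then cite the per-iteration analysis for the constant $\xi$.
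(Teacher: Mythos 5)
Your proposal follows essentially the same route as the paper: the iteration count $O(\sqrt{r}\log(\mu_0/\epsilon))$ comes from Theorem~\ref{thm:main}, and the per-iteration cost is dominated by $n$-dimensional tomography (Theorem~\ref{vector state tomography}) at precision $\delta$ applied to the output of the block-encoded linear system solver (Theorems~\ref{qbe},~\ref{qlsa}), giving $\widetilde{O}(n\kappa\zeta/\delta^2)$ per iteration and the stated product. The paper's own proof is terser (essentially one sentence: ``product of the number of iterations and the cost of $n$-dimensional vector tomography with error $\delta$''), so the additional bookkeeping you carry out about norm estimation, error composition, and failure-probability amplification is consistent with, and fills in details the paper leaves implicit.
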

Finally, the quality of the resulting (classical) solution is characterized by the following theorem:
\begin{theorem}\label{thm:feasibility}
	Let \eqref{prob:SOCP primal} be a SOCP as in Theorem \ref{thm:runtime}. Then, after $T$ iterations, the (linear) infeasibility of the final iterate $\vec{x}, \vec{y}, \vec{s}$  is bounded as
	\begin{align*}
	\norm{A\vec{x} - \vec{b}} &\leq \delta\norm{A},\\
	\norm{A^\top \vec{y} + \vec{s} - \vec{c}} &\leq \delta \left( \norm{A} + 1 \right).
	\end{align*}
\end{theorem}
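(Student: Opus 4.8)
The plan is to track, over a single generic iteration, how far the iterate is from satisfying the linear equality constraints, and to observe that this ``linear infeasibility'' is reset at every step rather than accumulated. Suppose an iteration starts from $(\vec{x}, \vec{y}, \vec{s})$ and set $\vec{r}_p := A\vec{x} - \vec{b}$ and $\vec{r}_d := A^\top\vec{y} + \vec{s} - \vec{c}$. Inspecting \eqref{eq:Newton system}, the first two block-rows of the Newton system have right-hand sides $\vec{b} - A\vec{x} = -\vec{r}_p$ and $\vec{c} - \vec{s} - A^\top\vec{y} = -\vec{r}_d$, so the \emph{exact} Newton direction satisfies $A\dx = -\vec{r}_p$ and $A^\top\dy + \ds = -\vec{r}_d$.

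Substituting the inexact update $\xnext = \vec{x} + \dxm$, $\ynext = \vec{y} + \dym$, $\snext = \vec{s} + \dsm$ into the definitions of the new residuals and using these two identities yields
\begin{align*}
A\xnext - \vec{b} &= \vec{r}_p + A\dxm = \vec{r}_p + A\dx + A(\dxm - \dx) = A(\dxm - \dx), \\
A^\top\ynext + \snext - \vec{c} &= \vec{r}_d + A^\top\dym + \dsm = A^\top(\dym - \dy) + (\dsm - \ds).
\end{align*}
The crucial point is that $\vec{r}_p$ and $\vec{r}_d$ cancel: the residual of the \emph{new} iterate depends only on the inexactness of the current linear-system solve, not on the accumulated errors of all previous iterations. (The starting point supplied by the self-dual embedding of Section~\ref{sec:socp} is exactly feasible, but this is not even needed, since the identities above hold verbatim at every step.) Applying them to the final iteration --- and writing $\vec{x}, \vec{y}, \vec{s}$ for the final iterate and $\dx, \dy, \ds$ for the last Newton direction --- gives $A\vec{x} - \vec{b} = A(\dxm - \dx)$ and $A^\top\vec{y} + \vec{s} - \vec{c} = A^\top(\dym - \dy) + (\dsm - \ds)$.

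It then remains to insert the solver's error guarantees. By item~3 of Claim~\ref{claim:algebraic properties}, $\norm{A\vec{z}} \le \norm{A}_2\norm{\vec{z}}$ for every block-vector $\vec{z}$ (and the analogous elementary bound holds for $A^\top$), while Theorem~\ref{thm:main} guarantees $\norm{\dx - \dxm} \le \norm{\dx - \dxm}_F \le \xi\delta$, and likewise $\norm{\dy - \dym}, \norm{\ds - \dsm} \le \xi\delta$ since $\dym$ and $\dsm$ are components of the single tomographed vector $\measured{(\dx;\dy;\ds)}$. Running tomography to a slightly higher precision never hurts the per-iteration analysis, so we may assume $\xi \le 1$; then
\begin{align*}
\norm{A\vec{x} - \vec{b}} &= \norm{A(\dxm - \dx)} \le \delta\norm{A}, \\
\norm{A^\top\vec{y} + \vec{s} - \vec{c}} &\le \norm{A}_2\norm{\dym - \dy} + \norm{\dsm - \ds} \le \delta(\norm{A} + 1),
\end{align*}
which are exactly the claimed bounds (here $\norm{A} = \norm{A}_2 = \sigma_{\max}(A)$).

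I expect the only genuinely delicate point to be the estimate $\norm{\dx - \dxm}_F \le \xi\delta$ used above: state tomography (Theorem~\ref{vector state tomography}) together with the norm estimation of Theorem~\ref{qlsa} naturally yields a \emph{relative} error, of order $\delta$ times the Euclidean norm of the true Newton direction $(\dx;\dy;\ds)$, and turning this into the stated absolute Frobenius-norm bound requires the a priori bound on $\norm{(\dx;\dy;\ds)}$ that is established inside the proof of Theorem~\ref{thm:main}. Once that ingredient is in place the present argument is immediate. A minor consistency check along the way is that Theorem~\ref{thm:main} remains applicable when the current iterate is only $O(\delta)$-feasible for the linear constraints instead of exactly feasible; the non-accumulation identity just derived shows this infeasibility never exceeds $\delta(\norm{A}+1)$, so the hypothesis is maintained throughout the run.
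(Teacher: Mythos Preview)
Your proof is correct and follows essentially the same argument as the paper: both derive the identity $A\vec{x}_T - \vec{b} = A(\dxm_T - \dx_T)$ (the paper via an explicit telescoping sum over all iterations, you via a cleaner one-step recursion showing the primal and dual residuals are reset at each iteration) and then bound the norms by $\delta\norm{A}$ and $\delta(\norm{A}+1)$. The caveats you raise about relative-versus-absolute tomography error and the approximate linear feasibility of intermediate iterates are not addressed in the paper's proof either; it simply takes $\norm{\dxm_T - \dx_T} \le \delta$ as given from the tomography precision.
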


\section{Technical results}
In this section, we present our main technical results -- the proofs of Theorems~\ref{thm:main}, \ref{thm:runtime} and~\ref{thm:feasibility}.
\subsection{Central path}
In addition to the central path defined in \eqref{eq:central path}, we define the distance from the central path as $d(\vec{x}, \vec{s}, \nu) = \norm{T_{\vec{x}} \vec{s} - \nu \vec{e}}_F$, so the corresponding $\eta$-neighborhood is given by 
\begin{align*}
	\mathcal{N}_\eta(\nu) := \{ &(\vec{x}, \vec{y}, \vec{s})\;|\;(\vec{x}, \vec{y}, \vec{s}) \;\text{strictly feasible} \text{ and } d(\vec{x}, \vec{s}, \nu) \leq \eta \nu \}.
\end{align*}
Using this neighborhood definition, we can specify what exactly do we mean when we claim that the important properties of the central path are valid in its neighborhood as well.
\begin{lemma}[Properties of the central path]\label{lemma:Properties of the central path}
	Let $\nu > 0$ be arbitrary and let $\vec{x}, \vec{s} \in \interior\lorentz$. Then, $\vec{x}$ and $\vec{s}$ satisfy the following properties:
	\begin{enumerate}
		\item For all $\nu > 0$, the duality gap and distance from the central path are related as 
		\begin{align*}
		    | \vec{x}^\top \vec{s} - r\nu | \leq \sqrt{\frac{r}{2}} \cdot d(\vec{x}, \vec{s}, \nu).
		\end{align*}
		\item The distance from the central path is symmetric in its arguments i.e. $d(\vec{x}, \vec{s}, \nu) = d(\vec{s}, \vec{x}, \nu)$.
		\item Let $\mu = \frac1r \vec{x}^\top \vec{s}$. If $d(\vec{x}, \vec{s}, \mu) \leq \eta \mu$, then $(1+\eta) \norm{\vec{s}^{-1}}_2 \geq \norm{\mu^{-1} \vec{x}}_2$. 
	\end{enumerate}
\end{lemma}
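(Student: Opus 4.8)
The plan is to prove the three items separately, since they draw on different pieces of the Jordan-algebraic structure. Item~1 is a one-line Cauchy--Schwarz argument: since every $Q_{\vec{z}}$ is a symmetric matrix, $T_{\vec{x}}$ is symmetric, and with $T_{\vec{x}}\vec{e} = \vec{x}$ (Proposition on properties of $Q_{\vec{x}}$) one gets $\vec{e}^\top T_{\vec{x}}\vec{s} = \vec{x}^\top\vec{s}$ and $\vec{e}^\top(\nu\vec{e}) = r\nu$, hence $\vec{x}^\top\vec{s} - r\nu = \vec{e}^\top(T_{\vec{x}}\vec{s} - \nu\vec{e})$. Then $|\vec{x}^\top\vec{s} - r\nu| \le \norm{\vec{e}}\cdot\norm{T_{\vec{x}}\vec{s} - \nu\vec{e}} = \sqrt{r}\cdot\tfrac1{\sqrt2}\norm{T_{\vec{x}}\vec{s} - \nu\vec{e}}_F = \sqrt{r/2}\,d(\vec{x},\vec{s},\nu)$, using $\norm{\vec{e}} = \sqrt{r}$ and $\norm{\cdot}_F = \sqrt2\norm{\cdot}$.

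For item~2 I would argue blockwise: both $d(\vec{x},\vec{s},\nu)^2$ and $d(\vec{s},\vec{x},\nu)^2$ are sums over blocks, and $\norm{T_{\vec{x}_i}\vec{s}_i - \nu\vec{e}_i}_F^2 = \sum_j (\lambda_j(T_{\vec{x}_i}\vec{s}_i) - \nu)^2$ depends only on the eigenvalue multiset of $T_{\vec{x}_i}\vec{s}_i$, so it suffices to show that $T_{\vec{x}_i}\vec{s}_i$ and $T_{\vec{s}_i}\vec{x}_i$ are isospectral for each $i$. A pair of reals is pinned down by its sum and product, so I only need: (a) the first coordinates agree, which is again $\vec{e}_i^\top T_{\vec{x}_i}\vec{s}_i = \vec{x}_i^\top\vec{s}_i = \vec{s}_i^\top\vec{x}_i = \vec{e}_i^\top T_{\vec{s}_i}\vec{x}_i$ combined with $z_0 = \tfrac12(\lambda_1 + \lambda_2)$; and (b) the ``determinants'' $\det(\vec{z}) := z_0^2 - \norm{\vecrest{z}}^2 = \lambda_1\lambda_2$ agree, which follows from the standard quadratic-representation identity $\det(Q_{\vec{a}}\vec{b}) = \det(\vec{a})^2\det(\vec{b})$ (cf.~\cite{alizadeh2003second}) with $\vec{a} = \vec{x}_i^{1/2}$ and $\vec{a} = \vec{s}_i^{1/2}$, giving $\det(T_{\vec{x}_i}\vec{s}_i) = \det(\vec{x}_i)\det(\vec{s}_i) = \det(T_{\vec{s}_i}\vec{x}_i)$. (Alternatively one could simply cite the isospectrality of $T_{\vec{x}}\vec{s}$ and $T_{\vec{s}}\vec{x}$ from \cite{alizadeh2003second}.)

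Item~3 is the crux. Working blockwise again, pick a block $i$ realizing $\norm{\vec{x}}_2 = \norm{\vec{x}_i}_2 = \lambda_{\max}(\vec{x}_i)$; since $\norm{\vec{s}^{-1}}_2 \ge \norm{\vec{s}_i^{-1}}_2 = 1/\lambda_{\min}(\vec{s}_i)$ and $\norm{\mu^{-1}\vec{x}}_2 = \mu^{-1}\lambda_{\max}(\vec{x}_i)$, it is enough to prove the per-block estimate $\lambda_{\max}(\vec{x}_i)\lambda_{\min}(\vec{s}_i) \le (1+\eta)\mu$. The heart of this is the Jordan-algebraic analogue of the matrix inequality $\lambda_{\max}(X)\lambda_{\min}(S) \le \lambda_{\max}(X^{1/2}SX^{1/2})$, namely $\lambda_{\max}(\vec{x}_i)\lambda_{\min}(\vec{s}_i) \le \lambda_{\max}(T_{\vec{x}_i}\vec{s}_i)$. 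To get it I would use the variational formulas $\lambda_{\min}(\vec{z}) = \min_{\vec{c}} 2\vec{c}^\top\vec{z}$ and $\lambda_{\max}(\vec{z}) = \max_{\vec{c}} 2\vec{c}^\top\vec{z}$, with $\vec{c}$ ranging over vectors $(\tfrac12;\vecrest{c})$, $\norm{\vecrest{c}} = \tfrac12$ (immediate from $2\vec{c}^\top\vec{z} = z_0 + 2\vecrest{c}^\top\vecrest{z}$ and Cauchy--Schwarz), together with the eigenvector identity $T_{\vec{x}_i}\vec{c}_1(\vec{x}_i) = \lambda_1(\vec{x}_i)\vec{c}_1(\vec{x}_i)$ (a short computation from the explicit form of $Q$, or from the Peirce decomposition). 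Using symmetry of $T_{\vec{x}_i}$,
\[
\lambda_{\max}(T_{\vec{x}_i}\vec{s}_i) \ge 2\,\vec{c}_1(\vec{x}_i)^\top(T_{\vec{x}_i}\vec{s}_i) = 2\,\big(T_{\vec{x}_i}\vec{c}_1(\vec{x}_i)\big)^\top\vec{s}_i = \lambda_1(\vec{x}_i)\cdot 2\,\vec{c}_1(\vec{x}_i)^\top\vec{s}_i \ge \lambda_{\max}(\vec{x}_i)\lambda_{\min}(\vec{s}_i).
\]
Finally, $d(\vec{x},\vec{s},\mu) \le \eta\mu$ forces $|\lambda_{\max}(T_{\vec{x}_i}\vec{s}_i) - \mu| = |\lambda_{\max}(T_{\vec{x}_i}\vec{s}_i - \mu\vec{e}_i)| \le \norm{T_{\vec{x}_i}\vec{s}_i - \mu\vec{e}_i}_F \le d(\vec{x},\vec{s},\mu) \le \eta\mu$, so $\lambda_{\max}(T_{\vec{x}_i}\vec{s}_i) \le (1+\eta)\mu$, and combining the last two displays and dividing by $\mu\lambda_{\min}(\vec{s}_i)$ yields $\norm{\mu^{-1}\vec{x}}_2 \le (1+\eta)\norm{\vec{s}^{-1}}_2$.

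I expect item~3 to be the main obstacle: isolating the right Jordan-algebraic substitute for the matrix inequality above and verifying the eigenvector identity $T_{\vec{x}}\vec{c}_1(\vec{x}) = \lambda_1(\vec{x})\vec{c}_1(\vec{x})$, as well as keeping the blockwise bookkeeping honest (the block realizing $\norm{\vec{x}}_2$ need not be the one realizing $\norm{\vec{s}^{-1}}_2$, but the per-block inequality absorbs this because $1/\lambda_{\min}(\vec{s}_i)$ is used only as a lower bound for $\norm{\vec{s}^{-1}}_2$). Items~1 and~2 should be essentially mechanical once the determinant identity from \cite{alizadeh2003second} is in hand.
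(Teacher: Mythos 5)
Your proof is correct, and on item~3 it takes a genuinely different route from the paper. Items~1 and~2 are essentially the paper's arguments in cleaner dress: item~1 applies Cauchy--Schwarz once to the vectors $\vec{e}$ and $T_{\vec{x}}\vec{s}-\nu\vec{e}$ (using $\|\vec{e}\|=\sqrt{r}$ and $\|\cdot\|_F=\sqrt2\,\|\cdot\|$), whereas the paper applies Cauchy--Schwarz to the list of $2r$ eigenvalue deviations after writing $\vec{x}^\top\vec{s}=\tfrac12\sum_i\lambda_i(T_{\vec{x}}\vec{s})$ --- same inequality, same constant. For item~2 the paper simply cites Theorem~10 of \cite{alizadeh2003second} for the isospectrality of $T_{\vec{x}}\vec{s}$ and $T_{\vec{s}}\vec{x}$, while you reprove it blockwise from the trace identity and the quadratic-representation determinant formula $\det(Q_{\vec{a}}\vec{b})=\det(\vec{a})^2\det(\vec{b})$, which is a small but self-contained gain. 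On item~3 the approaches diverge: the paper factors $T_{\vec{s}}\vec{x}-\mu\vec{e}=T_{\vec{s}}(\vec{x}-\mu\vec{s}^{-1})$, lower-bounds $\|T_{\vec{s}}\,\cdot\,\|_F$ by $\lambda_{\min}(T_{\vec{s}})=1/\|\vec{s}^{-1}\|_2$, passes to $\|\cdot\|_2$ and finishes with the spectral-norm triangle inequality, giving a short, global argument with no blockwise bookkeeping. You instead prove the Jordan-algebraic analogue of the matrix inequality $\lambda_{\max}(X)\lambda_{\min}(S)\le\lambda_{\max}(X^{1/2}SX^{1/2})$ per block, via the variational characterization $\lambda_{\max/\min}(\vec{z})=\max/\min_{\vec{c}}2\vec{c}^\top\vec{z}$ over idempotent directions and the eigenvector identity $T_{\vec{x}}\vec{c}_1(\vec{x})=\lambda_1(\vec{x})\vec{c}_1(\vec{x})$, then combine with the pointwise bound $\lambda_{\max}(T_{\vec{x}_i}\vec{s}_i)\le(1+\eta)\mu$. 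Your route is longer and needs careful block selection (which you handle correctly --- the blocks realizing $\|\vec{x}\|_2$ and $\|\vec{s}^{-1}\|_2$ need not coincide, and you only use $1/\lambda_{\min}(\vec{s}_i)$ as a lower bound for $\|\vec{s}^{-1}\|_2$), but it isolates a reusable eigenvalue inequality and makes the connection to the matrix case explicit; the paper's is more compact but leaves that inequality implicit.
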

\begin{proof}
	For part 1, let $\{\lambda_i\}_{i=1}^{2r}$ be the eigenvalues of $T_{\vec{x}}\vec{s}$, note that $T_{x}$ is invertible as $x \in \lorentz$. 
	Then using the properties of $T_{\vec{x}}$, we have
	\begin{align*}
	\vec{x}^\top\vec{s} &= \vec{x}^\top T_{\vec{x}}^{-1} T_{\vec{x}} \vec{s} =  (T_{\vec{x}^{-1}} \vec{x})^\top T_{\vec{x}} \vec{s} =\vec{e}^\top T_{\vec{x}} \vec{s} = \frac12 \sum_{i=1}^{2r} \lambda_i.
	\end{align*}
	We can therefore bound the duality gap $x^\top s$ as follows, 
	\begin{align*}
		\vec{x}^\top \vec{s} = \frac12 \sum_{i=1}^{2r} \lambda_i &\leq r\nu + \frac12 \sum_{i=1}^{2r} |\lambda_i - \nu| \leq r\nu + \sqrt{\frac{r}{2}}\sqrt{\sum_{i=1}^{2r} (\lambda_i - \nu)^2} = r\nu + \sqrt{\frac{r}{2}} \cdot d(\vec{x}, \vec{s}, \nu).
	\end{align*}
	The second step used the Cauchy-Schwarz inequality while the third follows from the definition $d(\vec{x}, \vec{s}, \nu)^2 = \sum_{i=1}^{2r} (\lambda_i - \nu)^2$. 
	The proof of the lower bound is similar, but starts instead with the inequality
	\[
	\frac12\sum_{i=1}^{2r} \lambda_i \geq r\nu - \frac12 \sum_{i=1}^{2r} |\nu - \lambda_i|.
	\]
	For part 2, it suffices to prove that $T_{\vec{x}} \vec{s}$ and $T_{\vec{s}} \vec{x}$ have the same eigenvalues. This follows from part 2 of Theorem 10 in \cite{alizadeh2003second}.
	Finally for part 3, as $d(\vec{s}, \vec{x}, \mu) \leq \eta \mu$ we have, 
	\begin{align*}
		\eta \mu &\geq \norm{T_{\vec{s}}\vec{x} - \mu \vec{e}}_{F} \\
		&= \norm{T_{\vec{s}} \vec{x} - \mu \left( T_{\vec{s}} T_{\vec{s}^{-1}} \right) \vec{e} }_{F} \\
		&= \norm{T_{\vec{s}} \left( \vec{x} - \mu T_{\vec{s}^{-1}}\vec{e} \right) }_{F} \\
		&\geq \lambda_\text{min} (T_{\vec{s}}) \norm{\vec{x} - \mu T_{\vec{s}^{-1}}\vec{e}}_{F} \\
		&\geq \lambda_\text{min} (T_{\vec{s}}) \norm{\vec{x} - \mu \vec{s}^{-1}}_{2} \\
		&= \frac{1}{\norm{\vec{s}^{-1}}_{2}} \cdot \mu \cdot \norm{ \mu^{-1} \vec{x} - \vec{s}^{-1} }_{2}
	\end{align*}
	Therefore, $\eta \norm{\vec{s}^{-1}}_{2} \geq \norm{\mu^{-1} \vec{x} - \vec{s}^{-1}}_{2}$. Finally, by the triangle inequality for the spectral norm,  
	\[
	\eta \norm{\vec{s}^{-1}}_{2} \geq \norm{\mu^{-1} \vec{x}}_{2} -\norm{\vec{s}^{-1}}_{2},
	\]
	so we can conclude that $\norm{\vec{s}^{-1}}_{2} \geq \frac{1}{1+\eta} \norm{\mu^{-1} \vec{x}}_{2}$.
\end{proof}

\subsection{A single quantum IPM iteration}
Recall that the essence of our quantum algorithm is repeated solution of the Newton system \eqref{eq:Newton system} using quantum linear algebra. As such, our goal is to prove the following theorem:
\setcounter{theorem}{3}
\begin{theorem}[Per-iteration correctness, formal]
	Let $\chi = \eta = 0.01$ and $\xi = 0.001$ be positive constants and let $(\vec{x}, \vec{y}, \vec{s})$ be a feasible solution of \eqref{prob:SOCP primal} and \eqref{prob:SOCP dual} with $\mu = \frac1r \vec{x}^\top \vec{s}$ and $d(\vec{x}, \vec{s}, \mu) \leq \eta\mu$. Then, for $\sigma = 1-\chi/\sqrt{n}$, the Newton system \eqref{eq:Newton system} has a unique solution $(\dx, \dy, \ds)$. Let $\dxm, \dsm$ be approximate solutions of \eqref{eq:Newton system} that satisfy
	\[
	\norm{\dx - \dxm}_F \leq \frac{\xi}{\norm{T_{\vec{x}^{-1}}}} ,\;
	\norm{\ds - \dsm}_F \leq \frac{\xi}{2\norm{T_{\vec{s}^{-1}}}},
	\]
	where $T_{\vec{x}}$ and $T_{\vec{x}}$ are the square roots of the quadratic representation matrices in equation \eqref{qrep}. 
	If we let $\xnext := \vec{x} + \dxm$ and $\snext := \vec{s} + \dsm$, the following holds:
	\begin{enumerate}
		\item The updated solution is strictly feasible, i.e. $\xnext \in \interior \lorentz$ and $\snext \in \interior \lorentz$.
		\item The updated solution satisfies $d(\xnext, \snext, \measured{\mu}) \leq \eta\measured{\mu}$ and $\frac1r \xnext^\top \snext = \measured{\mu}$ for $\measured{\mu} = \measured{\sigma}\mu$, $\measured{\sigma} = 1 - \frac{\alpha}{\sqrt{r}}$ and a constant $0 < \alpha \leq \chi$.
	\end{enumerate}
\end{theorem}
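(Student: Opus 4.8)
The plan is to follow the three-phase program indicated after the informal Theorem~\ref{thm:main}, treating $\dxm-\dx$ and $\dsm-\ds$ as small perturbations that must be carried through every estimate. The bookkeeping hinges on Proposition~\ref{claim:Properties of Qx}: since $\norm{T_{\vec{x}^{-1}}}=\norm{\vec{x}^{-1}}_2=1/\lambda_{\min}(\vec{x})$, the two hypotheses are exactly the ``$\norm{\cdot}_F\le\xi\cdot(\text{distance to the cone boundary})$'' bounds of the informal statement, and they are calibrated so that, after the rescaling of Phase~0, the rescaled perturbations have controlled Frobenius norm (one turns out to be $\le\xi$, the other $\le\tfrac{\xi}{2}(1+\eta)\mu$, the asymmetry reflecting that the rescaled $\vec{x}$ becomes $\vec{e}$ while the rescaled $\vec{s}$ stays at ``scale $\mu$'').

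\textbf{Phase~0 (commutative rescaling and uniqueness).} Following \cite{monteiro2000polynomial}, rescale by a quadratic representation $T$ --- e.g.\ $T:=T_{\vec{x}}=Q_{\vec{x}^{1/2}}$, so that $\xs:=T^{-1}\vec{x}=\vec{e}$ and $\ss:=T\vec{s}$, whence $d(\vec{x},\vec{s},\nu)=\norm{T_{\vec{x}}\vec{s}-\nu\vec{e}}_F=\norm{\ss-\nu\vec{e}}_F$ is literally the scaled distance --- and likewise rescale the directions to $\dxs,\dss,\dxms,\dsms$. By part~4 of Proposition~\ref{claim:Properties of Qx}, $T$ preserves $\interior\lorentz$; moreover $d(\cdot,\cdot,\nu)$ is invariant under this rescaling (the eigenvalues of $T_{\vec{a}}\vec{b}$ are preserved under $Q$-scaling, cf.\ part~2 of Lemma~\ref{lemma:Properties of the central path}) and $\dx^\top\ds=\dxs^\top\dss$ since $T=T^\top$, so it suffices to prove both conclusions for the rescaled iterates. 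Uniqueness of $(\dx,\dy,\ds)$ then follows immediately: in the rescaled homogeneous system the first two blocks give $\dxs^\top\dss=0$ (taking $A$ of full row rank after the self-dual embedding, and $T$ invertible), while the third block reads $\Arw(\ss)\dxs+\dss=0$, so $0=\dxs^\top\dss=-\dxs^\top\Arw(\ss)\dxs$ forces $\dxs=0$ (as $\Arw(\ss)\succ0$ because $\ss\in\interior\lorentz$), hence $\dss=0$ and $\dy=0$.

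\textbf{Phases~1--2 (norm bounds and strict feasibility).} In rescaled coordinates the third Newton block becomes $\Arw(\ss)\dxs+\dss=\sigma\mu\vec{e}-\ss=:\vec{g}$; from $d(\vec{x},\vec{s},\mu)\le\eta\mu$ we get $\lambda_i(\ss)\in[(1-\eta)\mu,(1+\eta)\mu]$, hence $\norm{\vec{g}}_F=O((\eta+\chi)\mu)$, and since $\dxs\perp\dss$ and $\Arw(\ss)\succeq\lambda_{\min}(\ss)I$, the direction norms are bounded just as in the classical short-step analysis (with $\dxs^\top\dss=0$), namely $\norm{\dxs}_F=O(\eta+\chi)$ and $\norm{\dss}_F=O((\eta+\chi)\mu)$. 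The hypotheses, via Claim~\ref{claim:algebraic properties}(3) and part~3 of Lemma~\ref{lemma:Properties of the central path} (which bounds $\norm{\vec{x}}_2/\norm{\vec{s}^{-1}}_2\le(1+\eta)\mu$, hence controls $\norm{\dsms-\dss}_F$), then give $\norm{\dxms-\dxs}_F\le\xi$ and $\norm{\dsms-\dss}_F=O(\xi\mu)$. Strict feasibility is now a direct consequence of Claim~\ref{claim:algebraic properties}(2),(4): $\lambda_{\min}(\xsnext)\ge\lambda_{\min}(\vec{e})-\norm{\dxs}_2-\norm{\dxms-\dxs}_2\ge1-O(\eta+\chi)-\xi>0$ and likewise $\lambda_{\min}(\ssnext)\ge(1-\eta)\mu-O((\eta+\chi)\mu)-O(\xi\mu)>0$ for the small $\eta,\chi,\xi$ fixed in the statement, so $\xsnext,\ssnext\in\interior\lorentz$ and, $T$ preserving the cone, $\xnext,\snext\in\interior\lorentz$.

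\textbf{Phase~3 (closeness to the central path and gap decrease) --- the crux.} Expanding $\xsnext\circ\ssnext$ and substituting the rescaled third Newton equation to eliminate the exact-direction part gives
\[
\xsnext\circ\ssnext-\measured{\mu}\vec{e}=(\sigma\mu-\measured{\mu})\vec{e}+\bigl(\Arw(\ss)(\dxms-\dxs)+(\dsms-\dss)\bigr)+\dxms\circ\dsms .
\]
Taking the $\vec{e}$-inner product and using $\vec{a}^\top\vec{b}=\vec{e}^\top(\vec{a}\circ\vec{b})$ pins down $\measured{\mu}=\tfrac1r\xnext^\top\snext=\sigma\mu+O(\xi\mu/\sqrt r)$, which, for the stated $\sigma$, is of the form $(1-\alpha/\sqrt r)\mu$ with $0<\alpha\le\chi$; bounding each term of the display in Frobenius norm via submultiplicativity (Claim~\ref{claim:algebraic properties}(5)), $\norm{\Arw(\ss)}_2=\norm{\ss}_2=O(\mu)$, and the norm bounds of Phases~1--2 gives $\norm{\xsnext\circ\ssnext-\measured{\mu}\vec{e}}_F=O\bigl((\eta+\chi)^2\mu+\xi\mu\bigr)$. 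Invoking the $T$-invariance of $d$ together with a standard bound relating $d(\xnext,\snext,\measured{\mu})$ to the $\circ$-product residual $\norm{\xsnext\circ\ssnext-\measured{\mu}\vec{e}}_F$ (cf.\ \cite{monteiro2000polynomial}) then yields $d(\xnext,\snext,\measured{\mu})\le\eta\measured{\mu}$, provided the hidden constants obey $O((\eta+\chi)^2+\xi)\le\eta$ --- which the choice $\eta=\chi=0.01$, $\xi=0.001$ is designed to satisfy. The \textbf{main obstacle} is precisely this last step: unlike in the classical analysis, every estimate now carries an extra $O(\xi)$ floor from tomography, and one must verify that the quadratic cross-term $\dxms\circ\dsms$ \emph{and} the two tomography-error terms are \emph{simultaneously} small enough --- relative to the $\Theta(1/\sqrt r)$ margin at which the gap decreases --- that the \emph{same} $\eta$ is restored and $\alpha$ stays in $(0,\chi]$. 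This joint calibration of $\eta,\chi,\xi$, combined with the somewhat delicate passage from the $\circ$-product residual to the residual $\norm{T_{\xnext}\snext-\measured{\mu}\vec{e}}_F$ that defines $d$, is the technical heart of the proof.
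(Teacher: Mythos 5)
Your three-phase plan mirrors the paper's, and your bookkeeping of the tomography hypotheses (that $\norm{T_{\vec{x}^{-1}}}=\lambda_{\min}(\vec{x})^{-1}$ and that Lemma~\ref{lemma:Properties of the central path}(3) converts the $\vec{s}$-bound into a bound on the $\mu^{-1}T_{\vec{x}}$-scaled increment) matches the paper. But there is a concrete error in Phase~1/2 that propagates: after rescaling by $T_{\vec{x}}$, the third Newton block is \emph{not} $\Arw(\ss)\dxs+\dss=\sigma\mu\vec{e}-\ss$. Substituting $\dx=T_{\vec{x}}\dxs$, $\ds=\mu T_{\vec{x}}^{-1}\dss$ into $\Arw(\vec{s})\dx+\Arw(\vec{x})\ds=\sigma\mu\vec{e}-\vec{x}\circ\vec{s}$ and left-multiplying by $T_{\vec{x}}\Arw(\vec{x})^{-1}$ yields $\mu^{-1}R_{xs}\dxs+\dss=\sigma\vec{e}-\ss$ with $R_{xs}:=T_{\vec{x}}\Arw(\vec{x})^{-1}\Arw(\vec{s})T_{\vec{x}}$ --- exactly Claim~\ref{claim:dss expression}. $R_{xs}$ is \emph{not} $\mu\Arw(\ss)$ and is not even symmetric unless $\vec{x}$ and $\vec{s}$ operator-commute; the whole point of Claim~\ref{claim:R_xs bound} is that it is merely $3\eta\nu$-close to $\nu I$. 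This matters in two places. First, your uniqueness argument reads $\dss=-\Arw(\ss)\dxs$ and concludes from $\dxs^\top\Arw(\ss)\dxs=0$; with $R_{xs}$ in its place you only get $\dxs^\top R_{xs}\dxs=0$, which forces $\dxs=0$ only after invoking the neighborhood bound on $R_{xs}$ --- consistent with the fact that this AHO-type direction is uniquely solvable only near the central path, which is exactly what \cite{monteiro2000polynomial}, Theorem~1 (cited by the paper) establishes. Second, your increment bounds ``$\norm{\dxs}_F=O(\eta+\chi)$, $\norm{\dss}_F=O((\eta+\chi)\mu)$'' rely on the clean $\Arw(\ss)$ form; the paper instead imports Lemma~\ref{lemma:bounds for increment} (Monteiro--Tsuchiya Lemma~6), whose proof goes through $R_{xs}$ and Claim~\ref{claim:R_xs bound}.

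The more serious gap is in Phase~3. You bound the \emph{Jordan-product} residual $\norm{\xsnext\circ\ssnext-\measured{\mu}\vec{e}}_F$ and then appeal to ``a standard bound'' relating this to the actual proximity $d(\xnext,\snext,\measured{\mu})=\norm{T_{\xnext}\snext-\measured{\mu}\vec{e}}_F$. No such passage appears in the paper or is cited, and it is not a small step: $\vec{a}\circ\vec{b}-\nu\vec{e}$ and $T_{\vec{a}}\vec{b}-\nu\vec{e}$ are different symmetrizations and are not related by a uniformly bounded constant in general. The paper sidesteps this entirely by writing $d(\xsnext,\ssnext,\measured{\sigma})=\norm{T_{\xsnext}\ssnext-\measured{\sigma}T_{\xsnext}T_{\xsnext^{-1}}\vec{e}}_F\le\norm{T_{\xsnext}}\cdot\norm{\ssnext-\measured{\sigma}(\xsnext)^{-1}}_F$ and then splitting $\ssnext-\measured{\sigma}\xsnext^{-1}$ into three pieces $\vec{z}_1,\vec{z}_2,\vec{z}_3$ that are bounded using Claim~\ref{claim:dss expression}, Claim~\ref{claim:R_xs bound}, Lemma~\ref{lemma:bounds for increment} and the Jordan-algebra submultiplicativity. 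Working directly with $T_{\xsnext}$ from the start is what lets the argument close without any cross-proximity inequality. Your expansion also cannot use $\Arw(\ss)(\dxms-\dxs)$ to cancel the exact directions once the third block is written correctly --- that cancellation must go through $R_{xs}$. So the overall architecture you describe is right, but both the scaled Newton equation and the Phase~3 bound as written would not go through; the paper's $\vec{z}_1+\vec{z}_2+\vec{z}_3$ decomposition, together with the Monteiro--Tsuchiya lemmas on $R_{xs}$, is what actually makes the constants $\eta=\chi=0.01$, $\xi=0.001$ work out to $\alpha=0.005$.
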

Since the Newton system \eqref{eq:Newton system} is the same as in the classical case, we can reuse Theorem 1 from \cite{monteiro2000polynomial} for the uniqueness part of Theorem \ref{thm:main}. Therefore, we just need to prove the two parts about strict feasibility and improving the duality gap.
Our analysis is inspired by the general case analysis from \cite{ben2001lectures}, the derived SDP analysis from \cite{kerenidis2018quantum}, and uses some technical results from the SOCP analysis in \cite{monteiro2000polynomial}.
The proof of Theorem \ref{thm:main} consists of three main steps:
\begin{enumerate}
	\item Rescaling $\vec{x}$ and $\vec{s}$ so that they share the same Jordan frame.
	\item Bounding the norms of $\dx$ and $\ds$, and proving that $\vec{x} + \dx$ and $\vec{s} + \ds$ are still strictly feasible (in the sense of belonging to $\interior \lorentz$).
	\item Proving that the new solution $(\vec{x} + \dx, \vec{y} + \dy, \vec{s} + \ds)$ is in the $\eta$-neighborhood of the central path, and the duality gap/central path parameter have decreased by a factor of $1 - \alpha/\sqrt{n}$, where $\alpha$ is constant.
\end{enumerate}

\subsection{Rescaling \texorpdfstring{$\vec{x}$}{x} and \texorpdfstring{$\vec{s}$}{s}}
As in the case of SDPs, the first step of the proof uses the symmetries of the Lorentz cone to perform a commutative scaling, that is to reduce the analysis to the case when $\vec{x}$ and $\vec{s}$ share the same Jordan frame. Although $\circ$ is commutative by definition, two vectors sharing a Jordan frame are akin to two matrices sharing a system of eigenvectors, and thus commuting (some authors \cite{alizadeh2003second} say that the vectors \emph{operator commute} in this case).
The easiest way to achieve this is to scale by $T_{\vec{x}} = Q_{\vec{x}^{1/2}}$ and $\mu^{-1}$, i.e. to change our variables as
\[
\vec{x} \mapsto \scaled{\vec{x}} := T_{\vec{x}}^{-1} \vec{x} = \vec{e} \text{ and } \vec{s} \mapsto \scaled{\vec{s}} := \mu^{-1}T_{\vec{x}} \vec{s}.
\]
Note that for convenience, we have also rescaled the duality gap to 1. Recall also that in the matrix case, the equivalent of this scaling was $X \mapsto X^{-1/2} X X^{-1/2} = I$ and $S \mapsto \mu^{-1}X^{1/2} S X^{1/2}$.
We use the notation $\scaled{\vec{z}}$ to denote the appropriately-scaled vector $\vec{z}$, so that we have
\[
\dxs := T_{\vec{x}}^{-1} \dx,\quad \dss := \mu^{-1}T_{\vec{x}} \ds
\]
For approximate quantities (e.g. the ones obtained using tomography, or any other approximate linear system solver), we use the notation $\measured{\phantom{i}\cdot\phantom{i}}$, so that the increments become $\dxm$ and $\dsm$, and their scaled counterparts are $\dxms := T_{\vec{x}}^{-1} \dxm$ and $\dsms := \mu^{-1}T_{\vec{x}} \dsm$. Finally, we denote the scaled version of the next iterate as $\xsnext := \vec{e} + \dxms$ and 
$\ssnext := \ss + \dsms$. Now, we see that the statement of Theorem \ref{thm:main} implies the following bounds on $\norm{ \dxs - \dxms }_F$ and $\norm{\dss - \dsms}_F$:
\begin{align*}
	\norm{\dxs - \dxms}_F & = \norm{T_{\vec{x}^{-1}}\dx - T_{\vec{x}^{-1}} \dxm}_F \\
	&\leq \norm{T_{\vec{x}^{-1}}} \cdot \norm{\dx - \dxm}_F \leq \xi, \text{ and } \\
	\norm{\dss - \dsms}_F &= \mu^{-1}\norm{T_{\vec{x}} \ds - T_{\vec{x}} \dsm}_F \\
	&\leq \mu^{-1}\norm{T_{\vec{x}}} \norm{\ds - \dsm}_F \\
	&= \mu^{-1} \norm{\vec{x}}_2 \norm{\ds - \dsm}_F \\
	&\leq (1+\eta) \norm{\vec{s}^{-1}}_2 \norm{\ds - \dsm}_F \text{ by Lemma \ref{lemma:Properties of the central path}} \\
	&\leq 2 \norm{T_{\vec{s}^{-1}}} \norm{\ds - \dsm}_F \leq \xi. 
\end{align*}

Throughout the analysis, we will make use of several constants: $\eta > 0$ is the distance from the central path, i.e. we ensure that our iterates stay in the $\eta$-neighborhood $\mathcal{N}_\eta$ of the central path. The constant $\sigma = 1 - \chi / \sqrt{r}$ is the factor by which we aim to decrease our duality gap, for some constant $\chi > 0$. Finally constant $\xi > 0$ is the approximation error for the scaled increments $\dxms, \dsms$.
Having this notation in mind, we can state several facts about the relation between the duality gap and the central path distance for the original and scaled vectors.
\begin{claim}\label{claim:stuff preserved under scaling}
	The following holds for the scaled vectors $\xs$ and $\ss$:
	\begin{enumerate}
		\item The scaled duality gap is $\frac1r \xs^\top \ss = 1$.
		\item $d(\vec{x}, \vec{s}, \mu) \leq \eta \mu$ is equivalent to $\norm{\ss - \vec{e}} \leq \eta$.
		\item $d(\vec{x}, \vec{s}, \mu\sigma) = \mu \cdot d(\xs, \ss, \sigma)$, for all $\sigma > 0$.
	\end{enumerate}
\end{claim}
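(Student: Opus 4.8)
The plan is to reduce all three parts to two elementary facts about the scaling matrix $T_{\vec{x}} = Q_{\vec{x}^{1/2}}$: that it is symmetric (immediate from its explicit form in \eqref{qrep}, extended blockwise), and that $T_{\vec{x}}\vec{e} = \vec{x}$ (Proposition~\ref{claim:Properties of Qx}, part~1, applied blockwise). Combined with the defining relation $\ss := \mu^{-1}T_{\vec{x}}\vec{s}$, which rearranges to $T_{\vec{x}}\vec{s} = \mu\ss$, these identities convert every quantity built from $\vec{x}$ and $\vec{s}$ into the corresponding quantity built from $\vec{e}$ and $\ss$, up to an explicit factor of $\mu$.

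For part~1 I would write $\xs^\top \ss = \vec{e}^\top(\mu^{-1}T_{\vec{x}}\vec{s}) = \mu^{-1}(T_{\vec{x}}\vec{e})^\top \vec{s} = \mu^{-1}\vec{x}^\top \vec{s}$, using symmetry of $T_{\vec{x}}$ in the middle step and $T_{\vec{x}}\vec{e} = \vec{x}$ in the last. Since $\mu = \frac1r \vec{x}^\top \vec{s}$ by hypothesis, this gives $\frac1r \xs^\top \ss = 1$.

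For parts~2 and~3 the key observation is that $\xs = T_{\vec{x}}^{-1}\vec{x} = \vec{e}$, so $T_{\xs} = T_{\vec{e}} = Q_{\vec{e}} = I$ (again blockwise, since $\Arw(\vec{e}_i) = I$ gives $Q_{\vec{e}_i} = 2I - I = I$). Hence $d(\xs,\ss,\sigma) = \norm{T_{\xs}\ss - \sigma\vec{e}}_F = \norm{\ss - \sigma\vec{e}}_F$. On the other hand, substituting $T_{\vec{x}}\vec{s} = \mu\ss$ into the definition of $d$ yields $d(\vec{x},\vec{s},\mu\sigma) = \norm{T_{\vec{x}}\vec{s} - \mu\sigma\vec{e}}_F = \norm{\mu\ss - \mu\sigma\vec{e}}_F = \mu\norm{\ss - \sigma\vec{e}}_F = \mu\, d(\xs,\ss,\sigma)$, which is part~3; specializing to $\sigma = 1$ and dividing by $\mu > 0$ shows that $d(\vec{x},\vec{s},\mu) \le \eta\mu$ is equivalent to $\norm{\ss - \vec{e}}_F \le \eta$, which is part~2 (the norm on $\ss - \vec{e}$ being the Frobenius norm, consistent with the definition of $d$).

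The computation is entirely routine; the only points needing care are keeping the norm conventions straight (since $d$ is defined through $\norm{\cdot}_F$, the ``$\norm{\ss - \vec{e}}$'' of part~2 is the Frobenius norm) and checking that the three identities used ($T_{\vec{x}}$ symmetric, $T_{\vec{x}}\vec{e} = \vec{x}$, $T_{\vec{e}} = I$) all lift from the single-cone case to rank-$r$ block vectors, which is immediate from the block-diagonal definitions of $\Arw$ and $Q$. I do not expect a substantive obstacle here: the claim is pure bookkeeping whose purpose is to isolate exactly what is invariant under the commutative rescaling, so that the subsequent feasibility and duality-gap arguments can be run in the normalized coordinates $\xs = \vec{e}$, $\ss \approx \vec{e}$.
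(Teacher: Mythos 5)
Your proof is correct, and since the paper states Claim~\ref{claim:stuff preserved under scaling} without supplying an argument, there is no alternative route to compare against: what you wrote is precisely the ``missing'' proof, assembled from the ingredients the paper already provides (symmetry of $T_{\vec{x}}$, the identity $T_{\vec{x}}\vec{e}=\vec{x}$ from Proposition~\ref{claim:Properties of Qx}, the fact that $\xs=\vec{e}$ so $T_{\xs}=Q_{\vec{e}}=I$, and the relation $T_{\vec{x}}\vec{s}=\mu\ss$). All three parts reduce to the single substitution $T_{\vec{x}}\vec{s}=\mu\ss$ inside the definition of $d$, and you carry this out cleanly, including the reduction of part~2 to part~3 at $\sigma=1$. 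Your parenthetical about the norm in part~2 is also well taken: as stated the claim writes $\norm{\ss-\vec{e}}$ without a subscript, but since $d$ is defined via $\norm{\cdot}_F$ and the claim is invoked in Lemma~\ref{l1} with $\norm{\ss-\vec{e}}_F^2 \leq \eta^2$, the Frobenius norm is clearly intended (and $\norm{\cdot}_F = \sqrt{2}\,\norm{\cdot}$, so the distinction is not vacuous). In short, the proposal is correct, matches the intended reading, and is the argument the paper leaves implicit.
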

At this point, we claim that it suffices to prove the two parts of Theorem \ref{thm:main} in the scaled case. Namely, assuming that $\xsnext \in \interior \lorentz$ and $\ssnext \in \interior\lorentz$, by construction we get
\begin{equation*}
	\xnext = T_{\vec{x}} \xsnext \text{ and } \snext = T_{\vec{x}^{-1}} \ssnext
\end{equation*}
and thus $\xnext, \snext \in \interior \lorentz$. 

On the other hand, if $\mu d(\xsnext, \ssnext, \measured{\sigma}) \leq \eta \measured{\mu}$, then $d(\xnext, \snext, \measured{\mu}) \leq \eta \measured{\mu}$ follows by Claim \ref{claim:stuff preserved under scaling}. Similarly, from $\frac1r \xsnext^\top \ssnext = \measured{\sigma}$, we also get $\frac1r \vec{x}_\text{next}^\top \vec{s}_\text{next} = \measured{\mu}$. 
We conclude this part with two technical results from \cite{monteiro2000polynomial}, that use the auxiliary matrix $R_{xs}$ defined as $R_{xs} := T_{\vec{x}} \Arw(\vec{x})^{-1} \Arw(\vec{s}) T_{\vec{x}}$. These results are useful for the later parts of the proof of Theorem \ref{thm:main}.
\begin{claim}[\cite{monteiro2000polynomial}, Lemma 3]\label{claim:R_xs bound}
	Let $\eta$ be the distance from the central path, and let $\nu > 0$ be arbitrary. Then, $R_{xs}$ is bounded as
	\[
	\norm{R_{xs} - \nu I} \leq 3 \eta \nu.
	\]
\end{claim}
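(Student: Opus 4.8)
This is the Lorentz-cone analogue of Lemma~3 of \cite{monteiro2000polynomial}, and the plan is to recover its proof in our notation by reducing $R_{xs}-\nu I$ to a matrix that is diagonal in the Jordan frame of $\vec x$ plus one explicitly controlled remainder. Write $\lambda_1\ge\lambda_2>0$ for the Jordan eigenvalues of $\vec x$, $x_0=\tfrac12(\lambda_1+\lambda_2)$ for its first coordinate, $g:=\sqrt{\lambda_1\lambda_2}$, and $\vec x^\perp:=\{(0;\vec v):\vec v\in\R^{n-1},\ \langle\vecrest x,\vec v\rangle=0\}$; I do the rank-$1$ case, the general one being the block-diagonal sum together with $d(\vec x,\vec s,\nu)^2=\sum_i d(\vec x_i,\vec s_i,\nu)^2$. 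The hypothesis is read as: $(\vec x,\vec y,\vec s)$ is strictly feasible with $d(\vec x,\vec s,\nu)=\norm{T_{\vec x}\vec s-\nu\vec e}_F\le\eta\nu$, so in particular $\vec x,\vec s\in\interior\lorentz$ and $T_{\vec x}$, $\Arw(\vec x)^{-1}$, $\vec x^{-1}$ exist. The one structural fact I use is that $T_{\vec x}=Q_{\vec x^{1/2}}$ and $\Arw(\vec x)$ are simultaneously diagonalizable, with common eigenbasis $\{\vec c_1(\vec x),\vec c_2(\vec x)\}\cup\vec x^\perp$ on which $T_{\vec x}$ acts by $\lambda_1,\lambda_2,g$ and $\Arw(\vec x)$ by $\lambda_1,\lambda_2,x_0$. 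Consequently $D:=T_{\vec x}\Arw(\vec x)^{-1}$ is symmetric with eigenvalues $1,1,g/x_0$, so $\norm{D}=1$ by AM--GM, and a check on each eigenspace gives $T_{\vec x}\vec x^{-1}=\vec e$ and $D\,\Arw(\vec x^{-1})\,T_{\vec x}=I$.

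The key rewriting is then: $R_{xs}=D\,\Arw(\vec s)\,T_{\vec x}$, and substituting $\nu I=\nu\,D\,\Arw(\vec x^{-1})\,T_{\vec x}$ and using linearity of $\Arw(\cdot)$ gives
\[
R_{xs}-\nu I \;=\; D\,\Arw(\vec u)\,T_{\vec x}, \qquad \vec u:=\vec s-\nu\vec x^{-1},
\]
with $T_{\vec x}\vec u=T_{\vec x}\vec s-\nu\vec e$, so $\norm{T_{\vec x}\vec u}_F=d(\vec x,\vec s,\nu)\le\eta\nu$. One cannot simply bound $\norm{D\,\Arw(\vec u)\,T_{\vec x}}$ by $\norm{D}\norm{\Arw(\vec u)}\norm{T_{\vec x}}$, because $\norm{T_{\vec x}}=\lambda_1$ and $\norm{T_{\vec x}^{-1}}=1/\lambda_2$ while $\lambda_1/\lambda_2$ is unbounded even on the exact central path; instead $D$ and $T_{\vec x}$ must be kept wrapped around $\Arw(\vec u)$ and $\vec u$ split with respect to the frame of $\vec x$ as $\vec u=\vec u^\parallel+\vec u^\perp$, $\vec u^\parallel\in\operatorname{span}(\vec c_1(\vec x),\vec c_2(\vec x))$ (operator-commuting with $\vec x$) and $\vec u^\perp\in\vec x^\perp$. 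Writing $\vec h:=T_{\vec x}\vec u=h_1\vec c_1+h_2\vec c_2+\vec h^\perp$ in the common basis, one reads off $\vec u^\parallel=(h_1/\lambda_1)\vec c_1+(h_2/\lambda_2)\vec c_2$, $\vec u^\perp=g^{-1}\vec h^\perp$, and $\norm{\vec h}_F^2=h_1^2+h_2^2+2\norm{\vec h^\perp}^2\le(\eta\nu)^2$.

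For the commuting part $\Arw(\vec u^\parallel)$ is diagonal in the common basis, hence so is $D\,\Arw(\vec u^\parallel)\,T_{\vec x}$, and its eigenvalues work out to $h_1$, $h_2$, and $(\lambda_2 h_1+\lambda_1 h_2)/(2x_0)$; since $|h_1|,|h_2|\le\eta\nu$ and $\lambda_1+\lambda_2=2x_0$, all three have absolute value $\le\eta\nu$, so $\norm{D\,\Arw(\vec u^\parallel)\,T_{\vec x}}\le\eta\nu$. For the transverse part, $\Arw(\vec u^\perp)$ sends $\operatorname{span}(\vec c_1,\vec c_2)$ into $\operatorname{span}(\vec u^\perp)\subseteq\vec x^\perp$ and $\vec x^\perp$ into $\operatorname{span}(\vec e)$, so for every unit vector $\vec z$ the vector $D\,\Arw(\vec u^\perp)\,T_{\vec x}\vec z$ is a sum of two mutually orthogonal terms, one proportional to $\vec u^\perp$ and one proportional to $\vec e$; estimating the two coefficients by Cauchy--Schwarz and using $\norm{\vec u^\perp}\le\eta\nu/(\sqrt2\,g)$ together with elementary inequalities among $\lambda_1,\lambda_2,x_0,g$ (notably $\lambda_1^2+\lambda_2^2\le(\lambda_1+\lambda_2)^2=4x_0^2$) shows $\norm{D\,\Arw(\vec u^\perp)\,T_{\vec x}\vec z}^2\le(\eta\nu)^2$, hence $\norm{D\,\Arw(\vec u^\perp)\,T_{\vec x}}\le\eta\nu$. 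The triangle inequality then yields $\norm{R_{xs}-\nu I}\le2\eta\nu\le3\eta\nu$.

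I expect the transverse estimate to be the main obstacle: one must see that the potentially large $g^{-1}$ carried by $\vec u^\perp$ (present because $T_{\vec x}^{-1}$ rescales $\vec x^\perp$ by $g^{-1}$) is exactly cancelled by the $g$'s produced when $T_{\vec x}$ acts on $\vec x^\perp$ and when $\Arw(\vec u^\perp)$ moves mass between $\vec x^\perp$ and $\operatorname{span}(\vec e)$, with $D$ contributing only the harmless contraction $g/x_0\le1$; arranging the Cauchy--Schwarz bookkeeping so the final constant is absolute (we obtain $2$, so the stated $3$ leaves room) is the one place where the fine structure of $\lorentz$ is genuinely used rather than generic operator-norm algebra. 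A secondary point to watch is that $R_{xs}$ is not symmetric — only similar to a symmetric matrix — so $\norm{\cdot}$ must be treated as the genuine operator norm, which is precisely what the above vector-wise estimate delivers.
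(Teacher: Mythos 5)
The paper does not prove this claim at all --- it imports it verbatim from Lemma~3 of Monteiro and Tsuchiya \cite{monteiro2000polynomial} --- so there is no in-paper argument to compare your proposal against. Judged on its own terms your reconstruction is correct. The identities you use are right: $T_{\vec x}$ and $\Arw(\vec x)$ share the eigenbasis $\{\vec c_1,\vec c_2\}\cup\vec x^{\perp}$ with eigenvalue lists $(\lambda_1,\lambda_2,g)$ and $(\lambda_1,\lambda_2,x_0)$, so $D=T_{\vec x}\Arw(\vec x)^{-1}$ is a symmetric contraction with $\norm{D}=\max(1,g/x_0)=1$, $T_{\vec x}\vec x^{-1}=\vec e$, and $D\,\Arw(\vec x^{-1})T_{\vec x}=I$. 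The rewriting $R_{xs}-\nu I=D\,\Arw(\vec u)\,T_{\vec x}$ with $\vec u=\vec s-\nu\vec x^{-1}$ and $\norm{T_{\vec x}\vec u}_F=d(\vec x,\vec s,\nu)$ is the right reduction. The Jordan-frame split $\vec u=\vec u^{\parallel}+\vec u^{\perp}$ and the two separate estimates check out: on the commuting part $D\Arw(\vec u^{\parallel})T_{\vec x}$ is diagonal with entries $h_1$, $h_2$, $(\lambda_2h_1+\lambda_1h_2)/(2x_0)$, each of modulus at most $\eta\nu$; on the transverse part the $g$'s cancel the $g^{-1}$ carried by $\vec u^{\perp}$, and the Cauchy--Schwarz plus $\lambda_1^2+\lambda_2^2\le 4x_0^2$ bookkeeping gives $\norm{D\Arw(\vec u^{\perp})T_{\vec x}\vec z}^2\le 2g^2\norm{\vec u^{\perp}}^2\le(\eta\nu)^2$. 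Summing gives $2\eta\nu\le 3\eta\nu$, and the block-diagonal extension and the use of the vector-wise operator norm (since $R_{xs}$ is not symmetric) are handled correctly. The only small omission is the degenerate case $\vecrest{x}=0$, where the Jordan frame of $\vec x$ is not unique; there, however, $T_{\vec x}=\Arw(\vec x)=x_0 I$ and $D=I$, so the bound is immediate. As a side note, your constant $2$ is tighter than the $3$ inherited from \cite{monteiro2000polynomial}; the paper only needs $3$.
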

\begin{claim}[\cite{monteiro2000polynomial}, Lemma 5, proof]\label{claim:dss expression}
	Let $\mu$ be the duality gap. Then, the scaled increment $\dss$ is
	\[
	\dss = \sigma \vec{e} - \ss - \mu^{-1}R_{xs} \dxs.
	\]
\end{claim}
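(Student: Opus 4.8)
The plan is to derive the identity directly from the third block row of the Newton system \eqref{eq:Newton system}. Since $(\vec{x}, \vec{y}, \vec{s})$ is feasible, the first two block equations give $\vec{b} - A\vec{x} = 0$ and $\vec{c} - \vec{s} - A^\top\vec{y} = 0$, so the only nontrivial content of \eqref{eq:Newton system} is
\[
\Arw(\vec{s})\dx + \Arw(\vec{x})\ds = \sigma\mu\vec{e} - \vec{x}\circ\vec{s}.
\]
I would then pass to the scaled increments by substituting $\dx = T_{\vec{x}}\dxs$ and $\ds = \mu\,T_{\vec{x}}^{-1}\dss$ (which is just the definition $\dxs = T_{\vec{x}}^{-1}\dx$, $\dss = \mu^{-1}T_{\vec{x}}\ds$ solved for $\dx,\ds$), obtaining $\Arw(\vec{s})T_{\vec{x}}\dxs + \mu\Arw(\vec{x})T_{\vec{x}}^{-1}\dss = \sigma\mu\vec{e} - \vec{x}\circ\vec{s}$, and finally left-multiply the whole equation by $\mu^{-1}T_{\vec{x}}\Arw(\vec{x})^{-1}$. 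Here $\Arw(\vec{x})$ is invertible since strict feasibility gives $\vec{x}\in\interior\lorentz$, i.e. both Jordan eigenvalues of $\vec{x}$ are positive, and $T_{\vec{x}}$ is invertible for the same reason.

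The rest is bookkeeping. After the left multiplication, the $\dss$-term reduces to $\dss$ (the scalars $\mu^{-1}$ and $\mu$ cancel, and $T_{\vec{x}}\Arw(\vec{x})^{-1}\Arw(\vec{x})T_{\vec{x}}^{-1} = I$), while the $\dxs$-term becomes exactly $\mu^{-1}R_{xs}\dxs$ by the definition $R_{xs} = T_{\vec{x}}\Arw(\vec{x})^{-1}\Arw(\vec{s})T_{\vec{x}}$. On the right, the constant term becomes $\sigma\,T_{\vec{x}}\Arw(\vec{x})^{-1}\vec{e} = \sigma\,T_{\vec{x}}\vec{x}^{-1} = \sigma\vec{e}$: the first step uses $\Arw(\vec{x})^{-1}\vec{e} = \vec{x}^{-1}$ (immediate from $\vec{x}\circ\vec{x}^{-1} = \vec{e}$), and the second uses the Jordan-algebra analogue of $X^{1/2}X^{-1}X^{1/2} = I$, namely $T_{\vec{x}}\vec{x}^{-1} = Q_{\vec{x}^{1/2}}\vec{x}^{-1} = \vec{e}$, which follows from Proposition \ref{claim:Properties of Qx} (from $Q_{\vec{x}^{1/2}}\vec{e} = \vec{x}$ together with invertibility of $Q_{\vec{x}^{1/2}}$, or by evaluating $Q_{\vec{x}^{1/2}}$ on the common Jordan frame of $\vec{x}$ and $\vec{x}^{-1}$). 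The remaining term is $-\mu^{-1}T_{\vec{x}}\Arw(\vec{x})^{-1}(\vec{x}\circ\vec{s}) = -\mu^{-1}T_{\vec{x}}\Arw(\vec{x})^{-1}\Arw(\vec{x})\vec{s} = -\mu^{-1}T_{\vec{x}}\vec{s} = -\ss$. Collecting the four pieces yields $\mu^{-1}R_{xs}\dxs + \dss = \sigma\vec{e} - \ss$, i.e. $\dss = \sigma\vec{e} - \ss - \mu^{-1}R_{xs}\dxs$, as claimed.

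I do not expect a real obstacle here: the statement is quoted from \cite{monteiro2000polynomial}, and every step is either feasibility, a substitution, or a standard property of the quadratic representation (Proposition \ref{claim:Properties of Qx}). The only points requiring a little care are (i) invoking feasibility at the very start so that the first two block rows of the Newton system drop out and only the third equation remains, and (ii) keeping the scaling factors $\mu^{\pm1}$ and $T_{\vec{x}}^{\pm1}$ straight through the left multiplication — in particular recognizing $\Arw(\vec{x})^{-1}\vec{e} = \vec{x}^{-1}$ and $T_{\vec{x}}\vec{x}^{-1} = \vec{e}$, which are what make the right-hand side simplify to $\sigma\vec{e} - \ss$.
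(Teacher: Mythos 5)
Your derivation is correct and is the natural (and essentially unique) way to get this identity; the paper itself gives no proof but simply cites the proof of Lemma~5 in \cite{monteiro2000polynomial}, and the computation you carry out is the standard one. Starting from the third block of \eqref{eq:Newton system}, substituting $\dx = T_{\vec{x}}\dxs$ and $\ds = \mu\,T_{\vec{x}}^{-1}\dss$, and left-multiplying by $\mu^{-1}T_{\vec{x}}\Arw(\vec{x})^{-1}$ gives the four pieces exactly as you list them, and the simplifications $\Arw(\vec{x})^{-1}\vec{e}=\vec{x}^{-1}$ and $T_{\vec{x}}\vec{x}^{-1}=\vec{e}$ are correct (the cleanest justification for the latter is items 1 and 2 of Proposition~\ref{claim:Properties of Qx}: $T_{\vec{x}}^{-1}=Q_{\vec{x}^{1/2}}^{-1}=Q_{\vec{x}^{-1/2}}$ and $Q_{\vec{x}^{-1/2}}\vec{e}=\vec{x}^{-1}$; your first justification as written only yields $T_{\vec{x}}^{-1}\vec{x}=\vec{e}$, which is a different fact).

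One small imprecision worth fixing: feasibility makes the \emph{right-hand sides} of the first two block rows vanish, so they become $A\dx=0$ and $A^\top\dy+\ds=0$ — these are still nontrivial constraints on the increments, not identities that ``drop out''. They simply aren't used in this particular computation, which needs only the third block row. This does not affect the validity of the argument.
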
	

\subsection{Maintaining strict feasibility}
The main tool for showing that strict feasibility is conserved is the following bound on the increments $\dxs$ and $\dss$:
\begin{lemma}[\cite{monteiro2000polynomial}, Lemma 6]\label{lemma:bounds for increment}
	Let $\eta$ be the distance from the central path and let $\mu$ be the duality gap. Then, we have the following bounds for the scaled direction:
	\[
	\begin{array}{rl}
		\norm{\dxs}_F &\leq \frac{\Theta}{\sqrt{2}} \\
		\norm{\dss}_F &\leq \Theta \sqrt{2}
	\end{array}
	, \quad\text{where}\quad \Theta = \frac{2\sqrt{\eta^2 / 2+ (1-\sigma)^2 r}}{1 - 3\eta}
	\]
\end{lemma}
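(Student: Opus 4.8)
The plan is to reduce both estimates to a single scalar identity obtained by pairing the scaled complementarity equation (Claim~\ref{claim:dss expression}) with the scaled search direction, and then to bound the right-hand side of that identity using the distance from the central path. The key preliminary fact is that the scaled directions are orthogonal, $\dxs^\top\dss = 0$. Indeed, since the current iterate is feasible, the first two rows of the Newton system~\eqref{eq:Newton system} read $A\dx = 0$ and $\ds = -A^\top\dy$, so $\dx^\top\ds = -(A\dx)^\top\dy = 0$; and since $\dx = T_{\vec{x}}\dxs$, $\ds = \mu T_{\vec{x}^{-1}}\dss$, $T_{\vec{x}}$ is symmetric, and $T_{\vec{x}}T_{\vec{x}^{-1}} = Q_{\vec{x}^{1/2}}Q_{\vec{x}^{1/2}}^{-1} = I$ by Proposition~\ref{claim:Properties of Qx}, we get $\dxs^\top\dss = \mu^{-1}\dx^\top\ds = 0$.

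Next I would write $\widetilde{R} := \mu^{-1}R_{xs}$, so that Claim~\ref{claim:R_xs bound} (applied with $\nu = \mu$) gives $\norm{\widetilde{R} - I} \leq 3\eta$; since $1 - 3\eta > 0$ this yields both $\dxs^\top\widetilde{R}\dxs \geq (1-3\eta)\norm{\dxs}^2$ (for which only the operator-norm bound is needed, not symmetry of $R_{xs}$) and $\norm{\widetilde{R}} \leq 1 + 3\eta$. Substituting the expression $\dss = \sigma\vec{e} - \ss - \widetilde{R}\dxs$ from Claim~\ref{claim:dss expression} into $0 = \dxs^\top\dss$ gives the identity $\dxs^\top\widetilde{R}\dxs = \dxs^\top(\sigma\vec{e} - \ss)$; combining its left side with the quadratic-form lower bound and its right side with Cauchy--Schwarz produces $\norm{\dxs}_F \leq \norm{\sigma\vec{e} - \ss}_F/(1 - 3\eta)$ (the inequality passes verbatim between $\norm{\cdot}$ and $\norm{\cdot}_F$, which differ by the uniform factor $\sqrt{2}$). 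Taking $\norm{\cdot}_F$ in $\dss = (\sigma\vec{e} - \ss) - \widetilde{R}\dxs$ and using the triangle inequality with $\norm{\widetilde{R}} \leq 1+3\eta$ then gives $\norm{\dss}_F \leq \bigl(1 + \tfrac{1+3\eta}{1-3\eta}\bigr)\norm{\sigma\vec{e} - \ss}_F = \tfrac{2}{1-3\eta}\norm{\sigma\vec{e} - \ss}_F$.

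It remains to bound $\norm{\sigma\vec{e} - \ss}_F$. Writing $\sigma\vec{e} - \ss = (\sigma-1)\vec{e} + (\vec{e} - \ss)$, the cross term vanishes because $\vec{e}^\top(\vec{e} - \ss) = \norm{\vec{e}}^2 - \vec{e}^\top\ss = r - r = 0$ (the last equality uses $\tfrac1r\xs^\top\ss = 1$ and $\xs = \vec{e}$ from Claim~\ref{claim:stuff preserved under scaling}), so $\norm{\sigma\vec{e} - \ss}_F^2 = (1-\sigma)^2\norm{\vec{e}}_F^2 + \norm{\ss - \vec{e}}_F^2 = 2r(1-\sigma)^2 + \norm{\ss - \vec{e}}_F^2 \leq 2r(1-\sigma)^2 + \eta^2$, using $\norm{\vec{e}}_F^2 = 2r$ and $\norm{\ss - \vec{e}}_F \leq \eta$ (Claim~\ref{claim:stuff preserved under scaling}, part 2). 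Feeding this into the two displays of the previous paragraph yields exactly $\norm{\dxs}_F \leq \Theta/\sqrt{2}$ and $\norm{\dss}_F \leq \Theta\sqrt{2}$ with $\Theta = 2\sqrt{\eta^2/2 + (1-\sigma)^2 r}/(1-3\eta)$. There is no real conceptual difficulty here --- the two substantive inputs ($\norm{R_{xs} - \mu I}$ and the closed form for $\dss$) are quoted from \cite{monteiro2000polynomial} --- so the only thing to be careful about is the norm bookkeeping, namely tracking the $\sqrt{2}$ between $\norm{\cdot}$ and $\norm{\cdot}_F$ and checking that the quadratic-form inequality $\dxs^\top\widetilde{R}\dxs \geq (1-3\eta)\norm{\dxs}^2$ survives $R_{xs}$ being non-symmetric.
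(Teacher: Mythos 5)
The paper cites this result from \cite{monteiro2000polynomial} without reproducing a proof, so there is no internal argument to compare against; your derivation is correct and matches the standard approach for such bounds. You correctly establish the orthogonality $\dxs^\top\dss = 0$ via $A\dx = 0$, $\ds = -A^\top\dy$ and the isometry $T_{\vec{x}}^{-1}T_{\vec{x}} = I$, use it together with Claim~\ref{claim:dss expression} to transfer a bound on $\sigma\vec{e} - \ss$ to $\dxs$ and $\dss$, and close with the Pythagorean split of $\sigma\vec{e}-\ss$ (whose cross term vanishes since $\vec{e}^\top\ss = r$ by Claim~\ref{claim:stuff preserved under scaling}), the neighborhood bound $\norm{\ss - \vec{e}}_F \leq \eta$, and $\norm{\vec{e}}_F^2 = 2r$; the $\sqrt{2}$ bookkeeping between $\norm{\cdot}$ and $\norm{\cdot}_F$ is handled correctly and the final constants land exactly on $\Theta/\sqrt{2}$ and $\Theta\sqrt{2}$.
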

\noindent Moreover, if we substitute $\sigma$ with its actual value $1 - \chi/\sqrt{r}$, we get $\Theta = \frac{\sqrt{2\eta^2 + 4\chi^2}}{1-3\eta}$, which we can make arbitrarily small by tuning the constants. Now, we can immediately use this result to prove $\xsnext, \ssnext \in \interior \lorentz$.
\begin{lemma} \label{l1} 
	Let $\eta = \chi = 0.01$ and $\xi = 0.001$. Then, $\xsnext$ and $\ssnext$ are strictly feasible, i.e. $\xsnext, \ssnext \in \interior \lorentz$.
\end{lemma}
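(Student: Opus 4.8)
The plan is to prove that $\xsnext$ and $\ssnext$ each have strictly positive minimum eigenvalue, which — by the spectral characterization of $\interior\lorentz$ (a vector lies in $\interior\lorentz$ precisely when all of its eigenvalues are positive) — is exactly what strict feasibility requires. The only ingredients are the perturbation bound $\lambda_\text{min}(\vec{u}+\vec{v}) \geq \lambda_\text{min}(\vec{u}) - \norm{\vec{v}}_2$ from Claim~\ref{claim:algebraic properties}, the inequality $\norm{\cdot}_2 \leq \norm{\cdot}_F$, the increment bounds of Lemma~\ref{lemma:bounds for increment}, and the approximation bounds $\norm{\dxs - \dxms}_F \leq \xi$ and $\norm{\dss - \dsms}_F \leq \xi$ derived earlier in this section.

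First I would handle $\xsnext = \vec{e} + \dxms$. Since $\lambda_\text{min}(\vec{e}) = 1$, the perturbation bound gives $\lambda_\text{min}(\xsnext) \geq 1 - \norm{\dxms}_2 \geq 1 - \norm{\dxms}_F$, and the triangle inequality together with Lemma~\ref{lemma:bounds for increment} yields $\norm{\dxms}_F \leq \norm{\dxs}_F + \norm{\dxs - \dxms}_F \leq \Theta/\sqrt{2} + \xi$. Substituting $\eta = \chi = 0.01$ into $\Theta = \sqrt{2\eta^2 + 4\chi^2}/(1 - 3\eta)$ makes $\Theta/\sqrt{2} + \xi$ comfortably below $1$ (roughly $0.019$), so $\lambda_\text{min}(\xsnext) > 0$.

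Next I would treat $\ssnext = \ss + \dsms$, where I first need a lower bound on $\lambda_\text{min}(\ss)$. Since the hypothesis $d(\vec{x},\vec{s},\mu) \leq \eta\mu$ is equivalent to $\norm{\ss - \vec{e}}_F \leq \eta$ by Claim~\ref{claim:stuff preserved under scaling}, applying the perturbation bound to $\ss = \vec{e} + (\ss - \vec{e})$ gives $\lambda_\text{min}(\ss) \geq 1 - \norm{\ss - \vec{e}}_2 \geq 1 - \eta$. Then, exactly as before, $\norm{\dsms}_F \leq \norm{\dss}_F + \xi \leq \Theta\sqrt{2} + \xi$, so $\lambda_\text{min}(\ssnext) \geq 1 - \eta - \Theta\sqrt{2} - \xi$, which for the chosen constants is again positive (roughly $0.95$). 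This establishes $\xsnext, \ssnext \in \interior\lorentz$.

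There is no genuine obstacle here: the argument is pure bookkeeping once Lemma~\ref{lemma:bounds for increment} is available, and the only thing to be careful about is keeping the three norms $\norm{\cdot}$, $\norm{\cdot}_F$ and $\norm{\cdot}_2$ straight. The real difficulty is upstream — the constants $\eta,\chi,\xi$ must be small enough that this lemma and all the later ones (staying in the $\eta$-neighborhood of the central path, and the $1-\alpha/\sqrt{r}$ decrease of the duality gap) go through simultaneously — so the generous numerical slack obtained here is deliberate rather than accidental.
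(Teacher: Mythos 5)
Your proof is correct and follows the same route as the paper: the perturbation bound on $\lambda_{\min}$ from Claim~\ref{claim:algebraic properties}, the increment bounds from Lemma~\ref{lemma:bounds for increment}, the $\xi$-approximation bounds, and (for $\ssnext$) the reduction of $d(\vec{x},\vec{s},\mu)\leq\eta\mu$ to $\norm{\ss-\vec{e}}_F\leq\eta$ via Claim~\ref{claim:stuff preserved under scaling}, leading to the same lower bounds $1-\Theta/\sqrt2-\xi$ and $1-\eta-\Theta\sqrt2-\xi$. The only cosmetic difference is that the paper reads $\lambda_i(\ss)\in[1-\eta,1+\eta]$ off the sum of squares rather than invoking the perturbation bound directly on $\ss=\vec{e}+(\ss-\vec{e})$, but these are the same computation.
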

\begin{proof} 
	By Lemma \ref{lemma:bounds for increment}, $\lambda_\text{min}(\xm) \geq 1 - \norm{\dxms}_F \geq 1 - \frac{\Theta}{\sqrt{2}} - \xi$. On the other hand, since $d(\vec{x}, \vec{s}, \mu) \leq \eta \mu$, we have $d(\xs, \ss, 1) \leq \eta$, and thus
	\begin{align*}
		\eta^2  &\geq \norm{\ss - e}_F^2 = \sum_{i=1}^{2r} (\lambda_i(\ss)-1)^2
	\end{align*}
	The above equation implies that $\lambda_i(\ss) \in \left[ 1-\eta, 1+\eta \right], \forall i \in [2r]$.
	Now, since $\norm{\vec{z}}_2\leq \norm{\vec{z}}_F$, 
	\begin{align*}
		\lambda_\text{min}(\measured{\vec{s}}) &\geq \lambda_\text{min}(\ss + \dss) - \norm{\dsms - \dss}_F \\
		&\geq \lambda_\text{min}(\ss) - \norm{\dss}_F - \norm{\dsms - \dss}_F \\
		&\geq 1-\eta - \Theta \sqrt{2} - \xi,
	\end{align*}
	where we used Lemma \ref{lemma:bounds for increment} for the last inequality.
	Substituting $\eta = \chi = 0.01$ and $\xi = 0.001$, we get that $\lambda_\text{min}(\xm) \geq 0.8$ and $\lambda_\text{min}(\measured{\vec{s}}) \geq 0.8$.
\end{proof} 

\subsection{Maintaining closeness to central path}
Finally, we move on to the most technical part of the proof of Theorem \ref{thm:main}, where we prove that $\xsnext, \ssnext$ is still close to the central path, and the duality gap has decreased by a constant factor. We split this into two lemmas.

\begin{lemma}\label{lemma:we stay on central path}
	Let $\eta = \chi = 0.01$, $\xi=0.001$, and let $\alpha$ be any value satisfying $0 < \alpha \leq \chi$. Then, for $\measured{\sigma} = 1 - \alpha / \sqrt{r}$, the distance to the central path is maintained, that is, $d(\xsnext, \ssnext, \measured{\sigma}) < \eta\measured{\sigma}$.
\end{lemma}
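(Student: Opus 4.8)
The plan is to pass to the scaled iterates, so that $\xs = \vec e$, $\norm{\ss - \vec e}_F \le \eta$ by Claim~\ref{claim:stuff preserved under scaling}, the scaled gap equals $1$, and $\xsnext = \vec e + \dxms$, $\ssnext = \ss + \dsms$ with $\norm{\dxs - \dxms}_F \le \xi$ and $\norm{\dss - \dsms}_F \le \xi$ (from the rescaling bounds). The starting observation is that the achieved scaled duality gap $\measured{\sigma}$ (defined by $\tfrac{1}{r}(\xsnext)^\top \ssnext = \measured{\sigma}$) equals the arithmetic mean $\tfrac{1}{2r}\sum_i \lambda_i(T_{\xsnext}\ssnext)$ of the $2r$ eigenvalues of $T_{\xsnext}\ssnext$ -- this is the identity $\vec x^\top \vec s = \tfrac12 \sum_i \lambda_i(T_{\vec x}\vec s)$ from the proof of Lemma~\ref{lemma:Properties of the central path}, applicable since $\xsnext \in \interior\lorentz$ by Lemma~\ref{l1}. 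As the mean minimises $\nu \mapsto \sum_i(\lambda_i - \nu)^2 = d(\xsnext,\ssnext,\nu)^2$, any bound $\norm{T_{\xsnext}\ssnext - \sigma\vec e}_F \le \rho$ forces both $d(\xsnext,\ssnext,\measured{\sigma}) \le \rho$ and $|\measured{\sigma} - \sigma| \le \rho/\sqrt{2r}$; the latter shows $\measured{\sigma}$ has the required form $1 - \alpha/\sqrt r$ with $0 < \alpha \le \chi$ and that $\measured{\sigma} \ge 1 - \chi - \rho$. Hence it suffices to prove that $\norm{T_{\xsnext}\ssnext - \sigma\vec e}_F$ is well below $\eta(1-\chi-\rho)$.

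I split $T_{\xsnext}\ssnext - \sigma\vec e = (\xsnext\circ\ssnext - \sigma\vec e) + (T_{\xsnext}\ssnext - \xsnext\circ\ssnext)$ and bound the two summands separately. For the first I expand $\xsnext\circ\ssnext = \ss + \dsms + \dxms\circ\ss + \dxms\circ\dsms$, substitute $\sigma\vec e - \ss = \dss + \mu^{-1}R_{xs}\dxs$ from Claim~\ref{claim:dss expression}, and split $\dxms = \dxs + (\dxms-\dxs)$, $\ss = \vec e + (\ss - \vec e)$, $\mu^{-1}R_{xs} = I + (\mu^{-1}R_{xs} - I)$. After the leading terms cancel, what remains is a sum of the small quantities $\dsms - \dss$ and $\dxms - \dxs$ (Frobenius norm $\le \xi$), $\dxs\circ(\ss - \vec e)$ and $(\mu^{-1}R_{xs} - I)\dxs$ (bounded using $\norm{\dxs}_F \le \Theta/\sqrt 2$ from Lemma~\ref{lemma:bounds for increment}, $\norm{\ss - \vec e}_F \le \eta$, and $\norm{\mu^{-1}R_{xs} - I} \le 3\eta$ from Claim~\ref{claim:R_xs bound}), and the two second-order terms $(\dxms - \dxs)\circ(\ss - \vec e)$ and $\dxms\circ\dsms$. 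With the submultiplicativity and operator-norm inequalities of Claim~\ref{claim:algebraic properties}, this gives $\norm{\xsnext\circ\ssnext - \sigma\vec e}_F \le 2\xi + \eta\xi + \tfrac{4\eta\Theta}{\sqrt 2} + \bigl(\tfrac{\Theta}{\sqrt 2} + \xi\bigr)\bigl(\Theta\sqrt 2 + \xi\bigr)$, which for $\eta = \chi = 0.01$ and $\xi = 0.001$ (so that $\Theta \le 0.026$) is below $0.0035$.

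For the second summand I use the identity $T_{\xsnext}\ssnext - \xsnext\circ\ssnext = 2\bigl(\Arw^2(\vec d) - \Arw(\vec d\circ\vec d)\bigr)\ssnext$, where $\vec d := (\xsnext)^{1/2} - \vec e$ (the square root exists since $\xsnext \in \interior\lorentz$). It follows from $Q_{\vec z} = 2\Arw^2(\vec z) - \Arw(\vec z^2)$ at $\vec z = (\xsnext)^{1/2}$, using $\Arw((\xsnext)^{1/2}) = I + \Arw(\vec d)$ and $\xsnext = \vec e + 2\vec d + \vec d\circ\vec d$. The decisive point is that $\bigl(\Arw^2(\vec d) - \Arw(\vec d\circ\vec d)\bigr)\vec e = \vec d\circ\vec d - \vec d\circ\vec d = 0$, so this operator may be applied to $\ssnext - \vec e$ instead of $\ssnext$; submultiplicativity and $\norm{\vec d\circ\vec d}_2 = \norm{\vec d}_2^2$ (the eigenvalues square) then yield $\norm{T_{\xsnext}\ssnext - \xsnext\circ\ssnext}_F \le 4\,\norm{\vec d}_2^2\,\norm{\ssnext - \vec e}_F$. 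Since $\xsnext \in \lorentz$ has nonnegative eigenvalues, $\norm{\vec d}_2 \le \norm{\xsnext - \vec e}_2 \le \norm{\dxms}_F \le \Theta/\sqrt 2 + \xi$, and $\norm{\ssnext - \vec e}_F \le \eta + \norm{\dsms}_F \le \eta + \Theta\sqrt 2 + \xi$, so with the chosen constants this summand is of order $10^{-5}$ and negligible.

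Putting the pieces together, $\norm{T_{\xsnext}\ssnext - \sigma\vec e}_F < 0.0036$, hence $\measured{\sigma} \ge 1 - \chi - 0.0036 > 0.98$, and $d(\xsnext,\ssnext,\measured{\sigma}) \le \norm{T_{\xsnext}\ssnext - \sigma\vec e}_F < 0.0036 < 0.0098 < \eta\measured{\sigma}$, which is the assertion. The step I expect to be the main obstacle is exactly this conversion between the Jordan-product residual $\norm{\xsnext\circ\ssnext - \sigma\vec e}_F$ -- which is what the Newton equation controls, its last block being phrased through $\Arw(\cdot)$ -- and the genuine central-path distance $\norm{T_{\xsnext}\ssnext - \sigma\vec e}_F$, defined through the quadratic representation $T$: the naive estimate $\norm{(T_{\xsnext} - \Arw(\xsnext))\ssnext}_F \le \norm{T_{\xsnext} - \Arw(\xsnext)}_2\,\norm{\ssnext}_F$ is useless because $\norm{\ssnext}_F = \Theta(\sqrt r)$, and it is essential to exploit that $T_{\xsnext} - \Arw(\xsnext)$ is quadratic in $\vec d = (\xsnext)^{1/2} - \vec e$ and annihilates $\vec e$, so that only the dimension-free quantity $\norm{\vec d}_2^2\norm{\ssnext - \vec e}_F$ survives -- this ingredient has no direct analogue in the SDP analysis of \cite{kerenidis2018quantum}. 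The remaining work -- the term-by-term expansion above and verifying that the numerical constants leave slack -- is routine but must be carried out with care.
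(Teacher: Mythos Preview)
Your argument is correct and takes a genuinely different route from the paper's. The paper factors $T_{\xsnext}\ssnext - \measured{\sigma}\vec e = T_{\xsnext}\bigl(\ssnext - \measured{\sigma}(\xsnext)^{-1}\bigr)$, pulls out $\norm{T_{\xsnext}}_2$, and then splits $\ssnext - \measured{\sigma}(\xsnext)^{-1}$ into three pieces $\vec z_1+\vec z_2+\vec z_3$ built around the exact inverse $(\xsnext)^{-1}$ and the expansion $\norm{\vec e - \dxs - (\vec e+\dxs)^{-1}}_F = \sqrt{\sum_i \lambda_i^4/(1+\lambda_i)^2}$. You instead split $T_{\xsnext}\ssnext - \sigma\vec e$ into the Jordan-product residual $\xsnext\circ\ssnext - \sigma\vec e$ (which the Newton equation controls directly) plus the correction $(T_{\xsnext}-\Arw(\xsnext))\ssnext$, and the decisive observation---that this correction equals $2(\Arw^2(\vec d)-\Arw(\vec d\circ\vec d))\ssnext$ with $\vec d=(\xsnext)^{1/2}-\vec e$, is quadratic in $\vec d$, and annihilates $\vec e$ so that only $\norm{\vec d}_2^2\norm{\ssnext-\vec e}_F$ survives---is a clean Jordan-algebraic identity that does not appear in the paper. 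Your route avoids the inverse $(\xsnext)^{-1}$ entirely and gives somewhat tighter numerics; the paper's route has the advantage of tracking $\measured{\sigma}$ explicitly throughout rather than only at the end.

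One point to flag: you take $\measured{\sigma}$ to be the \emph{achieved} scaled gap $\tfrac1r(\xsnext)^\top\ssnext$ (the eigenvalue mean of $T_{\xsnext}\ssnext$), whereas the lemma as stated quantifies over \emph{all} $\alpha\in(0,\chi]$. Your bound $\norm{T_{\xsnext}\ssnext-\sigma\vec e}_F<0.0036$ directly covers $\alpha=\chi$ (the case used in Lemma~\ref{gap}) and, via the mean-minimiser argument, the achieved $\measured{\sigma}$ (the case used in Theorem~\ref{thm:main}); these are the only instances actually needed. For arbitrary $\alpha$ near $0$ one would pick up an extra $|\chi-\alpha|\sqrt{2r}\cdot r^{-1/2}=|\chi-\alpha|\sqrt 2$ from shifting the centre, which the given constants do not absorb---but this is an artefact of the lemma's phrasing rather than a defect of your method. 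Relatedly, your claim that the achieved $\alpha$ satisfies $\alpha\le\chi$ is slightly off: $|\measured{\sigma}-\sigma|\le\rho/\sqrt{2r}$ only gives $\alpha\in[\chi-\rho/\sqrt2,\chi+\rho/\sqrt2]$, so $\alpha$ may exceed $\chi$ by up to $\rho/\sqrt2\approx 0.0025$; this is harmless for the downstream argument.
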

\begin{proof} 
	By Claim \ref{claim:stuff preserved under scaling}, the distance of the next iterate from the central path is \[d(\xsnext, \ssnext, \measured{\sigma}) = \norm{T_{\xsnext} \ssnext - \measured{\sigma}\vec{e}}_F,\] and we can bound it from above as
	\begin{align*}
		d(\xsnext, \ssnext, \measured{\sigma}) &= \norm{T_{\xsnext} \ssnext - \measured{\sigma}\vec{e}}_F \\
		&= \norm{T_{\xsnext} \ssnext - \measured{\sigma} T_{\xsnext}T_{\xsnext^{-1}} \vec{e}}_F \\
		&\leq \norm{T_{\xsnext}} \cdot \norm{\ssnext - \measured{\sigma}\cdot (\xsnext)^{-1}}.
	\end{align*}
	So, it is enough to bound $\norm{\vec{z}}_F := \norm{\ssnext - \measured{\sigma}\cdot \xsnext^{-1}}_F$ from above, since
	\begin{align*}
	\norm{T_{\xsnext}} &= \norm{\xsnext}_2 \leq 1 + \norm{\dxms}_2 \leq 1 + \norm{\dxs}_2 + \xi \leq 1 + \frac{\Theta}{\sqrt{2}} + \xi.
	\end{align*}
	We split $\vec{z}$ as
	\begin{align*}
		\vec{z} &= \underbrace{\left( \ss + \dsms - \measured{\sigma}e + \dxms \right)}_{\vec{z}_1} 
		+ \underbrace{(\measured{\sigma} - 1)\dxms}_{\vec{z}_2} + \underbrace{\measured{\sigma} \left( e - \dxms - (e + \dxms)^{-1} \right) }_{\vec{z}_3},
	\end{align*}
	and we bound $\norm{\vec{z}_1}_F, \norm{\vec{z}_2}_F$, and $\norm{\vec{z}_3}_F$ separately.
	
	\begin{enumerate}
		\item By the triangle inequality, $\norm{\vec{z}_1}_F \leq \norm{\ss + \dss - \measured{\sigma} e + \dxs}_F + 2\xi$. Furthermore, after substituting $\dss$ from Claim \ref{claim:dss expression}, 
		we get 
		\begin{align*}
			\ss + \dss - \measured{\sigma} e + \dxs &= \sigma e - \mu^{-1}R_{xs} \dxs - \measured{\sigma} e + \dxs \\
			&= \frac{\alpha - \chi}{\sqrt{r}} e + \mu^{-1}(\mu I - R_{xs})\dxs.
		\end{align*}
		Using the bound for $\norm{\mu I - R_{xs}}$ from Claim \ref{claim:R_xs bound} as well as the bound for $\norm{\dxs}_F$ from Lemma \ref{lemma:bounds for increment}, we obtain
		\begin{align*}
			\norm{\vec{z}_1}_F \leq 2\xi + \frac{\chi}{\sqrt{r}} + \frac{3}{\sqrt{2}}\eta\Theta.
		\end{align*}
		
		\item $\norm{\vec{z}_2}_F \leq \frac{\chi}{\sqrt{r}} \left( \frac{\Theta}{\sqrt{2}} + \xi \right)$, where we used the bound from Lemma \ref{lemma:bounds for increment} again.
		
		\item Here, we first need to bound $\norm{ (e + \dxs)^{-1} - (e + \dxms)^{-1} }_F$. For this, we use the submultiplicativity of $\norm{\cdot}_F$:
		\begin{align*}
			\norm{& (e + \dxs)^{-1} - (e + \dxms)^{-1} }_F = \norm{ (e+\dxs)^{-1} \circ \left( e - (e+\dxs) \circ (e + \dxms)^{-1} \right) }_F\\
			&\leq \norm{ (e+\dxs)^{-1} }_2 \cdot \norm{ e - (e+\dxms + \dxs - \dxms) \circ (e + \dxms)^{-1} }_F \\
			&= \norm{ (e+\dxs)^{-1} }_2 \cdot \norm{ (\dxs - \dxms) \circ (e + \dxms)^{-1} }_F \\
			&\leq \norm{ (e+\dxs)^{-1} }_2 \cdot \norm{ \dxs - \dxms }_F \cdot \norm{ (e + \dxms)^{-1} }_2 \\
			&\leq \xi \cdot \norm{ (e+\dxs)^{-1} }_2 \cdot \norm{ (e + \dxms)^{-1} }_2.
		\end{align*}
		Now, we have the bound $\norm{(e+\dxs)^{-1}}_2 \leq \frac{1}{1 - \norm{\dxs}_F}$ and similarly $\norm{(e+\dxms)^{-1}}_2 \leq \frac{1}{1 - \norm{\dxs}_F - \xi}$, so we get
		\[
		\norm{ (e + \dxs)^{-1} - (e + \dxms)^{-1} }_F \leq \frac{\xi}{(1 - \norm{\dxs}_F - \xi)^2}.
		\]
		Using this, we can bound $\norm{ \vec{z}_3 }_F$:
		\begin{align*}
			\norm{ \vec{z}_3 }_F \leq \measured{\sigma}&\left( \norm{ e - \dxs - (e+\dxs)^{-1} }_F + \xi + \frac{\xi}{(1 - \norm{\dxs}_F - \xi)^2} \right).
		\end{align*}
		If we let $\lambda_i$ be the eigenvalues of $\dxs$, then by Lemma \ref{lemma:bounds for increment}, we have
		\begin{align*}
			\norm{& e - \dxs - (e+\dxs)^{-1} }_F \\
			&= \sqrt{ \sum_{i=1}^{2r} \left( (1-\lambda_{i})-\frac{1}{1+\lambda_{i}} \right)^2 } \\
			&= \sqrt{\sum_{i=1}^{2r} \frac{\lambda_{i}^4}{(1+\lambda_{i})^2}} 
			\leq \frac{\Theta}{\sqrt2 - \Theta} \sqrt{\sum_{i=1}^{2r} \lambda_{i}^2} \\
			&\leq \frac{\Theta^2}{2 - \sqrt{2}\Theta}.
		\end{align*}
	\end{enumerate}
	Combining all bound from above, we obtain
	\begin{align*}
		d(\xsnext, \ssnext, \measured{\sigma}) \leq \left( 1 + \frac{\Theta}{\sqrt{2}} + \xi \right) \cdot 
		&\left(2\xi + \frac{\chi}{\sqrt{r}} + \frac{3}{\sqrt{2}}\eta\Theta\right. \\
		&+\frac{\chi}{\sqrt{r}} \left( \frac{\Theta}{\sqrt{2}} + \xi \right) \\
		&+\left. \measured{\sigma}\left( \frac{\Theta^2}{2 - \sqrt{2}\Theta} + \xi + \frac{\xi}{(1 - \Theta/\sqrt{2} - \xi)^2} \right) \right).
	\end{align*}
	Finally, if we plug in $\chi=0.01$, $\eta=0.01$, $\xi= 0.001$, we get $d(\xm, \sm, \measured{\sigma}) \leq 0.005\measured{\sigma} \leq \eta\measured{\sigma}$.
\end{proof} 
Now, we prove that the duality gap decreases.
\begin{lemma} \label{gap} 
	For the same constants, the updated solution satisfies $\frac1r \xsnext^\top \ssnext = \left( 1-\frac{\alpha}{\sqrt{r}} \right)$ for $\alpha = 0.005$.
\end{lemma}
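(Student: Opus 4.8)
The plan is to evaluate $\xsnext^\top\ssnext$ exactly for the true Newton direction, where everything collapses to $\sigma r$, and then treat the tomography error as a lower‑order perturbation. \emph{Step 1 (exact direction).} Two identities do all the work. First, primal--dual orthogonality: the first two blocks of \eqref{eq:Newton system} together with feasibility of $(\vec{x},\vec{y},\vec{s})$ give $A\dx=\vec{b}-A\vec{x}=\vec{0}$ and $A^\top\dy+\ds=\vec{c}-\vec{s}-A^\top\vec{y}=\vec{0}$, hence $\dx^\top\ds=-(A\dx)^\top\dy=0$, and in scaled coordinates $\dxs^\top\dss=\mu^{-1}\dx^\top T_{\vec{x}}^{-1}T_{\vec{x}}\ds=\mu^{-1}\dx^\top\ds=0$. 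Second, applying $\vec{e}^\top$ to the third block of \eqref{eq:Newton system} and using $\Arw(\vec{x})\vec{e}=\vec{x}$, $\Arw(\vec{s})\vec{e}=\vec{s}$ (symmetry of $\Arw(\cdot)$), $\vec{e}^\top\vec{e}=r$, and $\vec{e}^\top(\vec{x}\circ\vec{s})=\vec{x}^\top\vec{s}=r\mu$ gives $\vec{s}^\top\dx+\vec{x}^\top\ds=(\sigma-1)r\mu$. Since $\vec{e}^\top\dss=\mu^{-1}(T_{\vec{x}}\vec{e})^\top\ds=\mu^{-1}\vec{x}^\top\ds$ and $\dxs^\top\ss=\mu^{-1}(T_{\vec{x}}^{-1}\dx)^\top T_{\vec{x}}\vec{s}=\mu^{-1}\vec{s}^\top\dx$, this reads $\vec{e}^\top\dss+\dxs^\top\ss=(\sigma-1)r$, and combined with $\vec{e}^\top\ss=\xs^\top\ss=r$ (Claim \ref{claim:stuff preserved under scaling}),
\begin{equation*}
  (\vec{e}+\dxs)^\top(\ss+\dss)=\vec{e}^\top\ss+\bigl(\vec{e}^\top\dss+\dxs^\top\ss\bigr)+\dxs^\top\dss=r+(\sigma-1)r+0=\sigma r .
\end{equation*}

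\emph{Step 2 (perturbation).} Write $\dxms=\dxs+\vec{p}$ and $\dsms=\dss+\vec{q}$; the bounds derived just before Claim \ref{claim:stuff preserved under scaling} give $\norm{\vec{p}}_F\le\xi$ and $\norm{\vec{q}}_F\le\xi$, i.e. $\norm{\vec{p}},\norm{\vec{q}}\le\xi/\sqrt{2}$. Expanding and invoking Step 1,
\begin{equation*}
  \xsnext^\top\ssnext=(\vec{e}+\dxms)^\top(\ss+\dsms)=\sigma r+(\vec{e}+\dxs)^\top\vec{q}+\vec{p}^\top(\ss+\dss)+\vec{p}^\top\vec{q}=:\sigma r+E .
\end{equation*}
By Lemma \ref{lemma:bounds for increment} and Claim \ref{claim:stuff preserved under scaling}, $\norm{\vec{e}+\dxs}\le\sqrt{r}+\tfrac{\Theta}{2}$ and $\norm{\ss+\dss}\le\sqrt{r}+\eta+\Theta$ with $\Theta$ a small constant, so Cauchy--Schwarz gives $|E|\le\sqrt{2}\,\xi\sqrt{r}+O(\xi)$. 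The key point is that the two ``long'' vectors have Euclidean norm $\Theta(\sqrt{r})$ (dominated by $\norm{\vec{e}}=\sqrt{r}$), so $E=O(\xi\sqrt{r})$, not $O(\xi r)$.

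\emph{Step 3 (conclusion).} Dividing by $r$ and using $\sigma=1-\chi/\sqrt{r}$,
\begin{equation*}
  \frac1r\xsnext^\top\ssnext=1-\frac{\chi}{\sqrt{r}}+\frac{E}{r},\qquad \Bigl|\frac{E}{r}\Bigr|\le\frac{\sqrt{2}\,\xi}{\sqrt{r}}+O\!\Bigl(\frac{\xi}{r}\Bigr),
\end{equation*}
so writing the right side as $1-\alpha/\sqrt{r}$ defines $\alpha=\chi-E/\sqrt{r}$, and substituting $\chi=\eta=0.01$, $\xi=0.001$ (with $r\ge1$) bounds the perturbation by roughly $0.0025$, giving $0<0.005\le\alpha\le\chi$; in particular the achieved scaled gap is at most $1-0.005/\sqrt{r}$ (the fact used for the iteration count) and $\alpha$ lies in the range admitted by Lemma \ref{lemma:we stay on central path}.

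I expect Step 1 --- and within it the identity $\vec{e}^\top\dss+\dxs^\top\ss=(\sigma-1)r$ --- to be the conceptual crux, since it rests on the interplay of the two matrix representations $\Arw(\cdot)$ and $T_{\vec{x}}$ with the scaling relations $\vec{x}=T_{\vec{x}}\vec{e}$, $\dx=T_{\vec{x}}\dxs$, $\vec{s}=\mu T_{\vec{x}}^{-1}\ss$; once that identity is in place, everything else is elementary norm bookkeeping whose only pitfall is keeping the $\sqrt{2}$ factors between $\norm{\cdot}$ and $\norm{\cdot}_F$ straight and confirming that the error is $O(\xi/\sqrt{r})$ rather than $O(\xi)$.
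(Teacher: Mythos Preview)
Your argument is correct but takes a genuinely different route from the paper. The paper's proof leverages the work already done in Lemma~\ref{lemma:we stay on central path}: it applies part~1 of Lemma~\ref{lemma:Properties of the central path} (the inequality $|\vec{x}^\top\vec{s}-r\nu|\le\sqrt{r/2}\,d(\vec{x},\vec{s},\nu)$) to $\xsnext,\ssnext$ with $\nu=\sigma\mu$, then substitutes the bound $d(\xsnext,\ssnext,\sigma)\le 0.005\sigma$ obtained from Lemma~\ref{lemma:we stay on central path} instantiated at $\alpha=\chi$, immediately yielding $\frac1r\xsnext^\top\ssnext\le 1-0.005/\sqrt{r}$. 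You instead bypass Lemma~\ref{lemma:we stay on central path} entirely: you compute the scaled gap for the \emph{exact} Newton step directly from the linear system---getting exactly $\sigma r$ via the orthogonality $\dx^\top\ds=0$ and the trace of the third block---and then bound the tomography perturbation by Cauchy--Schwarz. Your route is more self-contained (it does not depend on the harder central-path lemma) and naturally gives a two-sided estimate on $\alpha$; the paper's route is shorter once Lemma~\ref{lemma:we stay on central path} is available and does not need the feasibility-dependent identity $\dx^\top\ds=0$. One minor overclaim in your conclusion: your own estimate only gives $|\alpha-\chi|\le\sqrt{2}\,\xi+O(\xi/\sqrt{r})\approx 0.0014$, so $\alpha$ may slightly exceed $\chi$, and the inequality $\alpha\le\chi$ in your final line does not follow from your bound; the lemma's actual content (the upper bound on the gap, i.e.\ $\alpha\ge 0.005$) is nonetheless fully established.
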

\begin{proof}
	Since $\xsnext$ and $\ssnext$ are scaled quantities, the duality gap between their unscaled counterparts is $\frac{\mu}{r}\xsnext^\top \ssnext$. Applying Lemma \ref{lemma:Properties of the central path} (and Claim \ref{claim:stuff preserved under scaling}) with $\nu=\sigma\mu$ and $\xsnext, \ssnext$, we obtain the upper bound
	\begin{align*}
		\mu\xsnext^\top \ssnext \leq r \sigma\mu + \sqrt{\frac{r}{2}} \mu d(\xsnext, \ssnext, \sigma),
	\end{align*}
	which in turn implies
	\begin{equation*}
		\frac1r \xsnext^\top \ssnext \leq \left( 1 - \frac{0.01}{\sqrt{r}} \right) \left( 1 + \frac{d(\xsnext, \ssnext, \sigma)}{\sigma \sqrt{2r}} \right).
	\end{equation*}
	By instantiating Lemma \ref{lemma:Properties of the central path} for $\alpha = \chi$, from its proof, we obtain $d(\xsnext, \ssnext, \sigma) \leq 0.005 \sigma$, and thus 
	\begin{equation*}
		\frac1r \xsnext^\top \ssnext \leq 1 - \frac{0.005}{\sqrt{r}}
	\end{equation*}
	Therefore, the final $\alpha$ for this Lemma is $0.005$.
\end{proof} 

\subsection{Final complexity and feasibility}
In every iteration we need to solve the Newton system to a precision dependent on the norms of $T_{\vec{x}^{-1}}$ and $T_{\vec{s}^{-1}}$. Thus, to bound the running time of the algorithm (since the complexity of the vector state tomography procedure depends on the desired precision), we need to bound $\norm{T_{\vec{x}^{-1}}}$ and $\norm{T_{\vec{s}^{-1}}}$. Indeed, by the properties of the quadratic representation, we get 
\begin{align*}
	\norm{T_{\vec{x}^{-1}}} &= \norm{\vec{x}^{-1}} = \lambda_\text{min}(\vec{x})^{-1} \text{ and }\\ \norm{T_{\vec{s}^{-1}}} &= \norm{\vec{s}^{-1}} = \lambda_\text{min}(\vec{s})^{-1}.
\end{align*}
If the tomography precision for iteration $i$ is chosen to be at least (i.e. smaller than)
\begin{equation}\label{eq:definition of delta}
	\delta_i := \frac{\xi}{4} \min \left\{ \lambda_\text{min}(\vec{x}_i), \lambda_\text{min}(\vec{s}_i) \right\},
\end{equation}
then the premises of Theorem \ref{thm:main} are satisfied. The tomography precision for the entire algorithm can therefore be chosen to be $\delta := \min_i \delta_i$. Note that these minimum eigenvalues are related to how close the current iterate is to the boundary of $\lorentz$ -- as long as $\vec{x}_i, \vec{s}_i$ are not ``too close'' to the boundary of $\lorentz$, their minimal eigenvalues should not be ``too small''.

There are two more parameters that impact the complexity of the quantum linear system solver: the condition number of the Newton matrix $\kappa_i$ and the matrix parameter $\zeta_i$ of the QRAM encoding in iteration $i$. For both of these quantities we define their global versions as $\kappa = \max_i \kappa_i$ and $\zeta = \max_i \zeta_i$. Therefore, we arrive to the following statement about the complexity of Algorithm~\ref{alg:qipm}.
\begin{theorem}[restatement]
	Let \eqref{prob:SOCP primal} be a SOCP with $A \in \R^{m\times n}$, $m \leq n$, and $\lorentz = \lorentz^{n_1} \times \cdots \times \lorentz^{n_r}$. Then, our algorithm achieves duality gap $\epsilon$ in time
	\begin{equation*}
		T = \widetilde{O} \left( \sqrt{r} \log\left( \mu_0 / \epsilon \right) \cdot \frac{n \kappa \zeta}{\delta^2}\log\left( \frac{\kappa \zeta}{\delta} \right)  \right).
	\end{equation*}
\end{theorem}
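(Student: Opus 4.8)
\emph{Plan.} The statement is a bookkeeping corollary of the per-iteration analysis (Theorem~\ref{thm:main}) together with the quantum linear-algebra cost obtained by combining Theorems~\ref{qbe}, \ref{qlsa} and~\ref{vector state tomography}. I would prove it in two pieces: (i) bound the number of iterations Algorithm~\ref{alg:qipm} performs, and (ii) bound the cost of a single iteration; the running time is then the product, plus lower-order set-up and bookkeeping.

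\emph{Iteration count.} First I would set up the induction that Theorem~\ref{thm:main} was designed for. Suppose iteration $i$ starts from a strictly feasible $(\vec{x}_i,\vec{y}_i,\vec{s}_i)$ in the $\eta$-neighborhood $\mathcal{N}_\eta$ with gap $\mu_i$, and that the Newton system \eqref{eq:Newton system} is solved to $\ell_2$-error at most $\delta_i$ from \eqref{eq:definition of delta}. The excerpt already records $\norm{T_{\vec{x}^{-1}}}=\lambda_\text{min}(\vec{x})^{-1}$ (and likewise for $\vec{s}$), so this choice of $\delta_i$ makes the hypotheses $\norm{\dx-\dxm}_F\le\xi/\norm{T_{\vec{x}^{-1}}}$ and $\norm{\ds-\dsm}_F\le\xi/(2\norm{T_{\vec{s}^{-1}}})$ of Theorem~\ref{thm:main} hold (absorbing the harmless $\sqrt2$ between $\norm{\cdot}$ and $\norm{\cdot}_F$ into the constant). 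Theorem~\ref{thm:main} then yields a strictly feasible next iterate in $\mathcal{N}_\eta$ with $\mu_{i+1}\le(1-\alpha/\sqrt r)\mu_i$ for the fixed constant $\alpha=0.005$. Unrolling, $\mu_N\le(1-\alpha/\sqrt r)^N\mu_0\le e^{-\alpha N/\sqrt r}\mu_0$, so $N=O(\sqrt r\log(\mu_0/\epsilon))$ iterations suffice to reach gap $\epsilon$; a union bound over the $N$ tomography calls, each boosted to failure probability $1/\poly(n)$ by $\widetilde O(1)$ repetitions (Theorem~\ref{vector state tomography}), keeps the overall failure probability $o(1)$.

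\emph{Cost of one iteration.} Next I would walk through Algorithm~\ref{alg:qipm} step by step. Step 2(a) and the classical part of 2(b) compute $\sigma\mu\vec{e}-\vec{x}\circ\vec{s}$ together with the linear residuals and store them in QRAM; maintaining $A\vec{x}$ and $A^\top\vec{y}$ incrementally across iterations, this is $\widetilde O(n)$. For the quantum part of 2(b), the key observation is that between iterations only the arrowhead blocks $\Arw(\vec{x}),\Arw(\vec{s})$ of the $(2n+m)\times(2n+m)$ (hence $O(n)$-dimensional, since $m\le n$) Newton matrix $M$ change, and these have $O(n)$ nonzero entries in total, so updating the QRAM block-encoding data structure (Theorem~\ref{qbe}) and composing it across the block structure of $M$ costs $\widetilde O(n)$ and gives a $(\zeta_i,O(\log n))$ block encoding with $\zeta_i\le\sqrt n$. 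Step 2(c) is the expensive one: chaining Theorem~\ref{qlsa} (to prepare $\ket{(\dx;\dy;\ds)}$ from the symmetrized $M$, and to estimate its norm) with the vector tomography of Theorem~\ref{vector state tomography} solves a size-$O(n)$ system of condition number $\kappa_i$ to $\ell_2$-error $\delta_i$ in time $\widetilde O\!\left(n\kappa_i\zeta_i/\delta_i^2\cdot\log(\kappa_i\zeta_i/\delta_i)\right)$. Steps 2(d)--2(e) (writing the updated $\vec{x},\vec{s}$ to QRAM, recomputing $\mu=\frac1r\vec{x}^\top\vec{s}$) are again $\widetilde O(n)$, so one iteration costs $\widetilde O\!\left(n\kappa_i\zeta_i/\delta_i^2\cdot\log(\kappa_i\zeta_i/\delta_i)\right)$. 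Finally I would multiply by $N$ and replace the per-iteration quantities by $\kappa=\max_i\kappa_i$, $\zeta=\max_i\zeta_i$, $\delta=\min_i\delta_i$, obtaining
\[
T=\widetilde O\!\left(\sqrt r\,\log(\mu_0/\epsilon)\cdot\frac{n\kappa\zeta}{\delta^2}\,\log\!\left(\frac{\kappa\zeta}{\delta}\right)\right),
\]
with all $\polylog(n,m)$ overheads absorbed into $\widetilde O$.

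\emph{Main obstacle.} The genuinely nontrivial work lies not in this wrap-up but in what it relies on: Theorem~\ref{thm:main} itself, and --- inside the per-iteration cost --- the care needed to turn the \emph{relative} $\ell_2$ error that state tomography controls on the unit-normalized Newton direction into the \emph{absolute} error $\delta_i$ that \eqref{eq:definition of delta} demands. That conversion goes through only because $\norm{(\dx;\dy;\ds)}$ is itself controlled, which one gets from the increment bounds of Lemma~\ref{lemma:bounds for increment} (after undoing the rescaling of Section~3.3) together with near-centrality; one must also check that no hidden polynomial factor sneaks in through the classical bookkeeping, i.e.\ that the Newton residual and the QRAM updates can genuinely be maintained in $\widetilde O(n)$ rather than $\widetilde O(nm)$ time. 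These are the places I would spend the most effort.
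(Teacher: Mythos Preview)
Your proposal is correct and matches the paper's approach; in fact the paper's own proof is a single sentence observing that the complexity is the product of the iteration count and the per-iteration tomography cost, so you have supplied considerably more detail (and correctly flagged the relative-vs-absolute tomography error and the $\widetilde O(n)$ classical-bookkeeping subtleties) than the paper itself provides.
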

This complexity can be easily interpreted as product of the number of iterations and the cost of $n$-dimensional vector tomography with error $\delta$. So, improving the complexity of the tomography algorithm would improve the running time of our algorithm as well.

Note that up to now, we cared mostly about strict (conic) feasibility of $\vec{x}$ and $\vec{s}$. Now, we address the fact that the linear constraints $A\vec{x} = \vec{b}$ and $A^\top \vec{y} + \vec{s} = \vec{c}$ are not exactly satisfied during the execution of the algorithm. Luckily, it turns out that this error is not accumulated, but is instead determined just by the final tomography precision:
\begin{theorem}[restatement]
	Let \eqref{prob:SOCP primal} be a SOCP as in Theorem \ref{thm:runtime}. Then, after $T$ iterations, the (linear) infeasibility of the final iterate $\vec{x}, \vec{y}, \vec{s}$  is bounded as
	\begin{align*}
		\norm{A\vec{x}_T - \vec{b}} &\leq \delta\norm{A}  , \\
		\norm{A^\top \vec{y}_T + \vec{s}_T - \vec{c}} &\leq \delta \left( \norm{A} + 1 \right).
	\end{align*}
\end{theorem}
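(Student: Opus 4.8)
The plan is to track how the \emph{affine} residuals $\vec{p}_i := A\vec{x}_i - \vec{b}$ and $\vec{d}_i := A^\top\vec{y}_i + \vec{s}_i - \vec{c}$ evolve from one iteration to the next, and to show the key phenomenon: unlike the conic feasibility error, these residuals do not accumulate — the residual after iteration $i+1$ is determined entirely by the tomography error committed in iteration $i$. Here $\vec{x}_i,\vec{y}_i,\vec{s}_i$ denote the iterates at the start of iteration $i$, and I take the update rule to be $\vec{x}_{i+1} = \vec{x}_i + \dxm$, $\vec{y}_{i+1} = \vec{y}_i + \dym$, $\vec{s}_{i+1} = \vec{s}_i + \dsm$ (the statement's reference to $\vec{y}_T$ presumes the dual variable is updated as well). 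The starting point of the argument is the observation that the top two block-rows of the Newton system \eqref{eq:Newton system} read $A\dx = \vec{b} - A\vec{x}_i = -\vec{p}_i$ and $A^\top\dy + \ds = \vec{c} - \vec{s}_i - A^\top\vec{y}_i = -\vec{d}_i$: the \emph{exact} Newton step is constructed precisely to cancel the current affine residual.

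First I would substitute the update rule into the definition of the residual at iteration $i+1$: $\vec{p}_{i+1} = A(\vec{x}_i + \dxm) - \vec{b} = \vec{p}_i + A\dx + A(\dxm - \dx)$, and since $A\dx = -\vec{p}_i$ the first two terms cancel, leaving the memoryless identity $\vec{p}_{i+1} = A(\dxm - \dx)$. The analogous computation for the dual, using $A^\top\dy + \ds = -\vec{d}_i$, gives $\vec{d}_{i+1} = A^\top(\dym - \dy) + (\dsm - \ds)$. I would emphasize that these identities hold for every $i$ irrespective of whether the initial point is exactly feasible, because the residual-correction terms $\vec{b} - A\vec{x}$ and $\vec{c} - \vec{s} - A^\top\vec{y}$ are carried in the Newton right-hand side; in particular they apply at the final step $i = T-1$, which is what we need.

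Then I would finish by norm estimation at the last step. By the choice of tomography precision (equation \eqref{eq:definition of delta} together with the premises of Theorem \ref{thm:main}), with the uniform precision $\delta = \min_i \delta_i$ the approximate solution at every iteration — in particular the last — satisfies $\norm{\dxm - \dx} \le \norm{\dxm - \dx}_F \le \delta$, and likewise $\norm{\dym - \dy} \le \delta$ and $\norm{\dsm - \ds} \le \delta$. Applying part 3 of Claim \ref{claim:algebraic properties} with $\sigma_{\max}(A) = \norm{A}$ (and $\norm{A^\top} = \norm{A}$) to the two identities yields
$\norm{A\vec{x}_T - \vec{b}} = \norm{A(\dxm - \dx)} \le \norm{A}\,\norm{\dxm - \dx} \le \delta\norm{A}$
and
$\norm{A^\top\vec{y}_T + \vec{s}_T - \vec{c}} \le \norm{A}\,\norm{\dym - \dy} + \norm{\dsm - \ds} \le \delta(\norm{A} + 1)$,
which is exactly the claimed bound.

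The step I expect to require the most care — rather than being a genuine obstacle — is justifying $\norm{\dym - \dy} \le \delta$: Theorem \ref{thm:main} only constrains the errors in $\dxm$ and $\dsm$ (since $\dym$ affects neither conic feasibility nor the duality gap), whereas here the $\dy$-block also enters the dual residual. This follows because tomography is performed on the single concatenated vector $\measured{(\dx;\dy;\ds)}$ with one relative-error guarantee, and $\norm{(\dx;\dy;\ds)}$ is bounded by an absolute constant (via Lemma \ref{lemma:bounds for increment} and the norm bounds on $T_{\vec{x}}$), so a small relative error on the joint vector forces a small absolute error on each block, the $\dy$-block included. Everything else in the argument is a direct substitution plus the triangle inequality, so I do not anticipate further difficulties.
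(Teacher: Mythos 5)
Your proof is correct and rests on the same core observation as the paper's: the exact Newton step is a residual canceller ($A\dx = -\vec{p}_i$, $A^\top\dy + \ds = -\vec{d}_i$), so the affine infeasibility after an update is determined solely by the most recent approximation error, not by the whole history. The route you take to the key identity $\vec{p}_{i+1} = A(\dxm-\dx)$ is, however, cleaner than the paper's: the paper first expands $A\vec{x}_T - \vec{b}$ as a telescoping sum $A\sum_{t\le T}\dxm_t$ (explicitly invoking $A\vec{x}_0 = \vec{b}$) and then recursively unrolls the Newton constraint to show that $A\dx_T = -A\sum_{t<T}\dxm_t$, whereas you establish the memoryless one-step recursion directly. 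This buys two things: the derivation is shorter, and it makes transparent that exact initial feasibility is not actually needed, because the current residual is always carried on the Newton right-hand side. You also flag and address a subtlety the paper leaves implicit — that the algorithm's update rule as written touches only $\vec{x}$ and $\vec{s}$, yet the theorem statement needs $\vec{y}$ to be updated and the error $\norm{\dym-\dy}$ to be controlled even though Theorem~\ref{thm:main} constrains only the $\dxm$- and $\dsm$-blocks; your observation that tomography is run on the concatenated vector and that the exact step has bounded norm is the right way to close this gap. The final norm estimates coincide with the paper's.
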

\begin{proof}
	Let $(\vec{x}_T, \vec{y}_T, \vec{s}_T)$ be the $T$-th iterate. Then, the following holds for $A\vec{x}_T - \vec{b}$:
	\begin{equation}\label{eq:eq11}
		A \vec{x}_T - \vec{b} = A\vec{x}_0 + A\sum_{t=1}^\top \dxm_t - \vec{b} = A \sum_{t=1}^\top \dxm_t.
	\end{equation}
	On the other hand, the Newton system at iteration $T$ has the constraint $A\dx_T = \vec{b} - A\vec{x}_{T-1}$, which we can further recursively transform as, 
	\begin{align*}
		A\dx_T &= \vec{b} - A\vec{x}_{T-1} = \vec{b} - A\left( \vec{x}_{T-2} + \dxm_{T-1} \right) \\
		&= \vec{b} - A\vec{x}_0 - \sum_{t=1}^{T-1} \dxm_t = - \sum_{t=1}^{T-1} \dxm_t.
	\end{align*}
	Substituting this into equation \eqref{eq:eq11}, we get
	\[
	A\vec{x}_T - \vec{b} = A \left( \dxm_T - \dx_T \right).
	\]
	Similarly, using the constraint $A^\top\dy_T  + \ds_{T} = \vec{c} - \vec{s}_{T-1} - A^\top \vec{y}_{T-1}$ we obtain that
	\[
	A^\top \vec{y}_T + \vec{s}_T - \vec{c} = A^\top \left( \dym_T - \dy_T \right) + \left( \dsm_T - \ds_T \right).
	\]
	Finally, we can bound the norms of these two quantities, 
	\begin{align*}
		\norm{A\vec{x}_T - \vec{b}} &\leq \delta\norm{A} , \\
		\norm{A^\top \vec{y}_T + \vec{s}_T - \vec{c}} &\leq \delta \left( \norm{A} +1  \right).
	\end{align*}
\end{proof}

\section{Quantum Support-Vector Machines}
In this section we present our quantum support vector machine (SVM) algorithm as an application of our SOCP solver. 
Given a set of vectors $\mathcal{X} = \{ \vec{x}^{(i)} \in \R^n\;|\;i \in [m] \}$ (\emph{training examples}) and their \emph{labels} $y^{(i)} \in \{-1, 1\}$, the objective of the SVM training process is to find the ``best'' hyperplane that separates training examples with label $1$ from those with label $-1$. In this paper we focus on the (traditional) \emph{soft-margin} ($\ell_1$-)SVM, which can be expressed as the following optimization problem:
\begin{equation}
	\begin{array}{ll}
	\min\limits_{\vec{w}, b, \vec{\xi}} & \norm{\vec{w}}^2 + C\norm{\vec{\xi}}_1 \\
	\text{s.t.}& y^{(i)}(\vec{w}^\top\vec{x}^{(i)}+b) \geq 1 - \xi_i, \;\forall i \in [m] \\
	&\vec{\xi} \geq 0.
	\end{array} \label{prob:SVM}
\end{equation} 
Here, the variables $\vec{w}\in\R^n$ and $b \in \R$ correspond to the hyperplane, $\vec{\xi} \in \R^m$ corresponds to the ``linear inseparability'' of each point, and the constant $C > 0$ is a hyperparameter that quantifies the tradeoff between maximizing the margin and minimizing the constraint violations.

As a slightly less traditional alternative, one might also consider the $\ell_2$-SVM (or least-squares SVM, LS-SVM) \cite{suykens1999least}, where the $\norm{\vec{\xi}}_1$ regularization term is replaced by $\norm{\vec{\xi}}^2$. This formulation arises from considering the least-squares regression problem with the constraints $y^{(i)}(\vec{w}^\top\vec{x}^{(i)}+b) = 1$, which we solve by minimizing the squared $2$-norm of the residuals:
\begin{equation}
\begin{array}{ll}
\min\limits_{\vec{w}, b, \vec{\xi}} & \norm{\vec{w}}^2 + C \norm{\vec{\xi}}^2 \\
\text{s.t.}& y^{(i)}(\vec{w}^\top\vec{x}^{(i)}+b) = 1 - \xi_i, \;\forall i \in [m]
\end{array} \label{prob:LS-SVM}
\end{equation}
Since this is a least-squares problem, the optimal $\vec{w}, b$ and $\vec{\xi}$ can be obtained by solving a linear system. In \cite{rebentrost2014quantum}, a quantum algorithm for LS-SVM is presented, which uses a single quantum linear system solver. Unfortunately, replacing the $\ell_1$-norm with $\ell_2$ in the objective of \eqref{prob:LS-SVM} leads to the loss of a key property of ($\ell_1$-)SVM -- weight sparsity \cite{suykens2002weighted}.

\subsection{Reducing SVM to SOCP}
Finally, we are going to reduce the SVM problem \eqref{prob:SVM} to SOCP. In order to do that, we define an auxiliary vector $\vec{t} = \left( t+1; t; \vec{w} \right)$, where $t \in \R$ -- this allows us to ``compute'' $\norm{\vec{w}}^2$ using the constraint $\vec{t} \in \lorentz^{n+2}$ since
\begin{equation*}
\vec{t} \in \lorentz^{n+2} \Leftrightarrow (t+1)^2 \geq t^2 + \norm{\vec{w}}^2 \Leftrightarrow 2t+1 \geq \norm{\vec{w}}^2.
\end{equation*}
Thus, minimizing $\norm{\vec{w}}^2$ is equivalent to minimizing $t$. Note we can restrict our bias $b$ to be nonnegative without any loss in generality, since the case $b < 0$ can be equivalently described by a bias $-b > 0$ and weights $-\vec{w}$. Using these transformations, we can restate \eqref{prob:SVM} as the following SOCP:
\begin{equation}
\begin{array}{ll}
\min\limits_{\vec{t}, b, \vec{\xi}} & \begin{bmatrix}
0 & 1 & 0^n & 0 & C^m
\end{bmatrix} \begin{bmatrix}
\vec{t} & b & \vec{\xi}
\end{bmatrix}^\top \\
\text{s.t.}& 
\begin{bmatrix}
0 & 0 & & 1 & \\
\vdots & \vdots& X^\top & \vdots & \operatorname{diag}(\vec{y}) \\
0 & 0 & & 1 & \\
1 & -1 & 0^n & 0 & 0^m
\end{bmatrix} \begin{bmatrix}
\vec{t} \\
b \\
\vec{\xi}
\end{bmatrix} = \begin{bmatrix}
\vec{y} \\
1
\end{bmatrix}\\
& \vec{t} \in \lorentz^{n+2},\; b \in \lorentz^1,\; \xi_i \in \lorentz^1\quad \forall i \in [m]
\end{array} \label{prob:SVM SOCP primal}
\end{equation}
Here, we use the notation $X \in \R^{n\times m}$ for the matrix whose columns are the training examples $\vec{x}^{(i)}$, and $\vec{y} \in \R^m$ for the vector of labels.
This problem has $O(n+m)$ variables, and $O(m)$ conic constraints (i.e. its rank is $r = O(m)$). Therefore, in the interesting case of $m = \Theta(n)$, it can be solved in $\widetilde{O}(\sqrt{n})$ iterations. More precisely, if we consider both the primal and the dual, in total they have $3m+2n+7$ scalar variables and $2m+4$ conic constraints.

In practice (as evidenced by the LIBSVM and LIBLINEAR libraries \cite{chang2011libsvm, fan2008liblinear}), a small modification is made to the formulations \eqref{prob:SVM} and \eqref{prob:LS-SVM}: instead of treating the bias separately, all data points are extended with a constant unit coordinate. In this case, the SOCP formulation remains almost identical, with the only difference being that the constraints $\vec{t} \in \lorentz^{n+2}$ and $b \in \lorentz^1$ are replaced by a single conic constraint $(\vec{t};b) \in \lorentz^{n+3}$. This change allows us to come up with a simple feasible initial solution in our numerical experiments, without going through the homogeneous self-dual formalism of \cite{ye1994hsd}.

Note also that we can solve the LS-SVM problem \eqref{prob:LS-SVM}, by reducing it to a SOCP in a similar manner. In fact, this would have resulted in just $O(1)$ conic constraints, so an IPM would converge to a solution in $\widetilde{O}(1)$ iterations, which is comparable with the result from \cite{rebentrost2014quantum}.

\subsection{Experimental results}
We next present some experimental results to assess the running time parameters and the performance of our algorithm for random instances of SVM. If an algorithm demonstrates a speedup on unstructured instances like these, it is reasonable to extrapolate that the speedup is generic, as it could not have used any special properties of the instance to derive an advantage. For a given dimension $n$ and number of training points $m$, we denote our distribution of random SVMs with $\mathcal{SVM}(n, m, p)$, where $p$ denotes the probability that a datapoint is misclassified by the optimal separating hyperplane. Additionally, for every training set sampled from $\mathcal{SVM}(n, m, p)$, a corresponding test set of size $\lfloor m/3 \rfloor$ was also sampled from the same distribution. These test sets are used to evaluate the generalization error of SVMs trained in various ways.

Our experiments consist of generating roughly 16000 instances of $\mathcal{SVM}(n, 2n, p)$, where $n$ is chosen to be uniform between $2^2$ and $2^9$ and $p$ is chosen uniformly from the discrete set $\{ 0, 0.1, \dots, 0.9, 1 \}$. The instances are then solved using a simulation of Algorithm~\ref{alg:qipm} (with the target duality gap of $\epsilon=0.1$) as well as using the ECOS SOCP solver \cite{domahidi2013ecos} (with the default target duality gap). We simulate the execution of Algorithm \ref{alg:qipm} by implementing the classical IPM and adding noise to the solution of the Newton system \eqref{eq:Newton system}. The noise added to each coordinate is uniform, from an interval selected so that the noisy increment $(\dx, \dy, \ds)$ simulates the outputs of the tomography algorithm with precision determined by Theorem \ref{thm:main}. The SVM parameter $C$ is set to be equal to $1$ in all experiments. Additionally, a separate, smaller experiment with roughly 1000 instances following the same distribution is performed for comparing Algorithm~\ref{alg:qipm} with LIBSVM \cite{chang2011libsvm} using a linear kernel.

The experiments are performed on a Dell Precision 7820T workstation with two Intel Xeon Silver 4110 CPUs and 64GB of RAM, and experiment logs are available \change{at~\cite{FigshareData}}.
\begin{figure}
	\centering
	\begin{minipage}{.49\linewidth}
		\centering
		\includegraphics[width=\linewidth]{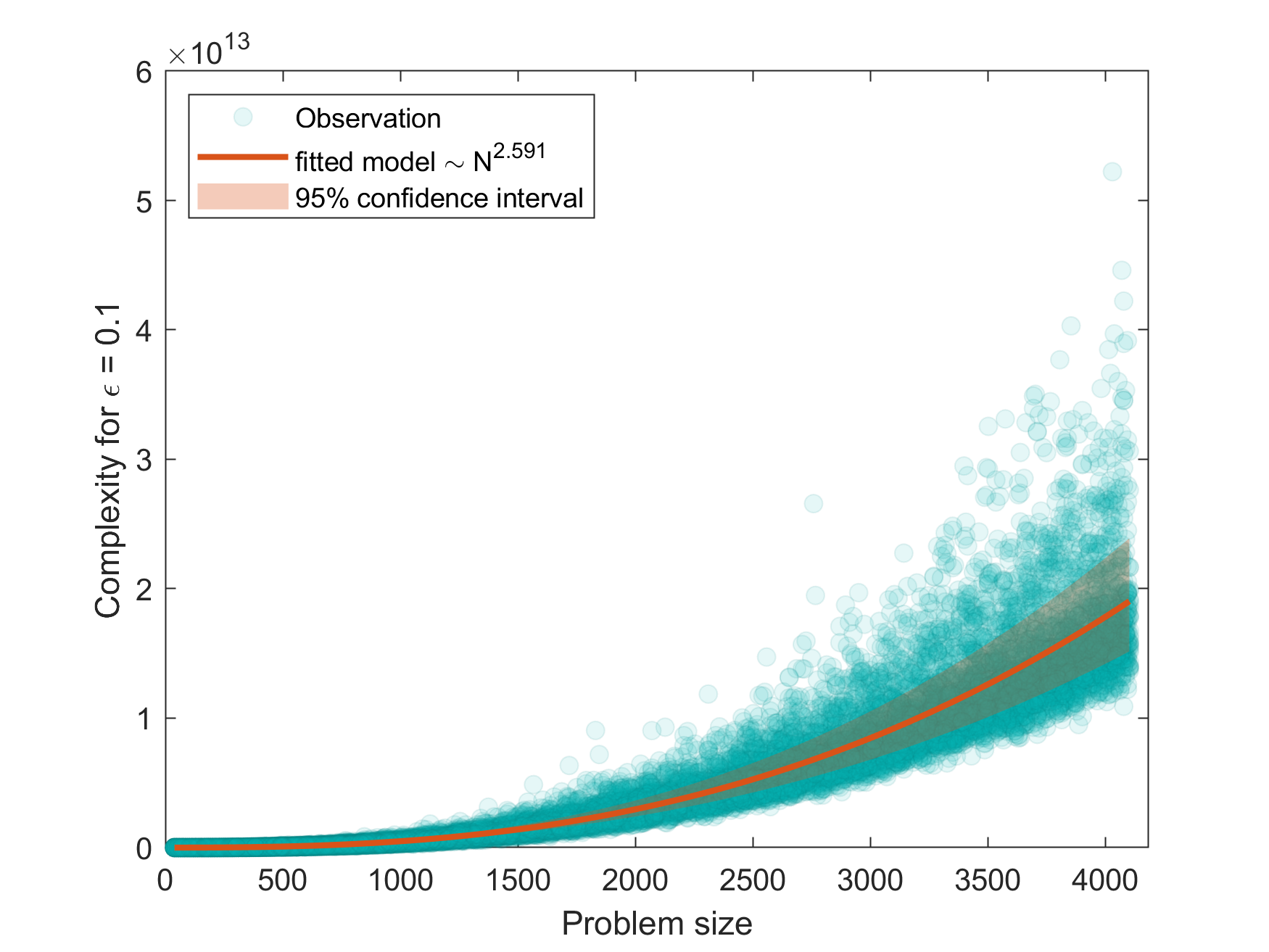}
		\captionof{figure}{Observed complexity of Algorithm~\ref{alg:qipm}, its power law fit, and 95\% confidence interval.}
		\label{fig:complexity}
	\end{minipage}~~
	\begin{minipage}{.49\linewidth}
		\centering
		\includegraphics[width=\linewidth]{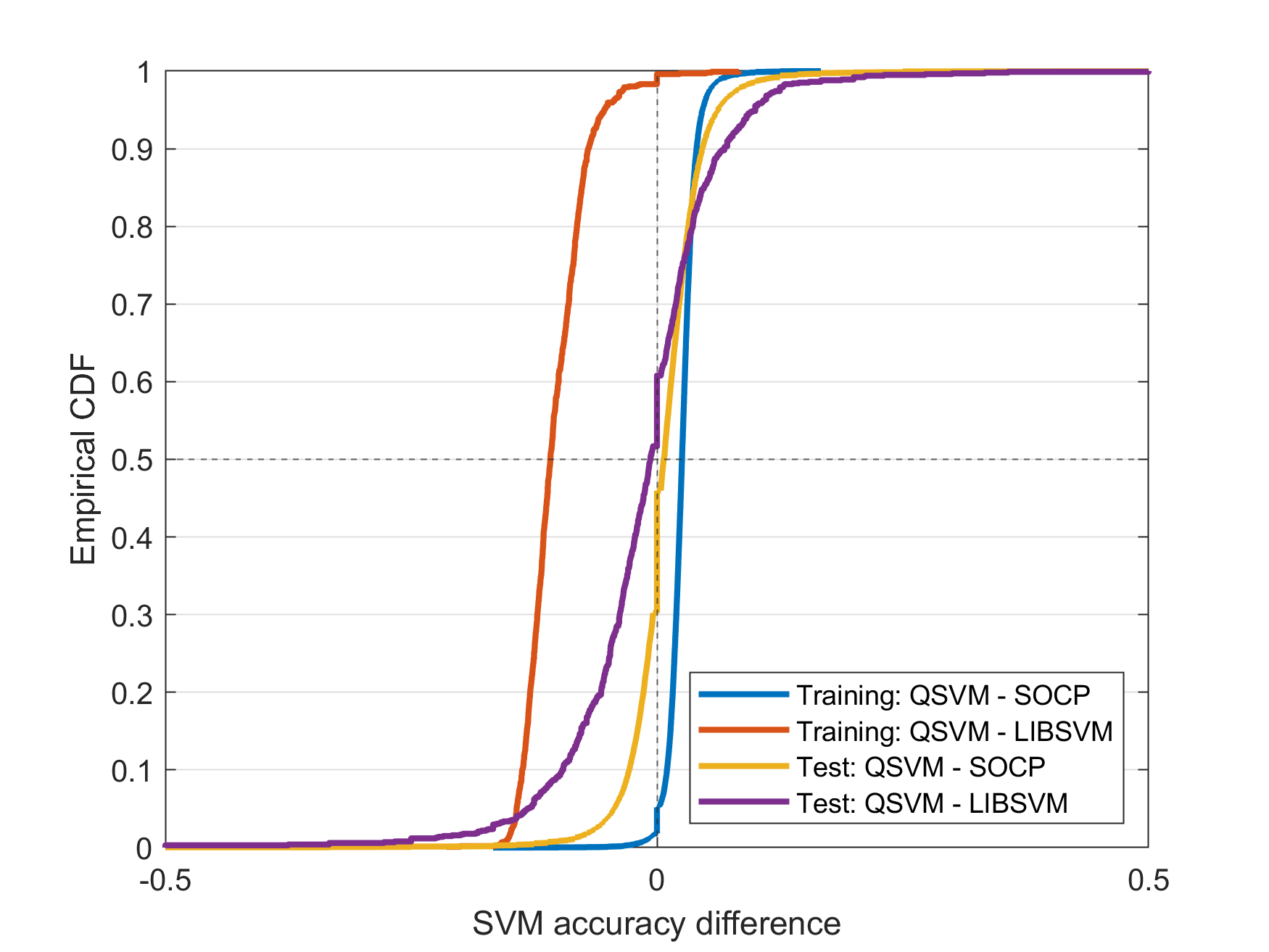}
		\captionof{figure}{Empirical CDF of the difference in accuracy between SVMs trained in different ways.}
		\label{fig:SVM accuracies}
	\end{minipage}
\end{figure}
By finding the least-squares fit of the power law $y = ax^b$ through the observed values of the quantity $\frac{n^{1.5} \kappa \zeta}{\delta^2}$, we obtain the exponent $b = \exponent$, and its 95\% confidence interval $[\exponentlb, \exponentub]$ (this interval is computed in the standard way using Student's $t$-distribution, as described in \cite{neter1996applied}). These observations, the power law, and its confidence interval are show on Figure~\ref{fig:complexity}. Thus, we can say that for random $\mathcal{SVM}(n, 2n, p)$-instances, and fixed $\epsilon=0.1$, the complexity of Algorithm \ref{alg:qipm} scales as $O(n^{\exponent})$. This represents a polynomial improvement over general dense SOCP solvers with complexity $O(n^{\omega+0.5})$. In practice, the polynomial speedup is conserved when compared to ECOS \cite{domahidi2013ecos}, that has a measured running time scaling of $O(n^{3.314})$, with a 95\% confidence interval for the exponent of [3.297, 3.330] (this is consistent with the internal usage of a \cite{strassen1969gaussian}-like matrix multiplication algorithm, with complexity $O(n^{2.807})$). Neglecting constant factors, this gives us a speedup of $10^4$ for $n=10^6$. The results from the LIBSVM solver indicate that the training time with a linear kernel has a complexity of $O(n^{3.112})$, with a 95\% confidence interval for the exponent of [2.799, 3.425]. These results suggest Algorithm~\ref{alg:qipm} retains its advantage even when compared to state-of-the-art specialized classical algorithms.

Additionally, we use the gathered data to verify that the accuracy of our quantum (or approximate) SVM is close to the optimum: Figure~\ref{fig:SVM accuracies} shows that both at train- and at test-time the accuracies of all three classifiers are most of the time within a few percent of each other, with Algorithm~\ref{alg:qipm} often outperforming the exact SOCP SVM classifier.

In conclusion, the performed numerical experiments indicate that Algorithm~\ref{alg:qipm} provides a polynomial speedup for solving SOCPs with low- and medium precision requirements. In particular, for SVM, we achieve a polynomial speedup with no detriment to the quality of the trained classifier.
	\section*{Acknowledgmenets}
This work was partly supported by IdEx Université de Paris ANR-18-IDEX-0001, as well the French National Research Agency (ANR) projects QuBIC and QuDATA.
	
	\bibliographystyle{plainnat}
	\bibliography{bibliography}
	
\end{document}